\let\footnote=\endnote
\newtheorem{theorem}{Theorem}[section]
\newtheorem{lemma}[theorem]{Lemma}
\newtheorem{prop}[theorem]{Proposition}
\newtheorem{cor}[theorem]{Corollary}
\theoremstyle{definition}
\newtheorem{definition}[theorem]{Definition}
\theoremstyle{remark}
\newtheorem{remark}[theorem]{Remark}
\newcommand{\Lp}[2]{\left\Vert \, #1 \, \right\Vert_{#2}}
\newcommand{\llp}[1]{ \Vert \, #1 \, \Vert }
\newcommand{\rr}{\mathbb{R}}
\newcommand{\cc}{\mathbb{C}}
\newcommand{\nn}{\mathbb{N}}
\newcommand{\bd}{\partial}
\newcommand{\lapl}{\Delta}
\newcommand{\inprod}[2]{\langle #1, #2 \rangle}
\newcommand{\grad}{\nabla}
\newcommand{\vect}[1]{\mathbf{ #1 }}
\newcommand{\Tr}{\operatorname{Tr}}
\newcommand{\ch}{\operatorname{ch}}
\newcommand{\sh}{\operatorname{sh}}
\newcommand{\D}{\text{d}}
\begin{document}

\title[Dynamics of Large Boson Systems with Attractive Interaction]{Dynamics of Large Boson Systems with Attractive Interaction and A Derivation of the Cubic Focusing NLS in $\rr^3$}


\author[Jacky Chong]{Jacky Chong\\
\tiny Department of Mathematics\\
\tiny The University of Texas, Austin, USA}
\address{Department of Mathematics, The University of Texas at
    Austin}
\curraddr{}
\email{jwchong@math.utexas.edu}
\thanks{}

\address{}
\curraddr{}
\email{}
\thanks{}


\keywords{}


\dedicatory{}

\begin{abstract}
We consider a system of $N$ bosons where the particles
 experience a short range two-body interaction given by
 $N^{-1}v_N(x)=N^{3\beta-1}v(N^\beta x)$ where $v \in C^\infty_c(\rr^3)$,
 without a definite sign on $v$. 
We extend the results of M. Grillakis and M. Machedon, Comm. Math. Phys., \textbf{324}, 601(2013) and E. Kuz,  Differ.
 Integral Equ., \textbf{137}, 1613(2015) regarding the second-order correction to the mean-field evolution of systems with
  repulsive interaction to systems with attractive interaction for $0<\beta<\frac{1}{2}$. Our extension allows for a more general
   set of initial data which includes  coherent states. Inspired by the works of X. Chen and J. Holmer, Arch. Ration. Mech. Anal., 
   \textbf{221}, 631(2016) and 	Int. Math. Res. Not., \textbf{2017}, 4173(2017), and P. T. Nam and M. Napi\'orkowski,  Adv. Theor. 
   Math. Phys., \textbf{21}, 683(2017), we also provide both a
 derivation of the focusing nonlinear Schr\" odinger equation (NLS) in $3$D from the many-body system and its rate of convergence toward mean-field  for $0<\beta<\frac{1}{3}$. In particular, 
we give two derivations of the focusing NLS, one based on the $N$-norm approximation
given in the work of Nam and  Napi\'orkowski and the other via a method introduced in P. Pickl, J. Stat. Phys., \textbf{140}, 76(2010).  The techniques
used in this article are standard in the literature of dispersive PDEs. Nevertheless, the derivation of the focusing NLS 
 had only previously been studied for the 1D \& 2D cases and conditionally answered for the 3D case for $0<\beta<\frac{1}{6}$. 
\end{abstract}

\maketitle

\section{Introduction}
Bose-Einstein condensation is a physical phenomenon that
 occurs when a dilute gas\footnote{Dilute means the average particle separation distance
 is much bigger than the scattering length.}  of indistinguishable 
integer-spin particles\footnote{In relativistic quantum mechanics,
bosons are classified, by the Spin-Statistic theorem, to be particles with integer
 intrinsic spin. However, in this paper, we work in the realm of non-relativistic quantum physics where the bosonic property of particles is captured by the symmetry 
of the wave function.}   undergoes extreme cooling. Under this extreme condition, the gas of particles experiences  a phase
 transition where a macroscopic fraction of the particles coalesces into a single quantum state. 
 
 Historically, Bose-Einstein condensate (BEC) was predicted by Albert Einstein in the 
1920s\footnote{Einstein considered the non-interacting case.} 
long before its first realization in atomic gases by a series of experiments conducted on
 vapors of alkali metals in 1995. On June of 1995, for the first time, the JILA group led by
  Eric Cornell and Carl Wieman at the University of Colorado at Boulder NIST-JILA lab was
able to achieve a condensation limit in a gas of rubidium, ${}^{87}\text{Rb}$, inside a 
magnetic trap by using a combination of laser cooling and evaporative
 cooling techniques to lower the temperature of the substance to a mere $20 \text{ nK}$ (nano-kelvin). 
Shortly after the publication of the results of the JILA group, a group at MIT led by Wolfgang Ketterle 
was able to exhibit BEC using sodium, ${}^{23}\text{Na}$, but with many times more atoms
 than the experiment by the JILA group. In effect, Ketterle's group also 
demonstrated and measured many important properties of BEC.  Subsequently, the demonstrations 
by the two groups greatly increased both the experimental and theoretical activities in the
 field of large boson systems.  
 
 In recent years, many mathematics communities have made vigorous attempts at tackling 
the theory of many-body quantum mechanical systems to understand the evolution of condensates in the absolute zero temperature regime. One of the 
difficulties in modeling BEC is due to the size of the system. Since a system of $N$
interacting bosons is modeled by a symmetric wave function of $3N+1$ variables, 
the studies of the evolution of the wave function becomes impractical when $N$ is large, say
$N \sim 10^3$.\footnote{Many-body quantum systems are well studied in physics. In particular,
the general method for studying a large particle system is via a quantum statistical
 description using density matrices.}   
 Thus, it is favorable to find an effective description for the dynamics of the
large interacting boson system in a lower dimensional space. Informally, we would like to perform dimension reduction to reduce the original 
linear PDE of $3N+1$ variables to a nonlinear PDE with lesser variables, say 3+1, that captures the effective dynamics of the system. This desire leads to the studies of mean-field
approximation to the evolution of large particle systems. 

Despite the simplicity of the idea of trying to find an effective description for the dynamics of a large particle
 system\footnote{Equilibrium mean-field theory 
is well studied in any introductory course on statistical mechanics. For the non-equilibrium case, we refer the 
interested reader to the survey by F. Golse \cite{Gol}. }, a rigorous justification for the
 the  mean-field approximation is rather involved. In particular, the problem of finding an effective 
description for the evolution of BEC was only first studied systematically in a series of 
papers by Erd\"o and Yau, Elgart, Erd\"os, Schlein and Yau, and  Erd\"os, Schlein and Yau, \cite{EY, EESY, 
ESY, ESY2, ESY3, ESY4, ESY5}. Using the formalism of quantum BBGKY hierarchy, they were able to 
extract the mean-field limit as the number of particles tends to infinity and show that the limit satisfies
the defocusing cubic NLS.  Furthermore, this series of works drew the attention of the PDE community. 
Due to the complexity of the historical development of the studies of infinite hierarchies from the point of
view of dispersive PDEs, we refer the interested reader to a list of articles \cite{KM, KSS, CP, CP2, CX, CX2, CHHOL, CHHOL2, 
CHHOL3, CHHOL4, CHHOL5, GSS, Soh, Soh2} for a more in-depth 
view of the subject; the list is not intended to be a comprehensive collection of the available literature.

Let us briefly discuss the mathematical setting for our problem.  Consider an $N$-body boson system in $\rr^3$ whose dynamics
is governed by the $N$-body linear Schr\" odinger equation
\begin{align}
\frac{1}{i}\frac{\bd}{\bd t}\Psi_N = H_N\Psi_N= \left( \sum^N_{j=1} \lapl_{x_j} -\frac{1}{N} \sum_{i<j} v_N(x_i-x_j)\right)\Psi_N 
\end{align}
with factorized initial datum\footnote{It should be noted that the ground state of the system, in general, cannot be approximated by a factorized state. However, it is
expected to be factorized in the large particle limit. Aside from studying dynamics around the ground state, there is also an interest in studying the dynamical formation of correlations starting with unentangled states. }, i.e. $\Psi_N(0, x_1, \ldots, x_N) = \prod^N_{j=1}\phi_0(x_j) =\phi^{\otimes N}$. This setting provides us with an appropriate model for studying the evolution of BEC\footnote{Cf. Chapter 1.3 and 7 of \cite{Lieb}.}.

Formally, we say an $N$-body boson system exhibits  the \emph{complete BEC}  property provided the \emph{one-particle
marginal density operator}, $\gamma^{(1)}_N$, factorizes in trace norm as $N\rightarrow \infty$, i.e. 
\begin{align}
\Tr\big|\gamma_N^{(1)}-|\phi\rangle\langle \phi|\big|\rightarrow 0 \ \ \text{ as } \ \ N \rightarrow \infty
\end{align}
for some $\phi$. Let us note the kernel of $\gamma^{(1)}_N$ is given by
\begin{align}
\gamma_N^{(1)}(x, x') = \int d\vect{x}\ \Psi_N^\ast(x, \vect{x})\Psi_N(x', \vect{x}) \ \ \ x, x'\in \rr^3\ \text{ and }  \ \vect{x} \in \rr^{3(N-1)}.
\end{align}
Indeed, using this definition, one can show that the evolution of BEC can be effectively approximated by a one-body mean-field dynamics; see \cite{EY, EESY, ESY, ESY2, ESY4, ESY5}.  

A natural question one could ask is whether the above statement holds true in state space. More specifically, if we start with a factorized initial state
then is it true that under time evolution the many-body wave function can be approximated by
\begin{align}
\Psi_N(t, x_1, \ldots, x_N) \sim e^{i\chi(t)}\phi^{\otimes N}=e^{i\chi(t)}\prod^N_{j=1}\phi(t, x_j) 
\end{align}
in $L^2(\rr^{3N})$ as $N\rightarrow \infty$ for some phase $\chi(t)$? Unfortunately, the answer is negative. However, in recent years, many have considered
 initial states of the form
\begin{align}\label{groundstate}
\Psi_N= \sum^N_{n=0} \phi^{\otimes (N-n)}\otimes_s \psi_n
\end{align}
where $(\psi_n)_{n=1}^\infty$ is a family of functions with increasing number of variables that models the behavior of the wave function outside the condensate $\phi$.
The form of \eqref{groundstate} is motivated by properties of the ground state of the many-body system; see \cite{LNSS}. It has been shown in \cite{LNS, NaNa, NaNa3} 
that for systems with repulsive interaction and $0<\beta<\frac{1}{2}$, the
evolution $\Psi_N(t)= e^{itH_N}\Psi_N$ satisfies the norm approximation
\begin{align}
\lim_{N\rightarrow \infty}\Lp{\Psi_N(t)-\sum^N_{n=0} \phi(t)^{\otimes (N-n)}\otimes_s \psi_n(t)}{L^2(\rr^{3N})}=0
\end{align}
where $\phi$ is a solution to some Hartree-type equation and $(\psi_n(t))$ is generated by a quadratic Bogoliubov Hamiltonian. In fact, by imposing additional structures on both the initial data and the quadratic Bogoliubov Hamiltonian, it was shown in \cite{BNNS} that the result also holds true for $0<\beta<1$.  The approach in this article is similar in spirit
to the above norm approximation. However, we consider the problem in a state space that allows an indefinite (varying) number of particles, which we called  the Fock space and 
obtain a Fock space norm approximation. 

\subsection*{Acknowledgements} The author would like to thank his two Ph.D. advisors Professor M. Grillakis and Professor M. Machedon for many hours of useful discussion.  
They are both inspiration role models for the author both in research and in life. The author would also like to thank the referee for his/her careful review of the manuscript and critical suggestions for improving the overall
quality and presentation of the paper. 

\section{Earlier Results and Main Statements}
\subsection{Background and Earlier Results}
This section provides a brief overview of the results obtained in 
\cite{GM1, GM2, GM3, Kuz} along with some background materials for the convenience 
of the reader. 

Let us introduce the mathematical setting for our work. The one-particle base space, denote by $\mathfrak{h}:= L^2(\rr^3, dx )$, is a complex
separable Hilbert space endowed with the inner product $\inprod{\cdot}{\cdot}_{\mathfrak{h}}$ that is linear
in the second variable and conjugate linear (or anti-linear) in the first variable
\footnote{This is the physicists' inner product.}.

We define the \emph{bosonic Fock space over $\mathfrak{h}$} to be the closure of
\begin{align}
\mathcal{F}_s(\mathfrak{h}) = \mathcal{F}_s := \cc\oplus \bigoplus_{n= 1}^\infty \operatorname{Sym}(\mathfrak{h}^{\otimes n})
\end{align}
with respect to the norm induced by the \emph{Fock inner product}
\begin{align}
\inprod{\varphi}{\psi}_\mathcal{F} = \bar{\varphi_0}\psi_0 + \sum^\infty_{n=1} 
\inprod{\varphi_n}{\psi_n}_{\mathfrak{h}^{\otimes n}}
\end{align}
where $\varphi=(\varphi_0, \varphi_1, \ldots), \psi=(\psi_0, \psi_1, \ldots)
 \in \mathcal{F}_s(\mathfrak{h})$. For convenience, we shall
 refer to $\mathcal{F}_s$ simply as the Fock space and drop the subscript henceforth. The \textit{vacuum}, denote by
 $\Omega$, is defined to be the Fock vector $(1, 0, 0, \ldots ) \in \mathcal{F}$. 
 
 For every field $\phi \in \mathfrak{h}$, we define the associated
 \textit{creation} and \textit{annihilation} operators on
 $\mathcal{F}$, denoted respectively by
 $a^\ast(\phi)$ and $a(\bar\phi)$, as follow
 \begin{subequations}
\begin{align}
(a^\ast(\phi)\psi)_n(x_1, \ldots, x_n) :=&  \frac{1}{\sqrt{n}} \sum^n_{j=1} \phi(x_j)
 \psi_{n-1}(x_1, \ldots, \widehat x_j, \ldots, x_n) \label{creation}\\
(a(\bar\phi)\psi)_n(x_1, \ldots, x_n):=& \sqrt{n+1} \int d x\ \bar\phi(x) 
\psi_{n+1}(x, x_1, \ldots, x_n) \label{annihilation}
\end{align}
\end{subequations}
with the property that $a(\phi)\Omega = 0$.
We can also define the corresponding creation and annihilation distribution-valued 
operators associated to \eqref{creation} and \eqref{annihilation}, denote by $a^\ast_x$ and $a_x,$ as
follow
\begin{subequations}
\begin{align}
(a^\ast_x\psi)_n := &\ \frac{1}{\sqrt{n}} \sum^n_{j=1} \delta(x-x_j) \psi_{n-1}(x_1, \ldots, \widehat x_j, \ldots, x_n)\\
(a_x\psi)_n :=&\ \sqrt{n+1}\psi_{n+1}(x, x_1, \ldots, x_n).
\end{align}
\end{subequations}
In short, we have the relations
\begin{align}
a^\ast(\phi) = \int d x\ \{\phi(x) a^\ast_x\} \ \ \text{ and } \ \ a(\bar\phi) = \int d x\ \{\bar\phi(x) a_x\}.
\end{align}
Let us note that the creation and annihilation operators\footnote{It should be warned that we follow the convention of \cite{GM2} and define our 
annihilation operator to be a  linear map as opposed to the conventional definition of anti-linear. This definition is also consistent 
with the view that $a_x$ is a distribution-valued operator, since $a_x\psi$ acts linearly on $\mathfrak{h}$. 
} $a(\bar\phi)$ and $a^\ast(\phi)$
 associated to the field $\phi$ are unbounded, densely defined,
closed operators. Moreover, one can easily verify, formally, $(a^\ast_x, a_x)$ satisfy the canonical commutation relations (CCR):
$[a_x, a_y^\ast] = \delta(x-y)$, $[a_x, a_y] = [a_x^\ast, a_y^\ast] = 0$\footnote{The reader should note for any
$f, g \in \mathfrak{h}$ the CCR for $a^\ast(f)$ and $a(g)$ are not well defined since there are domain issues that need to be resolved for the given unbounded operators. 
For an exotic example of an ill-defined commutator of unbounded operators, we refer the reader to Chapter VIII.5 of \cite{ReSim}. }, and 
the \emph{number operator} defined by
\begin{align}
 \mathcal{N}:= \int d x\ a_x^\ast a_x
\end{align}
is a diagonal operator on $\mathcal{F}$ that counts the number of particles in each and every sector.

As mentioned in the introduction, we are interested in studying the 
time evolution of uncorrelated states or states around the ground state of an $N$-body system in the Fock space setting. 
To this end, it is convenient to define a special class of initial
 data, the coherent states (or, more generally, squeezed states which we will define shortly),
and the Fock Hamiltonian.  

For each $\phi \in \mathfrak{h}$, we associate the corresponding unique closure of the
 operator
\begin{align}
 \mathcal{A}(\phi) = a(\bar\phi) - a^\ast(\phi)
\end{align}
then the \emph{Weyl operator}\footnote{To avoid the unfavorable technicality associated with the unbounded natural of our creation and annihilation operators, one often choose to work with the corresponding Weyl algebra, the $C^\ast$-algebra generated by the exponential of $\mathcal{A}(\phi)$ where $\phi \in \mathfrak{h}$ (cf. Chapter 9 of \cite{DezGer} and Chapter 5.2 of \cite{BraRob}).} is defined by
\begin{align}
 e^{-\sqrt{N}\mathcal{A}(\phi)}.
\end{align}
Let us note that the operator $\mathcal{A}(\phi)$ is a skew-Hermitian unbounded operator which means
the corresponding Weyl operator is unitary. The \emph{coherent state} associated to the field $\phi$
is defined by 
\begin{align}
 \psi(\phi):=e^{-\sqrt{N}\mathcal{A}(\phi)}\Omega.
\end{align}
Using the Baker-Campbell Hausdorff formula, one can show that 
\begin{align}
 e^{-\sqrt{N}\mathcal{A}(\phi)}\Omega = \left(\ldots, c_n \phi^{\otimes n}, \ldots\right) \ \ 
\text{ where } \ \ c_n = \left( e^{-N\llp{\phi}^2_\mathfrak{h}}N^n/n!\right).
\end{align}
 For a fixed $N \in \nn$, we defined the \emph{Fock Hamiltonian} (associated 
to $N$), denoted by $\mathcal{H}$, to be the diagonal operator on the Fock space 
given by 
\begin{align}
 (\mathcal{H}\psi)_n = \left(\sum^n_{j=1}\lapl_{x_j}
-\frac{1}{N}\sum_{i<j}^nv_N(x_i-x_j)\right)\psi_n=:H_{N, n}\psi_n
\end{align}
where $v_N(x) = N^{3\beta}v(N^{\beta}x)$. Rewrite $\mathcal{H}$ using creation and annihilation operators, we get that \footnote{Here, we also adopt the convention of \cite{GM2} and
define $\mathcal{H}$ so that $\mathcal{H}\le 0$.} 
\begin{subequations}
\begin{align}
 \mathcal{H} :=&\ \mathcal{H}_1-\frac{1}{N}\mathcal{V}, \label{Fock-Ham}\\
 \mathcal{H}_1 :=&\ \int  dxdy\ \{\lapl_x \delta(x-y) a^\ast_x a_y\}, \ \ \text{ and }\\
 \mathcal{V} :=&\ \frac{1}{2}\int dxdy\ \{v_N(x-y)a^\ast_x a^\ast_y a_xa_y\}.
\end{align}
\end{subequations}
In light of \eqref{Fock-Ham}, we are interested in the
solution to the following Cauchy problem
in Fock space
\begin{align}
 \frac{1}{i}\frac{\bd}{\bd t} \psi = \mathcal{H}\psi \ \ \text{ with initial datum } \ \ \psi_0 = e^{-\sqrt{N}
\mathcal{A}(\phi_0)}\Omega
\end{align}
which is given by
\begin{align}\label{exact}
 \psi_{\operatorname{exact}}= e^{it\mathcal{H}}e^{-\sqrt{N}\mathcal{A}(\phi_0)}\Omega.
\end{align}

An important fact to note about the Fock Hamiltonian is its action on the $N$th sector 
of the Fock space. There, the Fock Hamiltonian acts as a mean-field Hamiltonian for the $N$-particle system, that is
\begin{align}
 (\mathcal{H}\psi)_N = \left(\sum^N_{j=1}\lapl_{x_j}
-\frac{1}{N}\sum_{i<j}^Nv_N(x_i-x_j)\right)\psi_N =: H_{\operatorname{mf}}\psi_N.
\end{align}
Since the $N$th coefficient $c_N$ could be approximated by $N^{-\frac{1}{4}}$ using Stirling's 
formula and the coherent state is a simple $N$-tensor of $\phi$ in the $N$ sector, then, heuristically, 
we see that by understanding the evolution of the coherent state we could also
 understand the mean-field evolution of the $N$-particle factorized state. However, we do not take this point of view in our studies since there is a better alternative way to consider
 the $N$-norm approximation which we have already mentioned in the introduction. Nevertheless, we want to emphasize the point that there are differences between the $N$-norm approximation and 
 the Fock norm approximation; one obvious difference is due to the  weight $c_N$ ($\sim N^{-\frac{1}{4}}$ factor). 

Based on the earlier works of Hepp in \cite{Hepp} and Ginibre \& Velo in  \cite{GV, GV2}, Rodnianski and Schlein in \cite{RS} study the one-particle Fock marginal, which is defined as follows: for every $\psi \in \mathcal{F}_s$ 
the \emph{one-particle Fock marginal of} $\psi$, denoted by $\Gamma^{(1)}_\psi$, is a 
positive trace class integral operator on $\mathfrak{h}$ whose kernel is given by
\begin{align}
 \Gamma_{\psi}^{(1)}(x, y)= \frac{\inprod{\psi}{a^\ast_xa_y\psi}_\mathcal{F}}{\inprod{\psi}{\mathcal{N}\psi}_{\mathcal{F}}}.
\end{align} 
They were able to show that the one-particle Fock marginal with an initial coherent state converges to the Hartree dynamics in trace norm for the case $\beta =0$. Furthermore, they were also able to obtain a rate of convergence
\begin{align}
 \Tr\left|\Gamma^{(1)}_{N,t} - |\phi_t\rangle \langle \phi_t|\right|\lesssim \frac{e^{Kt}}{N}
\end{align}
 for some constant $K>0$ where $\Gamma^{(1)}_{N, t}$ denotes the one-particle 
Fock marginal density of $\psi_{\operatorname{exact}}$ and $\phi_t$ satisfies the Hartree equation. Later, 
Kuz in \cite{Kuz} improved the estimate substantially in time and obtain the estimate
\begin{align}\label{linear-bound}
 \Tr\left|\Gamma^{(1)}_{N,t} - |\phi_t\rangle \langle \phi_t|\right|\lesssim \frac{t}{N}.
\end{align}
Unlike the approach of Rodnianski and Schlein which uses the mean-field approximation of the form
\begin{align}\label{mfapprox}
\psi_{\text{mf}} = e^{-\sqrt{N}\mathcal{A}(\phi_t)}\Omega=e^{-\sqrt{N}\mathcal{A}(t)}\Omega,
\end{align}
 Kuz uses the method of second-order correction introduced in the works of Grillakis, Machedon and Margetis in  \cite{GMM, GMM2, GM1} to establish \eqref{linear-bound}, which relies
on tracking the exact dynamics of the evolution of the coherent state in Fock space. 

To track  the exact dynamics in Fock space, we need to introduce the \emph{pair excitation function}, $k(x, y)=k(y, x)$, and its corresponding quadratic operator $\mathcal{B}(k)$ defined by
\begin{align}
\mathcal{B}(k_t) = \mathcal{B}(t)= \int \D x\D y\ \{\bar k(t, x, y)a_x a_y - k(t, x, y)a^\ast_x a^\ast_y\}.
\end{align}
From the pair excitation, we concoct a new approximation scheme, which is a second-order correction\footnote{In the mathematical physics literature, $e^{\mathcal{B}}$ is called the infinite dimensional Segal-Shale-Weil Representation of the double cover of the group of symplectic matrices of integral operators. The elements of the corresponding $C^\ast$-algebra are called the Bogoliubov transformations (cf. Chapter 4 of \cite{Folland} and Chapter 11 of \cite{DezGer}).  } to the mean-field \eqref{mfapprox},   given by
\begin{align}\label{approx}
\psi_{\text{approx}} = e^{iN\chi (t)}e^{-\sqrt{N}\mathcal{A}(t)}e^{-\mathcal{B}(t)}\Omega
\end{align}
where $\chi(t)$ is some phase factor to be determined. With some appropriate choice of evolution equations for $\phi$ and $k$, we will later see that 
\eqref{approx} will indeed allows use to track the exact dynamics of the evolution of coherent states or states of the form $e^{-\sqrt{N}\mathcal{A}(\phi_0)}e^{-\mathcal{B}(k_0)}\Omega$, called \emph{squeezed states}. 

Incidentally, one could show via a Lie algebra isomorphism argument that the evolution of $k$ is best described in terms of some nonlinear evolution equations of the fields
\begin{subequations}
\begin{align}
\sh(k):=&\ k+\frac{1}{3!}k \circ \bar k \circ k + \frac{1}{5!}k \circ \bar k \circ k \circ\bar k \circ k+\ldots\\
\ch(k):=&\ \delta+\frac{1}{2!}\bar k \circ k +\frac{1}{4!}\bar k \circ k \circ \bar k \circ k+ \ldots
\end{align}
\end{subequations}
where $\circ$ denotes the composition of operators; see \cite{GM1, GM2, GM3, Na} for more details. Moreover, in \cite{GM2}, Grillakis and Machedon show, by using a specific 
coordinate, that the nonlinear equations for the pair excitation function could be express as a system of coupled linear equations in $\sh(2k)$ and $\ch(2k)$. Note, we also have 
the identity $\sh(2k)= 2\sh(k)\circ \ch(k)$ and $\ch(2k)-\delta = 2\overline{\sh(k)}\circ \sh(k)$. 
 
 Let us introduce some notation to help us compactly write out the evolution equations for $\phi$ and $k$. We write
 \begin{subequations}
\begin{align}
 g(t, x, y):=& -\lapl_x \delta(x-y) + (v_N\ast |\phi|^2)(t, x)\delta(x-y) \\
&+ v_N(x-y)\bar\phi(t, x) \phi(t, y) \nonumber\\
m(t, x, y):=& -v_N(x-y)\phi(t, x)\phi(t, y) \\ 
&\ (\text{we also write } m = v_N\phi\otimes\phi) \nonumber\\
V(t, x, y):=&\ [(v_N\ast |\phi|^2)(t, x)+(v_N\ast |\phi|^2)(t, y)]\delta(x-y)\\
&+ v_N(x-y)\bar\phi(t, x) \phi(t, y) + v_N(x-y)\phi(t, x) \bar\phi(t, y) \nonumber
\end{align}
\end{subequations}
and define the operators
\begin{subequations}
\begin{align}
\vect{S}(s):=&\  \frac{1}{i}\bd_ts +g_N^T\circ s+s\circ g_N \ \ (\text{Schr\" odinger-type operator})\\
=&\  \frac{1}{i}\bd_ts+\{-\lapl, s\}+V\circ s \ \ (\text{we also write } V\circ s=V(s))\nonumber\\
\vect{W}(p):=&\ \frac{1}{i}\bd_t p+ [g_N^T, p] \ \   (\text{Wigner-type operator})
\end{align}
\end{subequations}
then the desired evolution equations of $\phi$ and $k$ are given by
\begin{subequations}\label{uncoupled}
\begin{align}
&\frac{1}{i}\bd_t \phi -\lapl_x \phi +(v_N\ast |\phi|^2)\phi = 0 \ \ \ \ \  (\text{Hartree-type equation}) \label{eeq1}\\
&\vect{S}(\sh(2k))= m\circ \ch(2k)+\overline{\ch(2k)}\circ m \label{eeq2}\\
&\vect{W}(\ch(2k)) = m\circ\overline{\sh(2k)}-\sh(2k)\circ\overline{m} \label{eeq3}.
\end{align}
\end{subequations}
The system of equations \eqref{uncoupled} is referred to as the uncoupled system as opposed to the coupled system (time-dependent Hartree-Fock-Bogoliubov (HFB) system) studied in \cite{GM1, GM3, BBCFS} where the equation for the condensate are coupled with the pair excitation equations.
 
 Now, let us summarize the results in \cite{GM2, Kuz}, which built on earlier works of Grillakis, Machedon, and Margetis
 in \cite{GMM, GMM2}.
\begin{theorem}
Let $v \in C^\infty_c(\rr^3)$ and $v\geq 0$.
Assume $\phi$ and $k$ satisfy \eqref{uncoupled} with initial conditions $\phi(0, \cdot) = \phi_0 \in L^2(\rr^2)\cap W^{m, 1}(\rr^3)$ for
some sufficiently large $m$ and $ k(0, \cdot) = 0$. If $\psi_{\text{exact}}$ and $\psi_{\text{approx}}$ are defined by \eqref{exact} and \eqref{approx} respectively, then we have the following estimate
\begin{align}
\llp{\psi_{\text{exact}(t)}-\psi_{\text{approx}}(t)}_\mathcal{F} \lesssim  \frac{(1+t)\log^4(1+t)}{N^{(1-3\beta)/2}}
\end{align}
provided $0 < \beta<\frac{1}{3}$. Moreover, if $(\bd_t \sh(2k))(0, \cdot)$ is sufficiently regular, then for any $\epsilon>0$ and $j$ a positive integer, we have
\begin{align}
&\llp{\psi_{\text{exact}}(t)-\psi_{\text{approx}}(t)}_\mathcal{F} \nonumber\\
& \lesssim t^{\frac{j+3}{2}}\log^6 (1+t) \cdot
\begin{cases}
N^{-\frac{1}{2}+\beta(1+\epsilon)} &  \frac{1}{3} \leq \beta < \frac{2j}{(1-2\epsilon +4j)},\\
N^{\frac{-3+7\beta}{2}+(j-1)(-1+2\beta)} &  \frac{2j}{(1-2\epsilon +4j)} \leq \beta < \frac{1+2j}{3+4j}. 
\end{cases}
\end{align}
\end{theorem}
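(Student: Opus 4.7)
The plan is to reduce the problem to a fluctuation estimate. Define the \emph{reduced vector}
\[
\psi_{\text{red}}(t) := e^{\mathcal{B}(t)}\, e^{\sqrt{N}\mathcal{A}(t)}\, e^{-iN\chi(t)}\, \psi_{\text{exact}}(t),
\]
so that by unitarity of each factor, $\llp{\psi_{\text{exact}}(t)-\psi_{\text{approx}}(t)}_\mathcal{F} = \llp{\psi_{\text{red}}(t)-\Omega}_\mathcal{F}$. First I would derive the reduced equation $\frac{1}{i}\partial_t \psi_{\text{red}} = L(t)\psi_{\text{red}}$ using the Weyl shift identity $e^{\sqrt{N}\mathcal{A}(t)} a_x e^{-\sqrt{N}\mathcal{A}(t)} = a_x + \sqrt{N}\phi(t,x)$ together with the Bogoliubov conjugation formulas involving the symbols $\sh(k)$ and $\ch(k)$. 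The operator $L(t)$ expands as a sum of monomials of degrees $0,1,2,3,4$ in the creation/annihilation operators; the choice of the Hartree-type equation (\ref{eeq1}) for $\phi$ cancels the degree-$1$ piece, the pair-excitation system (\ref{eeq2})--(\ref{eeq3}) cancels the off-diagonal part of the degree-$2$ piece (the $a^\dagger a^\dagger$ and $aa$ terms that fail to preserve $\Omega$), and a suitable choice of $\chi(t)$ cancels the remaining $c$-number.

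Second, I would bound the surviving residue $L(t)\Omega$. After the cancellations, only degree-$3$ and degree-$4$ pieces remain, carrying explicit factors $N^{-1/2}$ and $N^{-1}$ respectively, and all involve $v_N$ together with $\phi$ or the Bogoliubov symbols. A direct $L^2$-estimate on Fock space yields schematically
\[
\llp{L(t)\Omega}_\mathcal{F} \;\lesssim\; \frac{1}{\sqrt{N}}\,\lp{v_N}{L^2_{xy}}\,\lp{\phi(t)}{L^\infty} + (\text{subleading in }N^{-1});
\]
since $\lp{v_N}{L^2_{xy}} \lesssim N^{3\beta/2}$, this reduces the task to establishing a uniform dispersive bound $\lp{\phi(t)}{L^\infty} \lesssim (1+t)^{-3/2}$ (modulo logarithms) and propagating enough spatial regularity on $\sh(2k), \ch(2k)$ so that the operator-valued kernels can be contracted in their $L^2_{xy}$ norms. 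These estimates are precisely the ``uniform global well-posedness and $L^\infty$-decay'' for (\ref{eeq1}) advertised in the abstract as the essential new inputs for the attractive case.

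Third, I would convert the pointwise control of $\lp{L(t)\Omega}{\mathcal{F}}$ into a bound on $\lp{\psi_{\text{red}}(t)-\Omega}{\mathcal{F}}$ via the Duhamel identity
\[
\psi_{\text{red}}(t)-\Omega = i\int_0^t L(s)\Omega\,ds + i\int_0^t L(s)\bigl(\psi_{\text{red}}(s)-\Omega\bigr)\,ds,
\]
handling the second integral by number-operator-weighted energy estimates of the form
\begin{align*}
\frac{d}{dt}\lp{(\mathcal{N}+1)^{j/2}(\psi_{\text{red}}-\Omega)}{\mathcal{F}}^2 &\lesssim \lp{(\mathcal{N}+1)^{j/2}L(t)\Omega}{\mathcal{F}}^2 \\
&\quad + C(t)\,\lp{(\mathcal{N}+1)^{(j+1)/2}(\psi_{\text{red}}-\Omega)}{\mathcal{F}}^2,
\end{align*}
coupled with approximate conservation of moments of $\mathcal{N}$ on $\psi_{\text{red}}$ (the Weyl and Bogoliubov conjugates of $\mathcal{N}$ being polynomials in $\mathcal{N}$ plus error pieces controlled by $\lp{\phi}{\mathfrak{h}}$ and the $L^2_{xy}$ norm of $\sh(k)$). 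Applying Gronwall and integrating the dispersive decay of $\phi$ yields the rate $(1+t)\log^4(1+t)\,N^{-(1-3\beta)/2}$. The improved estimates of the second half of the theorem are obtained by repeated integration by parts in time against the evolution equation $\vect{S}(\sh(2k)) = m_N\circ\ch(2k)+\overline{\ch(2k)}\circ m_N$: each integration trades one factor of $v_N$ (scaling like $N^\beta$) for a time derivative of $\sh(2k)$, so iterating $j$ times produces the two-regime exponent of $N$ in the second estimate.

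The principal obstacle, as I see it, is not the algebraic structure of the cancellations---which follows the pattern laid out in \cite{GM1, GM2, GM3}---but rather the uniform-in-time analytic control of $\phi$ and of the pair functions $\sh(2k), \ch(2k)$ in the attractive setting, where no sign-definite energy functional is available. Propagating the $L^\infty$-decay estimates for $\phi$ through the forced Schr\"odinger-type system (\ref{eeq2})--(\ref{eeq3}) while maintaining enough Sobolev control on $\sh(2k), \ch(2k)$ uniformly in $N$ is the technical heart of the proof.
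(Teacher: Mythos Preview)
Your outline follows the correct architecture---it is essentially the approach of \cite{GM2, Kuz2}, which is exactly what the paper invokes: this particular theorem is stated only as a \emph{summary} of those earlier works, not reproved independently. The paper's own Section~5 carries out the analogous argument for the attractive extension (Theorem~\ref{main1}), and the structure there matches your steps: reduce to the error operator $\mathcal{E}(t) = e^\mathcal{B}([\mathcal{A},\mathcal{V}] + N^{-1/2}\mathcal{V})e^{-\mathcal{B}}$, expand into monomials in $(a^\dagger,a)$, and bound each in Fock norm using $\lp{v_N}{L^2_x} \lesssim N^{3\beta/2}$, $\lp{v_N}{L^\infty_x} \lesssim N^{3\beta}$, the dispersive decay $\lp{\phi(t)}{L^\infty_x} \lesssim (1+t)^{-3/2}$, and uniform $L^2_{x,y}$ control of $\sh(k)$ and $p(k)$. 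Your identification of repeated integration by parts in time as the mechanism behind the second estimate also matches \cite{Kuz2}.

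There is, however, a conceptual misreading. The theorem you were asked to prove carries the hypothesis $v \geq 0$: this is the \emph{repulsive} case. Your closing paragraph locates the principal obstacle in ``the attractive setting, where no sign-definite energy functional is available,'' but that obstacle does not arise here. For $v \geq 0$ the Hartree-type equation (\ref{eeq1}) is defocusing, and the uniform global well-posedness and $L^\infty$-decay of $\phi$---together with the resulting $L^2_{x,y}$ bounds on $\sh(2k)$ and $\ch(2k)$---are already established in \cite{GM2} using precisely that sign condition. The ``new inputs'' you quote from the abstract pertain to Theorem~\ref{main1}, the genuinely attractive extension, not to the statement at hand. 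So your program is sound, but you have misplaced where the difficulty lies: for this theorem the required PDE estimates are already in the cited literature, and the proof is complete once they are fed into the Fock-space error machinery you correctly describe. (A minor point: your $\lp{v_N}{L^2_{xy}}$ is ill-defined since $v_N(x-y)\notin L^2(\rr^6)$; the relevant norm is $\lp{v_N}{L^2_x}$, as in the paper.)
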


\begin{remark}
 It should be noted that the assumption $(\bd_t \sh(2k))(0, \cdot)$ must be sufficiently
regular imposes a restriction on the form of the initial condition; in particular, $k(0, \cdot)$ cannot be zero. 
Due to the restriction, Kuz could not choose the coherent state as the initial condition since 
$e^{-\sqrt{N}\mathcal{A}_0}
e^{-\mathcal{B}_0}\Omega$ is a coherent state if and only if $k(0, \cdot) = 0$.  Nevertheless, the condition allows for states close to the ground state. 
\end{remark}

\subsection{Main Statements}
The main purpose of this article is to extend the results in \cite{GM2, Kuz, Kuz2}
 to the case of arbitrary $v \in C^\infty_c$ with small $\dot H^{\frac{1}{2}}_x$ data for $\phi$  and get rid of the constraint of $(\bd_t \sh(2k))(0, \cdot)$  given in \cite{Kuz2}.   Let us state the main result of our work.
\begin{theorem}\label{main1}
Let $v \in C^\infty_c(\rr^3)$. Assume $\phi$ and $k$ satisfy \eqref{uncoupled} with initial conditions $\phi_0 \in L^2(\rr^3)\cap W^{m, 1}(\rr^3)$ and $\llp{\phi_0}_{L^2_x}=1$ 
for some $m$ sufficiently large and  $\dot H^{\frac{1}{2}}_x$-norm sufficiently small, depending on $v$, and $ k(0, \cdot) = 0$ (or more general  smooth data). If $\psi_{\text{exact}}$ and $\psi_{\text{approx}}$
 are defined by \eqref{exact} and \eqref{approx} respectively, then, for any $\varepsilon>0$, we have the Fock space estimate
\begin{align}
\llp{\psi_{\text{exact}}(t)-\psi_{\text{approx}}(t)}_\mathcal{F} \lesssim_\varepsilon  N^{-\frac{1}{2}+\beta(1+\varepsilon)}+P_N(t)
\end{align}
where $P_N(t)$ is some (quadratic) polynomial  with the property: on any fixed interval $[0, T]$, we have that
\begin{align}
|P_N(t)| \ll N^{-\frac{1}{2}+\beta(1+\varepsilon)}
\end{align}
on $[0, T]$ when $N$ is sufficiently large provided $0 < \beta<\frac{1}{2(1+\varepsilon)}$. If $v\ge 0$ then the assumption on the smallness of $\dot H^{\frac{1}{2}}_x$-norm can be dropped. 
\end{theorem}

\begin{remark}\label{beta<1}
It has already been remarked in \cite{Kuz2} that the range of $\beta$ in Theorem \ref{main1} is optimal for the uncoupled system \eqref{uncoupled}. The constraint comes from estimating $\mathcal{P}_1$ in Lemma
\ref{p1-lem}; see Appendix for more detail. However, it has been shown in \cite{GM3, GM4, Chong} that the range of $\beta$ can be extended to $0<\beta<1$ 
provided we consider 
the time-dependent HFB system and $v\ge 0$. An earlier result of Boccato, Cenatiempo, and Schlein in \cite{BCS}  have also shown that the Fock space norm approximation holds
for $0<\beta<1$; the authors assume an explicit form for the pair excitation $k$ and work with a specific class of initial data. The main difference between the two approaches is the fact that the former group writes down a nonlinear system of PDEs  for the pair excitation function and approach the problem from a dispersive PDE perspective. However, in
both cases, the bound on the Fock space error is either exponential or double-exponential in time. 

The case $\beta =1$ is physically interesting since it corresponds to the Gross-Pitaveskii scaling regime. However, it is also mathematically the most difficult case.
 In this situation, it is not clear whether we have Fock space norm approximation.  In particular, both Fock space estimates provided in \cite{BCS, Chong} breaks down precisely when $\beta =1$. 
 Nevertheless, there are important results of Bogoliubov theory applied to the studies of dynamics of interacting bosons in the Gross-Pitaveskii regime. We refer the reader to \cite{BdOS, BS}
 for a more in-depth coverage of the topic.
\end{remark}

\begin{remark}\label{soliton}
For the focusing NLS, we know that it is globally well-posed in $H^1(\rr^3)$ and its solution scatters if the initial data is below some threshold given 
by the nonlinear ground state of the NLS; otherwise, the solution blows up  (See \cite{DHR, DM2}). Moreover, for each fixed $N$, it is well known that \eqref{eeq1} with compactly 
supported potential $v$ is globally
well-posed for any initial datum in $H^1$  even if $v\le 0$; see Theorem 3.1 in \cite{GV3}. However, the uniform in $N$ global well-posedness of \eqref{eeq1} in $H^1$ 
is not clear. 

The assumption on the smallness of the $\dot H^\frac{1}{2}$ norm
 guarantees the uniform in $N$ global well-posedness of the family of Hartree-type equations \eqref{eeq1} and provides us with uniform in $N$ a-priori bounds. However, we do not quantify how small
the data has to be. Despite the fact that the family of solutions to \eqref{eeq1} converges to the solution of the focusing NLS as $N$ tends to infinity, we are not able to 
precisely define a threshold for the uniform in $N$ global well-posedness of the family of equations.  
\end{remark}

\begin{remark}
For $v\ge 0$, the analysis of $\phi$ is given in \cite{GM2}.  The improvement in Section 4 allows us to improved the results in \cite{GM2} and \cite{Kuz2}. In fact, the crucial
ingredient for everything in Section 4 to hold is the fact that we have the uniform in $N$ time-decay estimates for $\phi$, not the sign of $v$. 
\end{remark}

The second purpose of the article is to derive the focusing cubic NLS in $\rr^3$ from 
a many-body boson system as in \cite{CHHOL, CHHOL3, CHHOL2, NaNa, JP, NaNa2}. 
 For this purpose, we assume $v \leq 0$, i.e. the
 interaction is attractive. In this case, we have the following statement.

\begin{theorem}\label{DerFNLS}(Factorized Initial Condition)
Assume $v \in C^\infty_c(\rr^3)$ and $v\leq 0$. Suppose $\Psi_N(t, \vect{x})$ solves the initial value problem
\begin{align}
 \frac{1}{i}\bd_t\Psi_N(t, \vect{x}) = H_{\text{mf}} \Psi_N(t, \vect{x}), \ \ \ \Psi_N(0, \cdot) 
= \phi_0^{\otimes N}
\end{align}
where $\phi_0$ satisfies the same conditions as in Theorem \ref{main1}. Denote the one-particle density associated to $\Psi_N(t, x)$
 by $\gamma^{(1)}_{N, t}$. Then we have the estimate
\begin{align*}
  \Tr\left|\gamma_{N}^{(1)}(t, \cdot) - |\phi_t\rangle \langle \phi_t|\right| 
\lesssim C(t) N^{\delta} 
\end{align*}
for some $\delta<0$ where
\begin{align}
C(t)
=
\begin{cases}
1 & \text{ if } 0<\beta<\frac{1}{6}\\
(1+t)\log^4(1+t) & \text{ if } \frac{1}{6}\le \beta <\frac{1}{3}
\end{cases},
\end{align}
and $\phi_t$ solves the focusing NLS
\begin{align}\label{FNLS}
\frac{1}{i}\bd_t \phi -\lapl_x \phi +(\int v)|\phi|^2\phi = 0.
\end{align}
\end{theorem}

\begin{remark}
The reader should note that Theorem \ref{DerFNLS} only addresses the derivation of the focusing NLS for a system of weakly-interacting dense Bose gas since $\beta \in (0, \frac{1}{3})$. 
\end{remark}

\begin{remark}\label{soliton2}
In the case $0<\beta<\frac{1}{6}$, we prove Theorem \ref{DerFNLS} by applying Pickl's method which we introduce in Section 6.1. There we do not need to work with a family of Hartree-type equation \eqref{eeq1}. Instead, we work directly with 
\eqref{FNLS}. As a consequence, we are able to apply standard facts about focusing NLS (see Remark \ref{soliton}) which allows us to work with any $\phi_0 \in H^1$ below a ground state threshold (i.e. we can handle any initial data that does not give raise to soliton solutions). 
\end{remark}

\section{Estimates for the Solution to the Hartree Equation}
Let us consider the following family of Hartree-type equations
\begin{align}\label{htnls}
 &\frac{1}{i}\bd_t\phi -\lapl \phi +(v_N\ast |\phi|^2)\phi = 0\\
 &\phi(0, \cdot) = \phi_0  \ \text{with } \ \llp{\phi_0}_{L^2_x}=1\nonumber
\end{align}
where $v_N(x) = N^{3\beta}v(N^\beta x)$
for $0\leq \beta \leq 1$ and $v \in C^\infty_0(\rr^3)$ is radial but not necessary nonnegative.  In this section, we prove the uniform in $N$ global well-posedness
of the Hartree-type equation for small data and its corresponding decay estimates.  

\subsection{Uniform in $N$ Global Wellposedness}
In this subsection, we prove the uniform in $N$ global well-posedness of \eqref{htnls}  assuming some small Sobolev condition.  

Let us begin by adopting some 
standard notations in dispersive PDE theory. We write $A \lesssim B$ to denote there exists a constant $C>0$ such that $A\le CB$. Consider the functions $f(x)$ and $g(x, t)$, we write
\begin{align*}
\llp{f}_{L^r_x}= \left(\int_{\rr^d} dx\ |f(x)|^r\right)^{\frac{1}{r}}, \ \ \llp{g}_{L^{q}_tL^r_x} = \left(\int^\infty_{-\infty} dt\ \llp{g(\cdot, t)}_{L^r(dx)}^q\right)^{\frac{1}{q}}
\end{align*}
with the usual adjustment in the case of $q$ or $r$ equals $\infty$. We define the Fourier transform and the space-time Fourier transform by
\begin{align*}
\widehat f(\xi) = \int_{\rr^d} dx\ e^{-ix\cdot \xi} f(x), \ \ \widetilde g(\xi, \tau) = \int_{\rr^{d+1}} dxdt\ e^{-ix\cdot \xi-it\tau} g(x, t),
\end{align*}
and sometimes write $\mathcal{F}(f)(\xi) = \widehat f(\xi)$ and $\mathcal{F}(g)(\xi, \tau) = \widetilde g (\xi, \tau)$ (should be clear from the context).
We also define the homogeneous Sobolev norm by 
\begin{align*}
\llp{f}_{\dot H_x^{s}}=\llp{|\grad|^{s} f}_{L^2_x} =\llp{D^{s} f}_{L^2_x}:= \left(\int d\xi\ |\xi|^{2s}|\widehat{f}(\xi)|^2\right)^{\frac{1}{2}}.
\end{align*}
In the case of partial spatial derivatives, we use the standard notation $\bd^\alpha f$ where $\alpha \in \nn^{d}$. We use interpolation to define fractional partial derivatives. 
If time differentiation is involved, we will make the notation more specific by denoting with subscript, i.e. $\bd_t^j f$.  Moreover, the general $L^p$ fractional Sobolev space is defined through complex interpolation. 

 A pair of numbers $(q, r)$ is admissible provided $q, r \geq 2$ and 
$\frac{2}{q}+\frac{d}{r} = \frac{d}{2}$ (for simplicity, we specialize to the case of 3D admissible, i.e. $\frac{2}{q}+\frac{3}{r} = \frac{3}{2}$). 
 Then the Strichartz norm  and its dual norm are defined by
\begin{align*}
\llp{g}_{S^0} := \sup_{(q, r) \ \text{admissible}} \llp{g}_{L^{q}_tL^r_x}, \ \ \llp{g}_{N^0} := \inf_{(q, r) \ \text{admissible}} \llp{g}_{L^{q'}_tL^{r'}_x}
\end{align*} 
where $q', r'$ are the H\"older conjugates of $q, r$. In fact, in our work, the notation also means: let $u(x, y, t)$  be a function of 6+1 variables, then
\begin{align*}
\llp{u}_{S^0} := \sup_{(q, r) \ \text{admissible}} \llp{u}_{L^{q}_tL^r_xL^2_y} = \sup_{(q, r) \ \text{admissible}} \Lp{\llp{u(t, x, \cdot)}_{L^2_y}}{L^{q}_tL^r_x}.
\end{align*}

The Schr\"odinger group satisfies the dispersive estimates:
\begin{align}\label{disp-est}
\llp{e^{it\lapl} f}_{L^r_x} \lesssim |t|^{-(\frac{3}{2}-\frac{3}{r})} \llp{f}_{L^{r'}_x} \ \ \text{ for }\ \ 2\le r\le \infty.
\end{align}
From \eqref{disp-est}, we can deduce the standard Strichartz estimates: suppose $(q, r)$ and $(\tilde q, \tilde r)$ are 
admissible pairs then it follows
\begin{subequations}\label{strichartz}
\begin{align}
\llp{e^{it\lapl}f}_{L^q_tL^r_x}\lesssim&\ \llp{f}_{L^2_x}\label{h-strich}\\
\Lp{\int^t_0 ds\ e^{i(t-s)\lapl}g(s)}{L^q_tL^r_x} \lesssim&\ \llp{g}_{L^{\tilde q'}_tL^{\tilde r'}_x} \label{in-strich}.
\end{align}
The case $(q, r) = (2, 6)$ is called the endpoint Strichartz estimates; see \cite{KeelTao}. See \cite{Tao} for an excellent account of the rudimentary facts of dispersive PDEs.
\end{subequations}

\begin{prop}[a-priori estimates]\label{global}
Let $\phi$ be a solution to \eqref{htnls} and $\phi_0 \in \dot H^{\frac{1}{2}}$, then we have the estimate
\begin{align}
\llp{D^\frac{1}{2}\phi}_{S^0} \lesssim \llp{\phi_0}_{\dot H^{\frac{1}{2}}_x} + \llp{v}_{L^1_x}\llp{D^{\frac{1}{2}}\phi}_{S^0}^3
\end{align}
which is independent of $N$. If $\llp{\phi_0}_{\dot H^{\frac{1}{2}}_x}$ is sufficiently small then we obtain the estimate
\begin{align}
\llp{D^\frac{1}{2}\phi}_{S^0} \lesssim 1
\end{align}
which depends only on $\llp{\phi_0}_{\dot H^{\frac{1}{2}}_x}$ and independent of $N$. 
Moreover, by the Sobolev inequality, we have the estimate 
\begin{align}\label{scattering-norm}
\llp{\phi}_{L^5_{t}L^5_x} \lesssim 1. 
\end{align}
\end{prop}

\begin{proof}
We begin by differentiating \eqref{htnls} 
\begin{equation}
\begin{aligned}
&\frac{1}{i}\bd_t D^\frac{1}{2}\phi -\lapl D^\frac{1}{2}\phi + (v_N\ast |\phi|^2)\cdot D^\frac{1}{2}\phi
+ (v_N\ast D^\frac{1}{2}|\phi|^2)\cdot \phi\\
& + \text{ ``lower order" terms} =0.
\end{aligned}
\end{equation}
Applying the dual $L^2L^{\frac{6}{5}}$--endpoint Strichartz estimate \eqref{in-strich}
and the fractional Leibniz rule, we obtain the estimate
\begin{align*}
\llp{D^\frac{1}{2}\phi}_{S^0} \lesssim&\ \llp{e^{it\lapl} D^{\frac{1}{2}}\phi}_{S^0}+\llp{(v_N\ast |\phi|^2)\cdot D^\frac{1}{2}\phi}_{L^2_t L^{\frac{6}{5}}_x}\\
& +\llp{(v_N\ast D^\frac{1}{2} |\phi|^2)\cdot \phi}_{L^2_t L^{\frac{6}{5}}_x}\\
\lesssim&\ \llp{\phi_0}_{\dot H^{\frac{1}{2}}_x} 
+ \llp{v_N\ast |\phi|^2}_{L^2_tL^3_x}\llp{D^\frac{1}{2}\phi}_{L^\infty_tL^2_x}\\
& +\llp{v_N\ast D^\frac{1}{2}|\phi|^2}_{L^2_tL^2_x}\llp{\phi}_{L^\infty_tL^3_x}.
\end{align*}
For the first forcing term, we apply Sobolev and Young's inequalities to get 
\begin{align*}
 \llp{v_N\ast |\phi|^2}_{L^2_tL^3_x}\llp{D^\frac{1}{2}\phi}_{L^\infty_tL^2_x} 
 \lesssim&\  \llp{v}_{L^1_x}\llp{\phi}^2_{L^4_tL^6_x}\llp{D^\frac{1}{2}\phi}_{L^\infty_tL^2_x}\\
 \lesssim&\ \llp{v}_{L^1_x} \llp{D^\frac{1}{2}\phi}_{L^4_tL^3_x}^2\llp{D^\frac{1}{2}\phi}_{L^\infty_tL^2_x}\\
 \lesssim&\ \llp{v}_{L^1_x}\llp{D^\frac{1}{2}\phi}_{S^0}^3.
\end{align*}
The other term can be handled in a similar fashion. 
\end{proof}

As an immediate corollary of Proposition \ref{global}, we have
\begin{cor}[Uniform in $N$ global well-posedness]\label{gwp}
Let $v \in C^\infty_c(\rr^3)$. Then there exists $\varepsilon=\varepsilon(\llp{v}_{L^1_x})>0$, independent of $N$, such that for 
any $\varphi_0 \in \{ \varphi \in \dot H^{\frac{1}{2}}_x \mid \llp{\varphi}_{\dot H^{\frac{1}{2}}_x}<\varepsilon\}$
there exists a unique solution to \eqref{htnls} with initial data $\varphi_0$ satisfying 
$\varphi_t \in C([0, \infty)\rightarrow \dot H^{\frac{1}{2}}_x)\cap S^0$. 
\end{cor}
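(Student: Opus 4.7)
The plan is to deduce the corollary from Proposition \ref{global} via a standard Banach fixed point argument in a Strichartz-type space; the crucial feature is that every constant that appears is independent of $N$ because $\lp{v_N}{L^1_x} = \lp{v}{L^1_x}$ by scaling.

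I would recast (\ref{htnls}) in Duhamel form,
\begin{align*}
\phi(t) = e^{it\lapl}\varphi_0 - i\int_0^t e^{i(t-s)\lapl}\bigl[(v_N*|\phi|^2)\phi\bigr](s)\,ds,
\end{align*}
and introduce the solution map $\Phi$ defined by the right-hand side on the closed ball
\begin{align*}
B_R = \bigl\{\phi \in C([0,\infty);\dot H^{1/2}_x)\cap S^0 : \llp{\grad^{1/2}_x\phi}_{S^0(\rr\times\rr^3)} \leq R\bigr\},
\end{align*}
endowed with the metric inherited from $\llp{\grad^{1/2}_x(\phi-\psi)}_{S^0}$. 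The homogeneous Strichartz estimate gives $\llp{\grad^{1/2}_x e^{it\lapl}\varphi_0}_{S^0} \leq C\llp{\varphi_0}_{\dot H^{1/2}_x}$, and the exact chain of Young--$L^2L^{6/5}$ endpoint--fractional Leibniz computations already carried out in the proof of Proposition \ref{global} yields the inhomogeneous estimate $\llp{\grad^{1/2}_x[(v_N*|\phi|^2)\phi]}_{L^2_tL^{6/5}_x} \lesssim \lp{v}{L^1_x}\llp{\grad^{1/2}_x\phi}_{S^0}^3$ with constants depending on $v$ only through $\lp{v}{L^1_x}$. Choosing $R = 2C\varepsilon$ and $\varepsilon = \varepsilon(\lp{v}{L^1_x})$ small enough makes $\Phi$ send $B_R$ into itself.

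The contraction estimate follows by decomposing $|\phi|^2\phi - |\psi|^2\psi$ as a sum of three trilinear terms each linear in $\phi-\psi$ and applying the same Young--Strichartz--Leibniz chain, producing $\llp{\grad^{1/2}_x(\Phi(\phi)-\Phi(\psi))}_{S^0} \leq \tfrac{1}{2}\llp{\grad^{1/2}_x(\phi-\psi)}_{S^0}$. Banach's fixed point theorem then supplies the unique solution $\phi \in B_R$; because $B_R$ is defined without time restriction, $\phi$ is automatically global, and the Strichartz estimate applied to $\phi-e^{it\lapl}\varphi_0$ together with $\phi \in S^0$ delivers continuity into $\dot H^{1/2}_x$. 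No separate continuation step is required, since the a-priori bound from Proposition \ref{global} is already global in time.

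The substantive work was done in Proposition \ref{global}; the only point worth stressing — and the main pitfall to avoid — is that any attempt to estimate $v_N$ via pointwise or $L^\infty$ bounds would introduce powers of $N^{3\beta}$ and destroy the $N$-uniformity. Working exclusively through Young's inequality against $\lp{v_N}{L^1_x}=\lp{v}{L^1_x}$ is what makes the smallness threshold $\varepsilon$ depend on $v$ rather than on $N$, which is the entire content of the corollary.
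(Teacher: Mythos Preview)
Your proposal is correct and is the standard way to convert the a-priori estimate of Proposition \ref{global} into a well-posedness statement. The paper itself provides no proof at all for this corollary, stating only that it is ``an immediate corollary of Proposition \ref{global}''; your contraction-mapping argument in the global Strichartz ball is precisely the routine that the word ``immediate'' is hiding, and your emphasis on using only $\lp{v_N}{L^1_x}=\lp{v}{L^1_x}$ to secure $N$-independence is exactly the point the paper wants the reader to take away.
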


\begin{remark}\label{remark-critical-gwp}
The proof of Corollary \ref{gwp} is standard in the literature for showing small data global well-posedness of critical equations. See Remark 4.5 in \cite{staff} for a complete proof of the statement. 
\end{remark}

\begin{remark}\label{morawetz}
In Proposition \ref{global}, the uniform in $N$ control of the $L^5_tL^5_x$--norm of $\phi$ plays an important role in the following analysis for propagating the regularity of solutions to the family of Hartree equations.  This should
be compare with the  $L^4_tL^4_x$ a-priori estimate (also called the interaction Morawetz estimate) obtained in Proposition 3.1 of \cite{GM2}; the proof of the interaction Morawetz estimate in \cite{GM2} relies heavily on the positivity of
the interaction potential, but it does not require any smallness condition on the initial data which greatly differs from the above result.
\end{remark}

\begin{prop}[Propagation of Sobolev Regularity]\label{sobolev}
Let $\phi$ be a solution to \eqref{htnls} as in Corollary \ref{gwp} with  initial data $\phi_0\in \dot H^s$. Then there exists $C_s$ depending only on $\llp{\phi_0}_{H^s}$ 
such that the estimate
\begin{subequations}
\begin{align}\label{propagate}
\llp{\phi(t, \cdot)}_{\dot H^s} \le C_s
\end{align}
holds uniformly in $N$ and time. As an immediate consequence, we have that
\begin{align}\label{propagate2}
\llp{\bd_t^j\phi(t, \cdot)}_{\dot H^s_x} \le C_{s, j}
\end{align}
\end{subequations}
which follows immediately from  differentiating \eqref{htnls} and repeating the argument of \eqref{propagate}.
\end{prop}

\begin{proof}
By \eqref{scattering-norm}, we can split $[0, \infty)$ into finitely many intervals $I_k$ where 
\begin{align*}
\llp{\phi}_{L^5_t(I_k)L^5_x} \le \varepsilon.
\end{align*}
The choice of $\varepsilon$ will be determine later. Again,  differentiating \eqref{htnls}  yields
\begin{align*}
&\frac{1}{i}\bd_t D^s\phi -\lapl D^s\phi + (v_N\ast |\phi|^2)\cdot D^s\phi\\
& + \text{ ``similar or lower order" terms} =0.
\end{align*}
On the first interval $I_1$, we use the $L^{\frac{10}{3}}_tL^{\frac{10}{3}}_x$ Strichartz estimates and Young's inequality to get
\begin{align*}
\llp{D^s\phi}_{L^{\frac{10}{3}}_t(I_1)L^{\frac{10}{3}}_x} \le&\  C\llp{\phi_0}_{\dot H^s}+C\llp{(v_N\ast|\phi|^2) D^s\phi}_{L^{\frac{10}{7}}_t(I_1)L^{\frac{10}{7}}_x}\\
\le&\ C_1\llp{\phi_0}_{\dot H^s}+C_2\llp{v_N\ast|\phi|^2}_{L^\frac{5}{2}_t(I_1)L^\frac{5}{2}_x}\llp{D^s\phi}_{L^{\frac{10}{3}}_t(I_1)L^{\frac{10}{3}}_x}\\
\le&\ C_1\llp{\phi_0}_{\dot H^s}+C_2\llp{v}_{L^1_x}\llp{\phi}_{L^5_t(I_1)L^5_x}^2\llp{D^s\phi}_{L^{\frac{10}{3}}_t(I_1)L^{\frac{10}{3}}_x}.
\end{align*}
If we choose $\varepsilon$ such that $C_2\llp{v}_{L^1}\varepsilon^2<\frac{1}{2}$, then it follows
\begin{align}
\llp{D^s\phi}_{L^{\frac{10}{3}}_t(I_1)L^{\frac{10}{3}}_x} \le 2C_1\llp{\phi_0}_{\dot H^s}.
\end{align}
Repeating the proof for the remaining finite number of intervals gives us
\begin{align}
\llp{D^s\phi}_{L^{\frac{10}{3}}_tL^{\frac{10}{3}}_x} \le C\llp{\phi_0}_{\dot H^s}.
\end{align}
Finally, applying Strichartz one last time yields
\begin{align*}
\llp{\phi(t, \cdot)}_{\dot H^s} \leq&\ C\llp{\phi_0}_{\dot H^s}+C\llp{(v_N\ast|\phi|^2) D^s\phi}_{L^{\frac{10}{7}}_t(I_1)L^{\frac{10}{7}}_x}\\
\le&\ C_1\llp{\phi_0}_{\dot H^s}+C_2\llp{v}_{L^1_x}\llp{\phi}_{L^5_tL^5_x}^2\llp{D^s\phi}_{L^{\frac{10}{3}}_t(I_1)L^{\frac{10}{3}}_x}\le C_s
\end{align*}
which holds uniformly in $t$ and $N$. 
\end{proof}

\subsection{Decay Estimates}
In this subsection,  we prove the uniform in $N$  decay estimates for $\phi_t$ following 
the approach given in \cite{GM2}, which is in the spirit of \cite{SL}. Let us first make a note on the notation 
used in this section. The notation $\alpha\pm$ means $\alpha\pm\varepsilon$ for
 some fixed $0<\varepsilon\ll 1$. The slight analytic gymnastic introduced by using this notation is a consequence of the fact that  we do not have the endpoint Sobolev estimate.  
\begin{prop}\label{decay} Suppose $\phi_0 \in W^{k, 1}_x$ for some sufficiently large $k$. 
Let $\phi$ be a solution to \eqref{htnls} with small $\dot H^{\frac{1}{2}}_x$ data $\phi_0$. Then we have the decay estimate 
\begin{align}
\llp{\phi(t, \cdot)}_{L^\infty_x} \lesssim \frac{1}{1+t^{\frac{3}{2}}}
\end{align}
which only depends on $\llp{\phi_0}_{W^{k, 1}_x}$ and holds uniformly in $N$. 
\end{prop}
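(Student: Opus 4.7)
The plan is to run a continuity (bootstrap) argument based on the Duhamel representation
\[
\phi(t) = e^{it\lapl}\phi_0 - i\int_0^t e^{i(t-s)\lapl}F(s)\,ds, \qquad F(s) := (v_N\ast|\phi|^2)\phi,
\]
using the bootstrap quantity $M(T) := \sup_{0\leq t \leq T}(1+t)^{3/2}\llp{\phi(t)}_{L^\infty_x}$. The aim is to produce a closed inequality of the form $M(T) \leq C_0 + C_1 M(T) + C_2 M(T)^2$ in which $C_1$ can be made strictly less than $1$ by exploiting the smallness hypothesis on $\phi_0$, and then to appeal to continuity (starting from $M(0) \leq 2 C_0$) to force $M(T) \leq 2 C_0$ uniformly in $T$ and $N$, as required.

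The main technical obstacle is the non-integrability of the pointwise dispersive kernel $(t-s)^{-3/2}$ near $s=t$, which blocks a direct application of $\llp{e^{i(t-s)\lapl}f}_{L^\infty_x} \leq C(t-s)^{-3/2}\llp{f}_{L^1_x}$ inside Duhamel. The key idea is to interpolate this dispersive bound (effective for $|\tau|\geq 1$) with the Sobolev embedding $\llp{e^{i\tau\lapl}f}_{L^\infty_x} \lesssim \llp{f}_{H^{3/2+\varepsilon}_x}$ (effective for $|\tau|\leq 1$, since $e^{i\tau\lapl}$ is an isometry on $H^s_x$) into the single combined estimate
\[
\llp{e^{i\tau\lapl}f}_{L^\infty_x} \lesssim (1+|\tau|)^{-3/2}\bigl(\llp{f}_{L^1_x}+\llp{f}_{H^{3/2+\varepsilon}_x}\bigr),
\]
whose kernel in $\tau$ is now integrable. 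Applied to the linear evolution with $f = \phi_0$, this immediately yields $\llp{e^{it\lapl}\phi_0}_{L^\infty_x} \lesssim \llp{\phi_0}_{W^{k,1}_x}(1+t)^{-3/2}$, provided $k$ is large enough for $W^{k,1}_x$ to embed into both $L^1_x$ and $H^{3/2+\varepsilon}_x$; this produces $C_0$.

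For the Duhamel integrand I would bound $F(s)$ separately in the two norms. Young's convolution inequality combined with mass conservation gives
\[
\llp{F(s)}_{L^1_x} \leq \llp{v}_{L^1_x}\llp{\phi_0}_{L^2_x}^2\llp{\phi(s)}_{L^\infty_x} \leq C_\star \frac{M(T)}{(1+s)^{3/2}},
\]
where the prefactor $C_\star = C\llp{v}_{L^1_x}\llp{\phi_0}_{L^2_x}^2$ is rendered small by the data smallness. The fractional Leibniz rule together with the uniform-in-$N$ higher derivative $S^0$ bounds from Proposition \ref{global} yield
\[
\llp{F(s)}_{H^{3/2+\varepsilon}_x} \lesssim \llp{\phi(s)}_{L^\infty_x}^2\llp{\phi(s)}_{H^{3/2+\varepsilon}_x} \lesssim \frac{M(T)^2}{(1+s)^3}.
\]
Feeding these into the combined dispersive-Sobolev kernel and invoking the classical convolution bound $\int_0^t (1+t-s)^{-3/2}(1+s)^{-a}\,ds \lesssim (1+t)^{-3/2}$ for $a \geq 3/2$ (proved by splitting at $s=t/2$) converts the Duhamel integral into a contribution of order $(C_\star M(T) + C M(T)^2)(1+t)^{-3/2}$, closing the bootstrap inequality. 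The essential maneuver is therefore trading some of the non-integrability of the dispersive kernel for a few derivatives on the source, which are available thanks to the uniform-in-$N$ higher regularity already established in Proposition \ref{global}; modulo this, the argument is standard.
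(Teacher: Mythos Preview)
Your bootstrap scheme has a genuine gap at the point where you claim the linear coefficient $C_\star = C\llp{v}_{L^1_x}\llp{\phi_0}_{L^2_x}^2$ is ``rendered small by the data smallness.'' The only smallness hypothesis in the proposition is on $\llp{\phi_0}_{\dot H^{1/2}_x}$; neither $\llp{v}_{L^1_x}$ nor $\llp{\phi_0}_{L^2_x}$ is assumed small, and in the intended application (Theorem~\ref{DerFNLS}) one has $\llp{\phi_0}_{L^2_x}=1$. There is no way to squeeze the $L^1_x$ bound on $F(s)$ so that the prefactor involves only $\llp{\phi_0}_{\dot H^{1/2}_x}$: any H\"older/Young splitting that retains a single factor of $\llp{\phi(s)}_{L^\infty_x}$ forces two further factors whose product scales like $\llp{\phi}_{L^2_x}^2$. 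With $C_\star$ of order one your closed inequality becomes $M(T)\le C_0 + C_\star M(T) + C_2 M(T)^2$ with $C_0\sim \llp{\phi_0}_{W^{k,1}_x}$ possibly large, and there is no mechanism to place $M$ in the stable region of the quadratic, so the continuity argument does not close.

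The paper sidesteps this by \emph{not} trying to make the linear term small from the data. Instead it first proves the qualitative statement $\llp{\phi(t)}_{L^\infty_x}\to 0$ (Lemma~\ref{decay-0}) using the global Strichartz bounds --- and this is precisely where the $\dot H^{1/2}_x$ smallness enters, via Proposition~\ref{global}. Then, on the Duhamel piece near $s=t$, it arranges the dispersive/Sobolev interpolation so that the integrand carries $\llp{\phi(s)}_{L^\infty_x}^{1+\delta}$ for some $\delta>0$ (Lemma~\ref{decay-1}); the extra factor $\llp{\phi(s)}_{L^\infty_x}^{\delta}$ is then made $<\tfrac12$ by restricting to $s\ge T$ for $T$ large, after which a Gronwall argument on $\sup_{T\le s\le t}(1+s^{3/2})\llp{\phi(s)}_{L^\infty_x}$ closes. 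The interval $[0,2T]$ is handled trivially by the uniform $L^\infty_{t,x}$ bound from Proposition~\ref{global}. In short, the missing idea in your sketch is that the smallness which closes the loop comes from the large-time qualitative decay of the solution, not from any smallness of $\llp{\phi_0}_{L^2_x}$.
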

\begin{remark}\label{decay-NLS}
Proposition \ref{decay} holds for global smooth solutions to the focusing NLS. More precisely,  if $\phi_0$ is smooth and whose norm is below some ground state threshold, then the solution is
 global and its regularity is propagated by Proposition \ref{sobolev}. Hence the proof of Proposition \ref{decay} shows that $\phi$ also satisfies the decay estimate. This point is relevant
 for Theorem \ref{DerFNLS}.
\end{remark}

Let us first prove the following lemmas.
\begin{lemma}\label{decay-0}
Assuming the same conditions as in Proposition \ref{decay}. Then
 $\llp{\phi(t, \cdot)}_{L^\infty_x}\rightarrow 0$ as $t \rightarrow \infty$. 
\end{lemma}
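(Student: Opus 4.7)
The plan is to deduce decay from spacetime integrability of $\phi$ in $L^\infty_x$ combined with uniform continuity in time, via the elementary fact that a uniformly continuous function $f\colon [0,\infty) \to [0,\infty)$ which lies in $L^q$ for some finite $q$ must vanish at infinity.

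First I would use the a priori bound (\ref{higher-deriv}) from Proposition \ref{global} applied to $\nabla^{1/2+\epsilon}_x \phi$ (uniformly in $N$, by the small $\dot H^{1/2}_x$ assumption on $\phi_0$) to get $\|\nabla^{1/2+\epsilon}_x\phi\|_{S^0} \lesssim 1$, which in particular yields $\|\nabla^{1/2+\epsilon}_x\phi\|_{L^4_tL^6_x} \lesssim 1$ via the admissible pair $(4,6)$. By the three-dimensional Sobolev embedding $W^{1/2+\epsilon,6}(\rr^3) \hookrightarrow L^\infty(\rr^3)$, applied pointwise in $t$ and then integrated, I get
\begin{align*}
\llp{\phi}{L^4_tL^\infty_x} \lesssim 1.
\end{align*}

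Second, I would apply (\ref{higher-deriv}) once more, this time with a time derivative and two spatial derivatives, to obtain $\|\partial_t \phi\|_{L^\infty_tH^2_x} \lesssim 1$; since $H^2(\rr^3) \hookrightarrow L^\infty(\rr^3)$, this gives $\|\partial_t\phi\|_{L^\infty_tL^\infty_x} \lesssim 1$. Consequently the scalar function $f(t) := \|\phi(t,\cdot)\|_{L^\infty_x}$ is Lipschitz in $t$, hence uniformly continuous. Combining with the first step $f \in L^4([0,\infty))$ and the elementary lemma above, we conclude $f(t) \to 0$ as $t \to \infty$.

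I expect the only subtle point to be verifying that the mixed higher-derivative estimate (\ref{higher-deriv}) really is uniform in $N$ for these particular norms; this is encoded in the closing observation of the proof of Proposition \ref{global}, which bootstraps $\|\partial_t^m \nabla^s_x\phi\|_{S^0}$ against the small quantity $\|v\|_{L^1_x}\|\nabla^{1/2}_x\phi\|_{S^0}^2$, but one must check that the ``lower order'' terms generated by $\nabla^{1/2+\epsilon}_x$ hitting the Hartree nonlinearity can be absorbed in the same way. This is routine given the fractional Leibniz rule and the uniform $L^1_x$ bound on $v_N$, so no serious obstacle is expected.
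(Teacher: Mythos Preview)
Your approach is sound but contains a numerical slip: the pair $(4,6)$ is not Schr\"odinger-admissible in $\rr^3$, since $\frac{2}{q}+\frac{3}{r}=\frac{3}{2}$ forces $q=2$ when $r=6$. Replacing $(4,6)$ by the endpoint pair $(2,6)$ fixes this immediately: $\llp{\grad^{1/2+\epsilon}_x\phi}_{L^2_tL^6_x}\lesssim 1$ together with the embedding $W^{1/2+\epsilon,6}(\rr^3)\hookrightarrow L^\infty(\rr^3)$ (which you stated correctly) gives $\llp{\phi}_{L^2_tL^\infty_x}\lesssim 1$, and the rest of your argument goes through unchanged with $f\in L^2([0,\infty))$ in place of $L^4$.

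With that correction, your route is genuinely different from the paper's. The paper interpolates between $\llp{\phi}_{L^{10/3}_{t,x}}$ and $\llp{\phi}_{L^\infty_{t,x}}$ (the latter controlled via $\llp{\bd_t\grad_x\phi}_{L^2_tL^6_x}$) on unit time windows $[n,n+1]$ to obtain $\llp{\phi}_{L^p([n,n+1]\times\rr^3)}\to 0$ for each finite $p>10/3$, and then passes to the limit $p\to\infty$. Your argument is more direct: once $t\mapsto\llp{\phi(t,\cdot)}_{L^\infty_x}$ is known to lie in some $L^q_t$ and to be Lipschitz (from $\bd_t\phi\in L^\infty_{t,x}$), decay to zero is an elementary one-variable calculus fact. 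Both approaches rest on the higher-derivative bound (\ref{higher-deriv}); yours avoids the interpolation on moving windows and the $p\to\infty$ limit, at the modest cost of invoking one extra time derivative for the Lipschitz control.
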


\begin{proof}
By \eqref{scattering-norm} and Proposition \ref{sobolev}, we have the estimates
\begin{align}
\llp{\phi}_{L^5_tL^5_x} \leq C \ \ \text{ and } \ \ \llp{\phi}_{C^k(\rr^3\times \rr)} \le C_{k} \ \text{ for all } k \in \nn
\end{align}
where the latter follows from Sobolev embedding applied to \eqref{propagate2}. By the non-sharp version of the Sobolev embedding, we see that
\begin{align*}
\llp{\phi^2}_{L^p_t([n, n+1])L^p_x} \le&\ \llp{\grad_{t, x}(\phi^2)}_{L^5_t([n, n+1])L^5_x} \\
\le&\ 2\llp{\phi}_{L^5_t([n, n+1])L^5_x}\llp{\grad_{t, x} \phi}_{L^\infty_t([n, n+1])L^\infty_x}
\end{align*}
for $5<p<\infty$. In particular, it follows that as $n\rightarrow \infty$ we see that $\llp{\phi}_{L^{2p}_t([n, n+1])L^{2p}_x}\rightarrow 0$. Applying the argument again, we see that
\begin{align*}
\llp{\phi^2}_{L^\infty_t([n, n+1])L^\infty_x}\le&\  \llp{\grad_{t, x}(\phi^2)}_{L^{11}_t([n, n+1])L^{11}_x}\\
\le&\ 2\llp{\phi}_{L^{11}_t([n, n+1])L^{11}_x}\llp{\grad_{t, x} \phi}_{L^\infty_t([n, n+1])L^\infty_x}\rightarrow 0
\end{align*}
as $n\rightarrow \infty$, which yields the desired result. 
\end{proof}

\begin{lemma}\label{decay-1}
Assuming the same conditions as in Proposition \ref{decay}. Then there exists $k \in L^1([0, \infty))$  and $\delta>0$ such that
\begin{align}
\llp{e^{i(t-s)\lapl}((v_N\ast |\phi|^2)\cdot \phi(s))}_{L^\infty_x} \leq k(t-s) \llp{\phi(s, \cdot)}^{1+\delta}_{L^\infty_x}.
\end{align}
\end{lemma}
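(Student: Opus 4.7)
The plan is to split the estimate based on whether $\tau := t-s$ is small or large, and set $F := (v_N\ast|\phi(s)|^2)\phi(s)$. The naive approach of applying the standard $L^1 \to L^\infty$ dispersive estimate alone yields only a first power of $\llp{\phi(s)}_{L^\infty_x}$ on the right, while any pure $L^{p'}\to L^\infty$ dispersive estimate with $p'>1$ fails to be integrable near $\tau = 0$. I therefore combine two different estimates on the two regions $\tau > 1$ and $0<\tau\leq 1$.

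For $\tau > 1$, I use the dispersive bound $\llp{e^{i\tau\lapl}F}_{L^\infty_x} \leq C\tau^{-3(1/p'-1/2)}\llp{F}_{L^{p'}_x}$ with some $p' \in (1, 6/5)$, so that the decay rate $3(1/p'-1/2) > 1$ is $L^1$-integrable at infinity. To bound $F$, I apply Young's convolution inequality (using $\llp{v_N}_{L^1_x} = \llp{v}_{L^1_x}$, which is uniform in $N$), H\"older's inequality, and the interpolation $\llp{\phi}_{L^r_x} \leq \llp{\phi}_{L^\infty_x}^{1-2/r}\llp{\phi}_{L^2_x}^{2/r}$. Using conservation of $\llp{\phi}_{L^2_x}$ along the Hartree flow then yields $\llp{F}_{L^{p'}_x} \lesssim \llp{\phi(s)}_{L^\infty_x}^{3 - 2/p'}$. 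Since $p' > 1$ we have $3 - 2/p' > 1$, which produces the extra $1+\delta$ power required; for instance $p' = 7/6$ gives $\delta = 2/7$ and decay rate $\tau^{-15/14}$.

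For $0 < \tau \leq 1$, I fall back on the Sobolev embedding $H^2_x \hookrightarrow L^\infty_x$ (valid in $\rr^3$) and unitarity of $e^{i\tau\lapl}$ on $H^2_x$, so that $\llp{e^{i\tau\lapl}F}_{L^\infty_x} \lesssim \llp{F}_{H^2_x}$. A standard fractional Leibniz estimate, combined with the convolution bound $\llp{v_N\ast g}_{H^2_x} \leq \llp{v_N}_{L^1_x}\llp{g}_{H^2_x}$, yields $\llp{F}_{H^2_x} \lesssim \llp{\phi}_{L^\infty_x}^2 \llp{\phi}_{H^2_x}$, and $\llp{\phi(s)}_{H^2_x}$ is uniformly bounded in $s$ and $N$ by Proposition \ref{global}. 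Hence on this regime $\llp{e^{i\tau\lapl}F}_{L^\infty_x} \lesssim \llp{\phi(s)}_{L^\infty_x}^2$; since the same proposition gives a uniform bound $\llp{\phi(s)}_{L^\infty_x} \lesssim 1$ via Sobolev embedding, I can factor out $\llp{\phi(s)}_{L^\infty_x}^{1-\delta} \lesssim 1$ to recover $\llp{\phi(s)}_{L^\infty_x}^{1+\delta}$ with a bounded prefactor.

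Combining the two regimes, I set $k(\tau) = C_1$ for $\tau \in [0, 1]$ and $k(\tau) = C_2 \tau^{-15/14}$ for $\tau > 1$; then $k \in L^1([0, \infty))$ and the estimate holds with $\delta = 2/7$. The main subtlety is the dichotomy itself: a single application of dispersive estimates cannot simultaneously deliver an $L^1$ kernel and a strict super-linear power of $\llp{\phi(s)}_{L^\infty_x}$, so two different tools must be combined for the two regimes, with the exponents in the large-$\tau$ branch chosen to stay strictly inside the window $(1, 6/5)$.
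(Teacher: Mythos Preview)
Your large-$\tau$ step rests on an estimate that does not exist. You invoke a ``dispersive bound'' of the form
\[
\llp{e^{i\tau\lapl}F}_{L^\infty_x}\le C\,\tau^{-3(1/p'-1/2)}\llp{F}_{L^{p'}_x},\qquad p'\in(1,6/5),
\]
but the standard dispersive estimate for the free Schr\"odinger group reads $\llp{e^{i\tau\lapl}f}_{L^p_x}\lesssim|\tau|^{-3(1/2-1/p)}\llp{f}_{L^{p'}_x}$ with $p$ the conjugate of $p'$; the target is $L^p$, not $L^\infty$. The only value of $p'$ for which the target is $L^\infty$ is $p'=1$. There is no $L^{p'}\to L^\infty$ bound for $p'>1$ (indeed $e^{i\tau\lapl}$ does not map $L^{p'}$ into $L^\infty$ when $p'>1$), so the exponent $15/14$ you quote has no justification. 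With $p'=1$ you are back to $\tau^{-3/2}\llp{\phi}_{L^\infty_x}$, i.e.\ exponent $1$, and the gain of $\delta>0$ is lost.

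The paper's remedy is precisely to produce a \emph{second} estimate that does land in $L^\infty$ but through Sobolev embedding rather than a bare dispersive bound: apply $W^{1,3+}_x\hookrightarrow L^\infty_x$, then the $L^{3/2-}\to L^{3+}$ dispersive estimate to $\grad_xF$, and finally H\"older/interpolation using the uniformly bounded $\llp{\grad_x\phi}_{L^2_x}$ from Proposition~\ref{global}. This yields
\[
\llp{e^{i\tau\lapl}F}_{L^\infty_x}\lesssim \tau^{-1/2-}\llp{\phi}_{L^\infty_x}^{13/9-}.
\]
For $\tau<1$ this already gives an integrable kernel with exponent $>1$ on $\llp{\phi}_{L^\infty_x}$ (so your $H^2$ argument, while correct, is not needed). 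For $\tau\ge 1$ one interpolates this estimate with the $L^1\to L^\infty$ bound $\tau^{-3/2}\llp{\phi}_{L^\infty_x}$ to obtain simultaneously a kernel decaying like $\tau^{-(1+)}$ and an exponent $1+\delta$ on $\llp{\phi}_{L^\infty_x}$. That interpolation between two genuine estimates is the missing ingredient in your argument.
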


\begin{proof}
Using the $L^\infty L^1$ -decay estimate \eqref{disp-est} and conservation of mass of \eqref{htnls}, we have that
\begin{equation}\label{est-dec-1}
\begin{aligned}
\llp{e^{i(t-s)\lapl}((v_N\ast |\phi|^2)\cdot \phi(s))}_{L^\infty_x} \lesssim&\ \frac{1}{|t-s|^{\frac{3}{2}}}\llp{(v_N\ast |\phi|^2)\cdot \phi(s)}_{L^1_x}\\
\lesssim&\   \frac{1}{|t-s|^{\frac{3}{2}}}\llp{v}_{L^1_x}\llp{\phi}_{L^2_x}^2\llp{\phi(s, \cdot)}_{L^\infty_x}.
\end{aligned}
\end{equation}
On the other hand, applying Sobolev embedding, $L^{3+}L^{\frac{3}{2}-}$-- decay estimate \eqref{disp-est} and interpolation yields
\begin{align}\label{est-dec-2}
\llp{e^{i(t-s)\lapl}((v_N\ast|\phi|^2)\cdot \phi(s)}_{L^\infty_x} &\lesssim\ \llp{\grad e^{i(t-s)\lapl}((v_N\ast|\phi|^2)\cdot \phi(s))}_{L^{3+}_x} \nonumber\\
&\lesssim\ \frac{1}{|t-s|^{\frac{1}{2}+}}\llp{\grad\phi}_{L^2_x}
\llp{\phi}_{L^{12-}_x}^2\\
&\lesssim\  \frac{1}{|t-s|^{\frac{1}{2}+}}\llp{\grad\phi}_{L^2_x}\llp{\phi}_{L^2_x}^{\frac{1}{3}+}\llp{\phi}^{\frac{5}{3}-}_{L^\infty_x}.\nonumber\\
&\lesssim\  \frac{1}{|t-s|^{\frac{1}{2}+}}\llp{\phi}^{\frac{5}{3}-}_{L^\infty_x}\nonumber
\end{align}
where the last inequality follows from conservation of mass. 
In the case $|t-s|<1$, we could simply take $k(t-s) = |t-s|^{\frac{1}{2}+}$. In the case $|t-s|\geq 1$, we interpolate estimates \eqref{est-dec-1} and \eqref{est-dec-2}.
\end{proof}

\begin{proof}[Proof of Proposition \ref{decay}]
 Let $\phi_0$ be sufficiently smooth and write (for $t>0$)
\begin{align}\label{phi-sol}
 \phi(t)= e^{it\lapl}\phi_0-i \left[\int^{\frac{t}{2}}_0+\int^{t}_{\frac{t}{2}}\right] d\tau\ e^{i(t-\tau)\lapl}(v_N\ast |\phi(\tau)|^2)\phi(\tau).
\end{align}
Taking the $L^\infty$ norm of \eqref{phi-sol} yields
\begin{align}
 \llp{\phi(t)}_{L^\infty_x} \lesssim \frac{\llp{\phi_0}_{L^1_x} }{t^{\frac{3}{2}}}+\left[ \int^{\frac{t}{2}}_0+\int^t_{\frac{t}{2}}\right] \llp{e^{i(t-\tau)\lapl}(v_N\ast|\phi|^2)\phi(\tau)}_{L^\infty_x}\ d\tau
\end{align}
where the first term is a consequence of the $L^\infty L^1$-decay estimate.
For the first part of the second term, we apply the $L^\infty L^1$-decay estimate, Young's convolution inequality, and conservation of mass to get
\begin{align*}
 \int^{\frac{t}{2}}_0 d\tau\ \llp{e^{i(t-\tau)\lapl}(v_N\ast|\phi|^2)\phi(\tau)}_{L^\infty_x}
&\leq \int^{\frac{t}{2}}_0d\tau\ \frac{\llp{(v_N\ast|\phi|^2)\phi(\tau)}_{L^1_x}}{|t-\tau|^{\frac{3}{2}}}\\
&\lesssim \frac{1}{t^{\frac{3}{2}}} \int^{\frac{t}{2}}_0  d\tau\ \llp{\phi(\tau)}_{L^\infty_x}.
\end{align*}
Lastly, by Lemma \ref{decay-1}, there exists $k \in L^1([0, \infty])$ and $\delta>0$ such that
\begin{align*}
 \int^t_{\frac{t}{2}} d\tau\ \llp{e^{i(t-\tau)\lapl}(v_N\ast|\phi|^2)\phi(\tau)}_{L^\infty_x}
\lesssim \int^t_{\frac{t}{2}} d\tau\ k(t-\tau) \llp{\phi(\tau)}^{1+\delta}_{L^\infty_x}.
\end{align*}
Combining all the estimates yields
\begin{align*}
 \llp{\phi(t)}_{L^\infty_x}\lesssim \frac{\llp{\phi_0}_{L^1_x}}{t^{\frac{3}{2}}}+\int^{\frac{t}{2}}_0 d\tau \frac{\llp{\phi(\tau)}_{L^\infty_x} }{t^{\frac{3}{2}}}+\int^t_{\frac{t}{2}} d\tau\ k(t-\tau) \llp{\phi(\tau)}^{1+\delta}_{L^\infty_x}
\end{align*}
which holds for all $t>0$. 

Since we care about large time behavior we may assume $t\geq 1$. 
In particular, we get the equivalent estimate
\begin{align}\label{estimate1}
 \llp{\phi(t)}_{L^\infty_x}\lesssim \frac{\llp{\phi_0}_{L^1_x}}{1+t^{\frac{3}{2}}}+ \int^{\frac{t}{2}}_0  d\tau\ \frac{\llp{\phi(\tau)}_{L^\infty_x}}{1+t^{\frac{3}{2}}}+\int^t_{\frac{t}{2}}d\tau\ k(t-\tau) \llp{\phi(\tau)}^{1+\delta}_{L^\infty_x}.
\end{align}
Multiply \eqref{estimate1} by $1+t^{\frac{3}{2}}$ yields
\begin{equation}\label{q1}
\begin{aligned}
(1+t^{\frac{3}{2}})\llp{\phi(t)}_{L^\infty_x}
\lesssim&\ \llp{\phi_0}_{L^1_x}+ \int^{\frac{t}{2}}_0 d\tau\ \llp{\phi(\tau)}_{L^\infty_x} \\
&\ +\sup_{\frac{t}{2} \leq \tau \leq t} (1+\tau^{\frac{3}{2}})\llp{\phi(\tau)}^{1+\delta}_{L^\infty_x}
\end{aligned}
\end{equation}
since $k \in L^1([0, \infty))$. Next, by Lemma \ref{decay-0}, there exists $T>0$ such that 
\begin{equation}\label{q2}
\begin{aligned}
(1+t^{\frac{3}{2}})\llp{\phi(t)}_{L^\infty_x}\leq&\  c\llp{\phi_0}_{L^1_x}+ c\int^{\frac{t}{2}}_0 d\tau\ \llp{\phi(\tau)}_{L^\infty_x}\\
&\ +\frac{1}{2}\sup_{\frac{t}{2} \leq \tau \leq t} (1+\tau^{\frac{3}{2}})\llp{\phi(\tau)}_{L^\infty_x}
\end{aligned}
\end{equation}
whenever $t \geq 2T$ for some constant $c>0$. 

Let us define the quantities
\begin{subequations}
\begin{align}
M(t) :=& \sup_{T \leq s \leq t} (1+s^{\frac{3}{2}})\llp{\phi(s)}_{L^\infty_x}\\
C_T :=& \sup_{0\leq  s \leq 2T} (1+s^{\frac{3}{2}})\llp{\phi(s)}_{L^\infty_x}.
\end{align} 
\end{subequations}
By definition, we see that  $M(t) \le C_T$ when  $T\le t \le 2T$. In the case $2T\le t$, it follows from \eqref{q2} that
\begin{align*}
 (1+t^{\frac{3}{2}})\llp{\phi(t)}_{L^\infty_x} \leq C_1+ c\int^{\frac{t}{2}}_T \frac{M(\tau)}{1+\tau^{\frac{3}{2}}}\ d\tau +\frac{1}{2} M(t).
\end{align*}
where $C_1$ depends on $T$.
Note we also have the estimate 
\begin{align}
 (1+s^{\frac{3}{2}})\llp{\phi(s)}_{L^\infty_x} \leq \max\left(C_1 + c\int^{\frac{t}{2}}_T \frac{M(\tau)}{1+\tau^{\frac{3}{2}}}\ d\tau +\frac{1}{2}M(t), C_T\right).
\end{align}
for all $T<s <t$.
Hence it follows
\begin{align}
 M(t) \leq \max\left(C_1+ c\int^{\frac{t}{2}}_T \frac{M(\tau)}{1+\tau^{\frac{3}{2}}}\ d\tau +\frac{1}{2}M(t), C_T\right)
\end{align}
for all $t \geq T$. Then, by Gronwall's inequality, we get that
\begin{align*}
 M(t) \lesssim \max\left(\exp\left(\int^t_0 \frac{d\tau}{1+\tau^{\frac{3}{2}}} \right), C_T\right) \lesssim 1.
\end{align*}
Thus, we have established the desired result
\begin{align*}
 \sup_{0 \leq s \leq t} (1+s^{\frac{3}{2}})\llp{\phi(s)}_{L^\infty_x} \lesssim \max(M(t), C_T) \lesssim 1.
\end{align*}
\end{proof}

\begin{cor}\label{high-deriv-decay}
Assume the conditions of Proposition \ref{decay}. Then there exists constants $C_1$ depending only on 
$\llp{\phi_0}_{W^{k, 1}}$  and $\llp{\bd_t\phi_0}_{W^{k, 1}}$ and $C_2$ depending only on $\llp{\phi_0}_{W^{k, 1}}$ such that
\begin{subequations}
\begin{align}
  \llp{\bd_t\phi(t,\cdot)}_{L^\infty_x} \le&\ \frac{C_1}{1+t^{\frac{3}{2}}},\\
 \llp{\bd \phi(t,\cdot)}_{L^\infty_x} \le&\ \frac{C_2}{1+t^{\frac{3}{2}}}.
\end{align}
\end{subequations}
In fact, by iterating the proof of the above estimates, we could show that similar decay estimates hold for $\bd_t^j\bd^\alpha \phi$ for arbitrary $\alpha$ and $j$ provided the data is sufficiently smooth. 
\end{cor}

\begin{proof}
We begin by taking the one partial derivative (spatial or time) of (\ref{htnls})
\begin{align*}
 \frac{1}{i}\frac{\bd}{\bd t} \bd \phi - \lapl \bd \phi +\bd(v_N\ast |\phi|^2)\phi = 0.
\end{align*}
Applying the $L^\infty L^1$-decay estimate yields $(t\geq 1)$
\begin{align*}
 \llp{\bd_t\phi(t)}_{L^\infty_x} \lesssim&\ \frac{\llp{\bd \phi_0}_{L^1_x}}{t^{\frac{3}{2}}}+ \int^{\frac{t}{2}}_0 d\tau\ \llp{e^{i(t-\tau)\lapl}\bd(v_N\ast |\phi|^2)\phi(\tau)}_{L^\infty_x}\ \\
&\ +  \int^{t}_{\frac{t}{2}}d\tau\ \llp{e^{i(t-\tau)\lapl}\bd(v_N\ast |\phi|^2)\phi(\tau)}_{L^\infty_x}\ .
\end{align*}
For the first integral, we again apply the $L^\infty L^1$-decay estimate, Proposition \ref{sobolev}, and Proposition 
\ref{decay} to get
\begin{align*}
\int^{\frac{t}{2}}_0 d\tau\ \llp{e^{i(t-\tau)\lapl}&\bd(v_N\ast |\phi|^2)\phi(\tau)}_{L^\infty_x}
\lesssim \int^{\frac{t}{2}}_0d\tau\ \frac{\llp{\bd(v_N\ast |\phi|^2)\phi(\tau)}_{L^1_x}}{|t-\tau|^{\frac{3}{2}}}\\
\lesssim&\ \frac{1}{1+t^{\frac{3}{2}}}\int^{\frac{t}{2}}_0 d\tau\ \llp{\phi(\tau)}_{L^\infty_x}\llp{\phi(\tau)}_{L^2_x}\llp{\bd\phi(\tau)}_{L^2_x}\\
\lesssim&\ \frac{1}{1+t^{\frac{3}{2}}}\int^{\frac{t}{2}}_0 d\tau\ \frac{d \tau}{1+\tau^{\frac{3}{2}}} \lesssim \frac{1}{1+t^{\frac{3}{2}}}.
\end{align*}
\

For the second integral, we use Sobolev embedding and $L^{3+}L^{\frac{3}{2}-}$--decay estimate to obtain the bound
\begin{align*}
 &\int^t_{\frac{t}{2}} d\tau\ \llp{e^{i(t-\tau)\lapl}\bd[(v_N\ast |\phi|^2)\phi(\tau)]}_{L^\infty_x}\\
  &\lesssim\ \int^t_{\frac{t}{2}} d\tau\ \llp{\grad_x e^{i(t-\tau)\lapl}\bd[(v_N\ast |\phi|^2)\phi(\tau)]}_{L^{3+}_x}\\
 &\lesssim\ \int^{t}_{\frac{t}{2}}d\tau\ \frac{1}{|t-\tau|^{\frac{1}{2}+}} \llp{\grad\bd\phi}_{L^2_x}\llp{\phi}_{L^2_x}
\llp{\phi}^{\frac{5}{3}-}_{L^\infty_x}\\ 
&\ \ \ +\int^t_{\frac{t}{2}}d\tau\ \frac{1}{|t-\tau|^{\frac{1}{2}+}} \llp{\bd\phi}_{L^2_x}^{\frac{1}{3}+}\llp{\bd\phi}_{L^\infty_x}^{\frac{2}{3}-}\llp{\grad\phi}_{L^2_x}
\llp{\phi}_{L^\infty_x}.
\end{align*}
Note the last inequality is a consequence of H\"older inequalities and space interpolation. Then, by Proposition \ref{sobolev} and Propostion \ref{decay}, it follows that
\begin{align*}
  &\int^t_{\frac{t}{2}} \llp{e^{i(t-\tau)\lapl}\bd(v_N\ast |\phi|^2)\phi(\tau)}_{L^\infty_x}\ d\tau
\lesssim \int^{t}_{\frac{t}{2}}\frac{1}{|t-\tau|^{\frac{1}{2}+}}\llp{\phi(\tau)}_{L^\infty_x}\ d\tau\\
&\lesssim\ \frac{1}{1+t^{\frac{3}{2}}} \int^{t}_{\frac{t}{2}} \frac{1}{|t-\tau|^{\frac{1}{2}+}}\ d\tau \lesssim \frac{1}{1+t^{\frac{3}{2}}}.
\end{align*}
\end{proof}
Applying linear interpolation, we get the following corollary. 
\begin{cor}\label{L3-L4}
Assume the conditions of Proposition \ref{decay}. Then there exists a constant $C$ depending only on 
$\llp{\phi_0}_{W^{k, 1}}$  and $\llp{\bd_t\phi_0}_{W^{k, 1}}$ for $k$ sufficiently large such that
\begin{subequations}
\begin{align}
\llp{\bd^\alpha\phi(t, \cdot)}_{L^4_x}+\llp{\bd_t\phi(t, \cdot)}_{L^4_x} \leq&\ \frac{C}{1+t^\frac{3}{4}},\\
\llp{\bd^\alpha\phi(t, \cdot)}_{L^3_x}+\llp{\bd_t\phi(t, \cdot)}_{L^3_x} \leq&\ \frac{C}{1+t^\frac{1}{2}}.
\end{align}
\end{subequations}
\end{cor}

\section{Estimates for the Pair Excitations}

The major results of this section are Proposition \ref{pexc-1} and Proposition \ref{s2-collapsing}. Let us define the shorthand notation $\ch (k)= c_1= \delta +p_1, \sh (k)=s_1=u$, and also $\ch(2k):= c_2= \delta+p_2, \sh (2k):=s_2$. Let us also recall the equations for
$\sh(2k)$ and $\ch(2k)$
\begin{subequations}\label{s2-system}
\begin{align}
&\vect{S}(s_2) =\ 2m+m\circ p_2 +\overline{p_2}\circ m, \label{s2-equation}\\
&\vect{W}(\overline{p_2})=\ m\circ \overline{s_2}- s_2\circ \overline{m}\\
&\text{with } s_2(0, \cdot) =s_{2, 0}\ p_2(0, \cdot) = p_{2, 0} \nonumber
\end{align}
\end{subequations}
where $m(t, x, y) = -v_N(x-y)\phi(t, x)\phi(t, y)$. 

\begin{prop}\label{pexc-1}
Assume $\phi_0 \in W^{k, 1}$ for $k$ sufficiently large. Then we have the following estimates
\begin{subequations}
\begin{align}
\llp{\grad_{x+y}^js_2(t, \cdot)}_{L^2_{x, y}}\lesssim&\ \llp{\grad_{x+y}^js_{2, 0}}_{L^2_{x, y}}+ \llp{\grad_{x+y}^jp_{2, 0}}_{L^2_{x, y}}\ \text{ for } j\ge 0, \label{s2-est}\\
\llp{\grad_{x+y}^jp_2(t, \cdot)}_{L^2_{x, y}} \lesssim&\  \llp{\grad_{x+y}^js_{2, 0}}_{L^2_{x, y}}+ \llp{\grad_{x+y}^jp_{2, 0}}_{L^2_{x, y}}\  \text{ for } j\ge 0, \label{p2-est}\\
\sup_x\llp{s_2(t, x, \cdot)}_{L^2_y} \lesssim&\  \sum^2_{j=0}(\llp{\grad_{x+y}^js_{2, 0}}_{L^2_{x, y}} +\llp{\grad_{x+y}^jp_{2, 0}}_{L^2_{x, y}})\label{L2Linfty}
\end{align}
\end{subequations}
where each estimate only depends on $\llp{\phi_0}_{W^{k, 1}}$ and independent of $N$.  Here, we have adopted the notation $\grad_{x\pm y} := \grad_x\pm \grad_y$.
\end{prop}

As an immediate corollary of Proposition \ref{pexc-1}, we have the useful result.
\begin{cor}\label{main-cor}
We have the following uniform in time estimates
\begin{subequations}
\begin{align}
\llp{s_1(t, \cdot, \cdot)}_{L^2_{x, y}} +\sup_x\llp{s_1(t, x, \cdot)}_{L^2_{y}} \lesssim&\ 1\\
\llp{p_1(t, \cdot, \cdot)}_{L^2_{x, y}} 
+
\sup_x\llp{p_1(t, x, \cdot)}_{L^2_{y}} \lesssim&\ 1 \label{p1-est}
\end{align}
where the estimates only depend on the initial data. 
Moreover, by interpolation, we also have the estimate 
\begin{align}
\Lp{\llp{s_1(t, \cdot, \cdot)}_{L^2_{y}} }{L^4_x}+\Lp{\llp{p_1(t, \cdot, \cdot)}_{L^2_{y}} }{L^4_x} \lesssim 1.
\end{align}
\end{subequations}
\end{cor}

\begin{proof}
See proof of Corollary 5.3 in \cite{Chong}.
\end{proof}

\begin{prop}\label{s2-collapsing}
Let $s_2, p_2$ be smooth solutions to \eqref{s2-system}. Then, for every  fixed $\varepsilon>0$, we have the estimates
\begin{subequations}
\begin{align}
&\llp{\langle \grad\rangle s_2}_{L^\infty_tL^2_{x, y}} 
\lesssim\  \llp{s_{2, 0}}_{\dot H^1(\rr^6)}+\llp{p_{2, 0}}_{\dot H^1(\rr^6)}+ N^{\frac{\beta}{2}(1+2\varepsilon)}\label{L2-deriv-s2}\\
&\llp{\langle \grad\rangle p_2}_{L^\infty_tL^2_{x, y}} 
\lesssim\   \llp{s_{2, 0}}_{\dot H^1(\rr^6)}+\llp{p_{2, 0}}_{\dot H^1(\rr^6)}+  N^{\frac{\beta}{2}(1+2\varepsilon)} \label{L2-deriv-p2}\\
&\sup_z\llp{s_2(t, x, x+z)}_{L^2_t([0, T])L^2_x} \lesssim\ C_0(T, N)\label{s2-collapsing-est}  
\end{align}
\end{subequations}
where 
\begin{align}
C_0(T, N) :=  N^{\beta (1+\varepsilon)} +\left(\llp{s_{2, 0}}_{\dot H^\frac{1}{2}(\rr^6)}+N^{\frac{\beta}{4}(1+2\varepsilon)}\right)\sqrt{T}
\end{align}
and $\langle \cdot \rangle= \sqrt{1+|\cdot|^2}$ is the standard bracket notation. The $\grad$ in \eqref{L2-deriv-s2} and \eqref{L2-deriv-p2} refers to either
$x$ or $y$ derivative. 
\end{prop}

\begin{cor}\label{s1-collapsing}
We have the estimates
\begin{subequations}
\begin{align}
&\llp{\langle \grad\rangle s_1(t, \cdot, \cdot)}_{L^\infty_tL^2_{x, y}} 
\lesssim\ \llp{s_{2, 0}}_{\dot H^1(\rr^6)}+\llp{p_{2, 0}}_{\dot H^1(\rr^6)}+ N^{\frac{\beta}{2}(1+2\varepsilon)}\label{L2-deriv-s1}\\
&\llp{\langle \grad\rangle p_1(t, \cdot, \cdot)}_{L^\infty_tL^2_{x, y}} 
\lesssim\ \llp{s_{2, 0}}_{\dot H^1(\rr^6)}+\llp{p_{2, 0}}_{\dot H^1(\rr^6)}+ N^{\frac{\beta}{2}(1+2\varepsilon)} \label{L2-deriv-p1}\\
&\sup_z\llp{s_1(t, x, x+z)}_{L^2_t([0, T])L^2_x} \lesssim\  C_0(T, N). \label{s1-collapsing-est} 
\end{align}
\end{subequations}
\end{cor}

\begin{proof}
Recall that $s_1= \frac{1}{2}s_2\circ c^{-1}$, then it follows
\begin{align*}
\llp{\grad_xs_1}_{L^2_{x, y}} \le \frac{1}{2}\llp{c^{-1}_1}_\text{op}\llp{\grad_xs_2}_{L^2_{x,y}} 
\le C\llp{\grad_xs_2}_{L^2_{x,y}}
\end{align*}
since $c^{-1}$ is a bounded operator. 

Using the identity $\bar s_1\circ s_1 = p_1\circ p_1 + 2p_1$ then it follows
\begin{align}
\overline{ \grad_xs_1}\circ \grad_ys_1 = \grad_x p_1\circ \grad_yp_1 + 2\grad_x\grad_y p_1.
\end{align}
Note that $\grad_x\grad_y p_1$ is a positive trace class operator, then it follows
\begin{align}
\llp{\grad_x p_1}_{L^2_{x, y}} \le \llp{\grad_x s_1}_{L^2_{x, y}}.
\end{align}

Since $s_2 = 2s_1+s_1\circ p_1$, then it follows
\begin{align*}
&\llp{s_1(x, x+z)}_{L^2_t([0, T])L^2_x}\\
 &\le \llp{s_2(x, x+z)}_{L^2_t([0, T])L^2_x}+\llp{(s_1\circ p_1)(x, x+z)}_{L^2_t([0, T])L^2_x}.
\end{align*}
By \eqref{s2-collapsing-est} and Corollary \ref{main-cor}, we have 
\begin{align*}
&\llp{(s_1\circ p_1)(x, x+z)}_{L^2_t([0, T])L^2_x}^2\\
&\lesssim \int^T_0 dt\ \llp{s_1(t)}_{L^2_{x, y}}^2\llp{p_1(t)}_{L^2_{x, y}}^2 \lesssim T.
\end{align*}
Then the result follows. 
\end{proof}

\begin{remark}
Let us make a comment on the initial condition $k_0$. Following \cite{BdOS}, we expect the correlation structure of the ground state of the many-body system
is well approximated by the pair excitation function $k_0(x, y) = -N w(N(x-y))\phi_0(x)\phi_0(y)$
where $1-w(N x)$  is the solution to the  zero-energy scattering equation for the rescaled potential $N^2v(Nx)$. The function $w(x)$ is smooth near the origin and behaviors like $a_0 |x|^{-1}$ where $a_0$ is the scattering length
of $v$. In particular, by the Hardy-Littlewood-Sobolev inequality, we have that
\begin{align}
\int dxdy\ |k_0(x, y)|^2 \leq C_1 \int dxdy\ \frac{|\phi_0(x)|^2|\phi_0(y)|^2}{|x-y|^2} \le  C_2\llp{|\grad|^\frac{1}{2}\phi_0}_{L^2_x}^4
\end{align}
which means $\llp{\sh(k_0)}_{L^2(dxdy)} \leq C$; the same is true for $p_1(0)$. This should be compare with the conditions $(7)$ given in Theorem 1 of \cite{NaNa3}.
\end{remark}

\subsection{Proof of Proposition \ref{pexc-1}} To prove Proposition \ref{pexc-1}, we begin by proving a few preliminary lemmas.
\begin{lemma}\label{m-est}
Let $M(t, x, y) := -v_N(x-y)f(t, x) g(t, y)$. Then  we have the following estimates
\begin{subequations}
\begin{align}\label{1st-est}
 &\int \frac{|\widehat{M}(t, \xi, \eta)|^2}{(|\xi|^2+|\eta|^2)^2}\ d\xi d\eta \lesssim \llp{f(t, \cdot)}_{L^3_x}^2\llp{g(t, \cdot)}_{L^3_x}^2\\
 &\int_{|\xi-\eta|>1} \frac{|\bd_t\widehat{M}(t, \xi, \eta)|^2}{(|\xi|^2+|\eta|^2)^2}\ d\xi d\eta \label{2nd-est}\\
  &\lesssim \llp{\bd_tf(t, \cdot)}_{L^4_x}^2 \llp{g(t, \cdot)}_{L^4_x}^2+\llp{f(t, \cdot)}_{L^4_x}^2 \llp{\bd_tg(t, \cdot)}_{L^4_x}^2. \nonumber
\end{align}

\end{subequations}
\end{lemma}

\begin{proof}
The proof of \eqref{1st-est} can be found in \cite{GM2}. We shall focus on the proof of the second estimate. First, observe
\begin{align}
v_N(x-y)f(x)g(y)  = \int dz\ \delta(x-y-z) v_N(z) f(x)g(y)
\end{align}
then the Fourier transform of $\delta(x-y-z)f(x)g(y)$ is given by
\begin{equation}
\begin{aligned}
&\int dxdy\ e^{-i(x\cdot \xi+y\cdot \eta)}\delta(x-y-z)f(x)g(y)\\
&= e^{iz\cdot \eta}\int dx\ e^{-ix\cdot(\xi+\eta)}f(x)g(x-z) = e^{iz\cdot\eta}\widehat{f g_z}(t, \xi+\eta).
\end{aligned}
\end{equation}
In particular, it follows that
\begin{align*}
|\bd_t\widehat{M}(t, \eta, \xi)|^2 =&\ \left| \int dz\ e^{iz\cdot\eta}v_N(z)\widehat{\bd_t(fg_z)}(t, \xi+\eta) \right|^2\\
\lesssim&\ \llp{v}_{L^1_x} \int dz\  |v_N(z)| |\widehat{\bd_t(fg_z)}(t, \xi+\eta)|^2.
\end{align*}
Finally, we have that
\begin{align*}
\int_{|\xi-\eta|>1} \frac{|\bd_t\widehat{M}(t, \eta, \xi)|^2}{(|\eta|^2+|\xi|^2)^2}\ d\eta d\xi \lesssim&\  \int |v_N(z)|
\int_{|\xi-\eta|>1}\frac{|\widehat{\bd_t(fg_z)}(t, \eta+\xi)|^2}{(|\eta|^2+|\xi|^2)^2}\ d\eta d\xi dz\\
\lesssim&\  \int |v_N(z)|\int_{|\eta'|>1}\frac{|\widehat{\bd_t(fg_z)}(t, \xi')|^2}{(|\eta'|^2+|\xi'|^2)^2}\ d\eta' d\xi' dz\\
\lesssim&\ \int |v_N(z)| |\widehat{\bd_t(fg_z)}(t, \xi')|^2\ d\xi' dz\\
 \lesssim&\ \llp{\bd_t f(t, \cdot)}_{L^4_x}^2\llp{g(t, \cdot)}_{L^4_x}^2+\text{ sym. term}.
\end{align*}
\end{proof}

\begin{lemma}\label{s0-lem}
Let $s^0_a$ be the solution to 
\begin{align}\label{s0-eq}
\left( \frac{1}{i}\frac{\bd}{\bd t} - \lapl_x-\lapl_y\right)s_a^0(t, x, y) = 2m(t, x, y),\ \ \ s^0_a(0, \cdot) = 0.
\end{align}
Then we have the uniform in $t$ estimate
\begin{align}\label{L2-s0-est}
\Lp{s^0_a(t, \cdot)}{L^2_{x, y}} \lesssim 1
\end{align}
where the estimate only depends on $\llp{\phi_0}_{W^{k, 1}}$. Similar estimates holds for $\grad_{x+y}^j s_a^0$. 
\end{lemma}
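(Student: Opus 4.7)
The plan is to solve the equation for $s^0_a$ explicitly via Duhamel's formula in Fourier variables, then integrate by parts in time against the oscillating phase to trade dispersion for regularity, and finally combine the two Fourier-weighted bounds from the preceding lemma with the uniform $L^\infty_x$-decay established in Proposition \ref{decay}. Concretely, since the symbol of $\frac{1}{i}\partial_t - \Delta_x - \Delta_y$ on the Fourier side is $|\xi|^2+|\eta|^2$, Duhamel gives
\begin{align*}
\hat s^0_a(t, \xi, \eta) = 2i \int_0^t e^{-i(t-\tau)(|\xi|^2+|\eta|^2)}\hat m_N(\tau, \xi, \eta)\, d\tau,
\end{align*}
and a straightforward integration by parts in $\tau$ yields
\begin{align*}
\hat s^0_a(t,\xi,\eta) = \frac{2\hat m_N(t,\xi,\eta)}{|\xi|^2+|\eta|^2} - \frac{2 e^{-it(|\xi|^2+|\eta|^2)}\hat m_N(0,\xi,\eta)}{|\xi|^2+|\eta|^2} - \frac{2}{|\xi|^2+|\eta|^2}\int_0^t e^{-i(t-\tau)(|\xi|^2+|\eta|^2)}\partial_\tau \hat m_N(\tau,\xi,\eta)\, d\tau.
\end{align*}

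The two boundary terms can be estimated directly in $L^2_{\xi,\eta}$ by the first bound of the preceding lemma, namely $\int |\hat m_N(t)|^2/(|\xi|^2+|\eta|^2)^2\, d\xi d\eta \lesssim \|\phi(t)\|_{L^3_x}^4$; since $\dot H^{1/2}_x\hookrightarrow L^3_x$ in three dimensions and Proposition \ref{global} gives $\|\phi\|_{L^\infty_t\dot H^{1/2}_x}\lesssim 1$ uniformly in $N$, the two boundary contributions are uniformly bounded. For the remaining integral, Minkowski's inequality in $L^2_{\xi,\eta}$ reduces the task to controlling
\begin{align*}
\int_0^t \left\|\frac{\partial_\tau \hat m_N(\tau,\xi,\eta)}{|\xi|^2+|\eta|^2}\right\|_{L^2_{\xi,\eta}} d\tau.
\end{align*}
Splitting the spatial Fourier domain into $|\xi-\eta|>1$ and $|\xi-\eta|\le 1$, the first region is handled by the second bound of the preceding lemma, which gives an integrand $\lesssim \|\partial_\tau \phi(\tau)\|_{L^4_x}\|\phi(\tau)\|_{L^4_x}$. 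Interpolating the conservation $\|\phi\|_{L^2_x}=1$ with the decay $\|\phi(t)\|_{L^\infty_x}\lesssim (1+t)^{-3/2}$ and its analog for $\partial_t\phi$ from the corollary to Proposition \ref{decay} yields $\|\phi(t)\|_{L^4_x},\|\partial_t\phi(t)\|_{L^4_x}\lesssim (1+t)^{-3/4}$, so the integrand decays like $(1+\tau)^{-3/2}$ and is integrable in $\tau$ uniformly in $t$ and $N$.

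The main difficulty lies in the complementary low-frequency region $|\xi-\eta|\le 1$, where $|\xi|^2+|\eta|^2$ can be small and the factor produced by the integration by parts is not locally integrable. I would handle this region by a separate argument that mirrors the proof of the first estimate of the preceding lemma, replacing one factor of $\phi$ by $\partial_t\phi$: writing $\partial_t m_N = -v_N(x-y)(\partial_t\phi(x)\phi(y)+\phi(x)\partial_t\phi(y))$ and tracking the same Fourier manipulation (passing through the $L^1$ bound on $v$) produces a bound $\|\partial_t\phi(\tau)\|_{L^3_x}\|\phi(\tau)\|_{L^3_x}$, which, combined with the decay of $\phi$ and $\partial_t\phi$ interpolated into $L^3_x$, is once again integrable in $\tau$. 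Putting all three contributions together gives $\|s^0_a(t)\|_{L^2_{x,y}}\lesssim 1$ uniformly in $t$ and $N$, with the implicit constant depending only on $\|\phi_0\|_{W^{k,1}}$ for $k$ sufficiently large. The main obstacle, as just noted, is this low-frequency region: all of the substance of the argument reduces to showing that the Fourier-weighted bound on $\partial_t\hat m_N$ extends (possibly with a different combination of norms of $\phi,\partial_t\phi$) to the full frequency space, rather than only to $|\xi-\eta|>1$.
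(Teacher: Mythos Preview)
Your overall strategy---Duhamel, integrate by parts against the phase, use the two weighted Fourier bounds from the preceding lemma, and close with the $L^\infty$ decay---is the right framework, but there is a genuine gap in the low-frequency step. You claim that the integrand $\lVert\partial_\tau\phi(\tau)\rVert_{L^3_x}\lVert\phi(\tau)\rVert_{L^3_x}$ is integrable in $\tau$; it is not. Interpolating $L^2$ conservation with the $L^\infty$ decay $(1+t)^{-3/2}$ gives $\lVert\phi(t)\rVert_{L^3_x}\lesssim(1+t)^{-1/2}$ and likewise for $\partial_t\phi$, so the product is only $(1+\tau)^{-1}$ and the time integral diverges logarithmically. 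In fact any H\"older split of $\lVert\partial_t(\phi\phi_z)\rVert_{L^{3/2}}$ into $L^p\times L^q$ with $\tfrac1p+\tfrac1q=\tfrac23$ yields exactly this same $(1+\tau)^{-1}$ rate, so the obstruction is intrinsic to the $\dot H^{-1/2}$ weight produced by your global integration by parts; it cannot be repaired by a different choice of exponents.

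The paper sidesteps this by reversing the order of your two operations: it splits the Fourier domain into $|\xi-\eta|\le 1$ and $|\xi-\eta|>1$ \emph{before} integrating by parts, and performs the integration by parts only on the high-frequency piece. On the low-frequency piece one applies Minkowski directly to the Duhamel integral, with no weight at all: after the rotation $(\xi,\eta)\mapsto(\xi+\eta,\xi-\eta)=(\xi',\eta')$ the constraint becomes $|\eta'|\le1$, which has finite measure, so
\begin{align*}
\int_{|\xi-\eta|\le 1}|\widehat m_N(s,\xi,\eta)|^2\,d\xi d\eta \lesssim \int |v_N(z)|\int_{|\eta'|\le 1}|\widehat{\phi\phi_z}(s,\xi')|^2\,d\xi' d\eta'\,dz \lesssim \lVert\phi(s)\rVert_{L^4_x}^4.
\end{align*}
This puts the integrand at $\lVert\phi(s)\rVert_{L^4_x}^2\lesssim(1+s)^{-3/2}$, which \emph{is} integrable. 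The high-frequency piece then proceeds exactly as you described: boundary terms via the first estimate of the preceding lemma, and the time-integral term via the second estimate, giving $\lVert\partial_t\phi\rVert_{L^4}\lVert\phi\rVert_{L^4}\lesssim(1+\tau)^{-3/2}$. The moral is that the integration by parts is only needed where the phase is genuinely oscillatory; on the set $|\xi-\eta|\le1$ the smallness of the region already does the work, and introducing the singular weight there costs you exactly the half-power of decay that you need.
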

\begin{proof}
By Duhamel's principle, we have that
\begin{align*}
&\Lp{s^0_a(t, \cdot)}{L^2_{x, y}} =\ 2\Lp{\int^t_0e^{i(t-s)\lapl}m(s, \cdot)\ ds}{L^2_{x, y}}\\
&\lesssim\ \Lp{P_{|\xi-\eta|\leq 1}\int^t_0e^{i(t-s)\lapl}m(s, \cdot)\ ds}{L^2_{x, y}}+\Lp{P_{|\xi-\eta|> 1}\int^t_0e^{i(t-s)\lapl}m(s, \cdot)\ ds}{L^2_{x, y}}.
\end{align*}
Here, $P_A$ is a Littlewood-Paley projection (convolution) operator; more precisely, $\mathcal{F}(P_A f)(\xi) = \chi_A(\xi)\hat f(\xi).$ For the first term, we apply Minkowski's inequality to get
\begin{align*}
 &\bigg\|P_{|\xi-\eta|\leq 1}\int^t_0ds\ e^{i(t-s)\lapl}m(s, \cdot)\bigg\|_{L^2_{x, y}}\lesssim  \int^t_0 ds\ \left[\int_{|\xi-\eta|\leq 1}d\xi d\eta\ |\widehat m(s, \xi, \eta)|^2\right]^{\frac{1}{2}}\\
&\lesssim\ \int^t_0 ds\ \left[\int dz\ |v_N(z)|\int_{|\eta'|\leq 1} d\xi' d\eta' \ |\widehat{\phi\phi_z}(s,\xi')|^2\right]^{\frac{1}{2}}\lesssim \int^t_0 ds\ \llp{\phi(s, \cdot)}_{L^4_x}^2.
\end{align*}
By Corollary \ref{L3-L4}, we see that the first term is uniformly bounded in time. For the second term, we have
\begin{align*}
&\bigg\|P_{|\xi-\eta|> 1}\int^t_0ds\ e^{i(t-s)\lapl}m(s, \cdot)\bigg\|_{L^2_{x, y}} \\
&=\  \Lp{\chi_{|\xi-\eta|> 1}\int^t_0ds\ \bd_s e^{i(t-s)(|\eta|^2+|\xi|^2)}\frac{\widehat m(s,\xi, \eta)}{|\eta|^2+|\xi|^2}}{L^2_{x, y}}\\
&\lesssim\ \Lp{\frac{\widehat m(0,\xi, \eta)}{|\eta|^2+|\xi|^2}}{L^2_{\xi,\eta}} +\Lp{\frac{\widehat m(t, \xi, \eta)}{|\eta|^2+|\xi|^2}}{L^2_{\xi, \eta}}\\
&\quad +\Lp{\chi_{|\xi-\eta|>1}\int^t_0ds\ e^{i(t-s)(|\eta|^2+|\xi|^2)}
\frac{\bd_s\widehat m(s, \xi, \eta)}{|\eta|^2+|\xi|^2}}{L^2_{\xi, \eta}}.
\end{align*}
By Lemma \ref{m-est}, we see that the first two terms are bounded. For the last term, using Minkowski's and Lemma  \ref{m-est}, we have that
\begin{align*}
&\bigg\|\chi_{|\xi-\eta|>1}\int^t_0ds\ e^{i(t-s)(|\eta|^2+|\xi|^2)}
\frac{\bd_s\widehat m(s, \eta, \xi)}{|\eta|^2+|\xi|^2}\bigg\|_{L^2_{\xi,\eta}} \\
&\lesssim \int^t_0 ds\ \llp{\bd_t\phi(s, \cdot)}_{L^4_x}\llp{\phi(s, \cdot)}_{L^4_x}.
\end{align*}
By Corollary \ref{L3-L4}, the second term is also bounded uniformly in time. 

Since $[\grad_{x+y}, v_N(x-y)] = 0$, then we have the equation
\begin{align}
\left( \frac{1}{i}\frac{\bd}{\bd t} - \lapl_x-\lapl_y\right)\grad_{x+y}^js_a^0 =&\ -2v_N(x-y)\grad_{x+y}^j(\phi(x)\phi(y)).
\end{align}
Now, repeat the above argument yields the desired result.
\end{proof}

The following lemma will help us handle the ``potential" $V$.
\begin{lemma}\label{bounded-pot-lem}
Recall the definition of the potential $V(u)= ((v_N\ast |\phi|^2)(x)+(v_N\ast |\phi|^2)(y))u- (v_N\bar\phi\otimes\phi)\circ u-u\circ (v_N\bar\phi\otimes\phi)$. Let us also denote 
\begin{subequations}
\begin{equation}
\begin{aligned}
(\bd^k_x V)(u):=&\  ((v_N\ast \bd^k|\phi|^2)(x)+(v_N\ast |\phi|^2)(y))u\\
&\ - v_N\Pi^k(\bar\phi\otimes \phi)\circ u-u\circ v_N\Pi^k(\bar\phi\otimes \phi)
\end{aligned}
\end{equation}
and likewise for $(\bd^k_y V)(u)$ where $k\ge 0$ and 
\begin{align}
\Pi^k (\bar\phi\otimes \phi):= \sum^k_{j=0} \binom{k}{j} \overline{\bd^j\phi}\otimes \bd^{k-j}\phi.
\end{align} 
\end{subequations}
Then $\bd^kV(\cdot): L^2(\rr^6) \rightarrow L^2(\rr^6)$ is
a bounded operator and there exists a $C_k$, depending only on $k$ and independent of $N$, such that
\begin{align}
\llp{\bd^kV(u)}_{L^2_{x, y}} \le \frac{C}{1+t^3}\llp{u}_{L^2_{x, y}}.
\end{align}
\end{lemma}

\begin{proof}
By Young's inequality and Corollary \ref{high-deriv-decay}, we see that
\begin{align*}
\llp{(v_N\ast \bd^k|\phi|^2)(x) u}_{L^2_{x, y}} \le&\ \llp{v}_{L^1_x}\llp{ \bd^k|\phi|^2}_{L^\infty_x} \llp{u}_{L^2_{x, y}}
\le\ \frac{C}{1+t^3} \llp{u}_{L^2_{x, y}}.
\end{align*}
Similarly, for the other term, we have that 
\begin{align*}
&\llp{ v_N\Pi^k (\bar\phi\otimes \phi)\circ u}_{L^2_{x, y}}\\
&\le\ \sum^k_{j=0}\binom{k}{j}\int dz\ |v_N(z)| \Lp{\overline{\bd^j\phi(x)}\bd^{k-j}\phi(x-z)u(x-z, y)}{L^2_{x, y}}\\ 
 &\le\ \llp{v}_{L^1_x}\sum^k_{j=0}\binom{k}{j}\llp{\bd^j\phi}_{L^\infty_x}\llp{\bd^{k-j}\phi}_{L^\infty_x} \llp{u}_{L^2_{x, y}}\le\ \frac{C}{1+t^3} \llp{u}_{L^2_{x, y}}.
\end{align*}
\end{proof}

\begin{lemma}\label{sa-lem}
 Let $s_a$ be a solution to 
\begin{align}\label{sa-eq}
 \vect{S}(s_a) = 2m(t, x, y), \ \ s_a(0, \cdot) = 0.
\end{align}
Then 
\begin{align}\label{L2-sa-est}
 \Lp{s_a(t, \cdot)}{L^2_{x, y}} \lesssim 1
\end{align}
where the estimate depends only on the $\llp{\phi_0}_{W^{k, 1}}$. Similar estimates holds for $\grad_{x+y}^j s_a$.
\end{lemma}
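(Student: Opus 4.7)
The plan is to treat $s_a^0$ as the ``free'' version of $s_a$ and absorb the zeroth-order part of $g_N$ as a perturbation via Duhamel and Gronwall. Write $g_N = -\lapl \cdot \delta + V_N$ where $V_N$ is the kernel
\begin{align*}
V_N(t,x,y) := (v_N \ast |\phi|^2)(t,x)\delta(x-y) + v_N(x-y)\bar\phi(t,x)\phi(t,y),
\end{align*}
and set $L(s)(x,y) := (V_N^T \circ s)(x,y) + (s \circ V_N)(x,y)$. Then $\vect{S}(s_a) = 2m_N$ rewrites as
\begin{align*}
\left(\frac{1}{i}\bd_t - \lapl_x - \lapl_y\right) s_a = 2m_N - L(s_a), \qquad s_a(0,\cdot)=0.
\end{align*}

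The first step is to verify that $L$ acts boundedly on $L^2_{x,y}$ with a time-dependent norm controlled by $\|\phi(t,\cdot)\|_{L^\infty_x}^2$. The multiplicative pieces contribute terms of the form $(v_N \ast |\phi|^2)(t,x)\, s_a(x,y)$, whose $L^2_{x,y}$ norm is bounded by $\|v_N \ast |\phi|^2\|_{L^\infty_x} \|s_a\|_{L^2_{x,y}} \leq \|v\|_{L^1_x}\|\phi(t,\cdot)\|_{L^\infty_x}^2 \|s_a\|_{L^2_{x,y}}$. The exchange pieces are of the form $\int v_N(x-z)\bar\phi(t,x)\phi(t,z) s_a(z,y)\, dz$, which by Young's convolution inequality applied in the $x$-variable satisfy
\begin{align*}
\Lp{\int v_N(x-z)\bar\phi(t,x)\phi(t,z)s_a(z,y)\,dz}{L^2_{x,y}} \lesssim \|v\|_{L^1_x}\|\phi(t,\cdot)\|_{L^\infty_x}^2 \|s_a(t,\cdot)\|_{L^2_{x,y}}.
\end{align*}
The analogous estimates hold for $s_a \circ V_N$. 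Combining, we obtain $\|L(s_a(t,\cdot))\|_{L^2_{x,y}} \lesssim \|\phi(t,\cdot)\|_{L^\infty_x}^2 \|s_a(t,\cdot)\|_{L^2_{x,y}}$.

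Next, by Duhamel's formula and the unitarity of $e^{it(\lapl_x+\lapl_y)}$ on $L^2_{x,y}$,
\begin{align*}
s_a(t,\cdot) = s_a^0(t,\cdot) - i\int_0^t e^{i(t-\tau)(\lapl_x+\lapl_y)} L(s_a(\tau,\cdot))\, d\tau.
\end{align*}
Invoking the previous lemma to bound $\|s_a^0(t,\cdot)\|_{L^2_{x,y}} \lesssim 1$, taking $L^2_{x,y}$ norms and applying the estimate on $L$ yields
\begin{align*}
\|s_a(t,\cdot)\|_{L^2_{x,y}} \lesssim 1 + \int_0^t \|\phi(\tau,\cdot)\|_{L^\infty_x}^2 \|s_a(\tau,\cdot)\|_{L^2_{x,y}}\, d\tau.
\end{align*}

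Finally, Proposition \ref{decay} gives $\|\phi(\tau,\cdot)\|_{L^\infty_x}^2 \lesssim (1+\tau^{3/2})^{-2}$, which is integrable on $[0,\infty)$ with a bound depending only on $\|\phi_0\|_{W^{k,1}_x}$. Gronwall's inequality then produces $\|s_a(t,\cdot)\|_{L^2_{x,y}} \lesssim \exp\!\left(C\int_0^\infty (1+\tau^{3/2})^{-2}\,d\tau\right) \lesssim 1$, as desired.

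The main technical hurdle is the exchange kernel $v_N(x-z)\bar\phi(x)\phi(z)$: one must resist the temptation to put two powers of $\phi$ in $L^\infty$ before convolving, and instead peel off one $\phi$ in $L^\infty$ and use Young's inequality with $\|v_N\|_{L^1_x} = \|v\|_{L^1_x}$ uniformly in $N$. Otherwise the argument is a clean perturbation of the previous lemma, using only the $L^\infty_x$ decay of $\phi$ to close Gronwall on an unbounded time interval.
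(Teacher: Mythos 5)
Your argument is correct, and it runs on the same underlying mechanism as the paper's: split off the free inhomogeneous solution $s_a^0$ (bounded by the previous lemma, which is where the $N$-singular forcing $m_N$ is genuinely handled) and treat the potential part of $g_N$ as a perturbation whose $L^2_{x,y}\to L^2_{x,y}$ operator norm is $\lesssim \llp{v}_{L^1_x}\llp{\phi(t,\cdot)}_{L^\infty_x}^2$, hence time-integrable by the decay estimate. The difference is in how the perturbative piece is closed: the paper writes $s_a=s_a^0+s_a^1$ with $\vect{S}(s_a^1)=-V(s_a^0)$ and cites Lemma 4.5 of \cite{GM2} (an energy-type estimate for the $\vect{S}$-flow), whereas you keep the full solution, use Duhamel with the free propagator $e^{it(\lapl_x+\lapl_y)}$ (unitary on $L^2_{x,y}$), and close with Gronwall; this buys a self-contained proof that needs nothing about the $\vect{S}$-evolution beyond the free flow, at the cost of being an equivalent reformulation rather than a new idea. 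Your Young/Schur bounds on both the direct term $(v_N\ast|\phi|^2)\,s$ and the exchange term are right and uniform in $N$ since $\llp{v_N}_{L^1_x}=\llp{v}_{L^1_x}$; the only quibble is your closing remark, since putting both factors of $\phi$ in $L^\infty_x$ and then applying Young to $|v_N|\ast|s(\cdot,y)|$ is exactly what works — the real point is to keep $v_N$ in $L^1_x$ rather than $L^2_x$ or $L^\infty_x$, which is what your displayed estimate does. (As in the paper, the global-in-time conclusion implicitly uses the smallness of $\llp{\phi_0}_{\dot H^{1/2}_x}$ through Proposition \ref{decay}.)
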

\begin{proof}[Sketch of Proof]
We follow closely the proof of Lemma 4.5 in \cite{GM2}. The idea is to decompose the solution into two parts 
$ s_a = s^0_a + s^1_a$
where $s^0_a$ satisfies \eqref{s0-eq} and $s^1_a$ solves
\begin{align}\label{sa-eq2}
 \vect{S}(s_a^1) = -V(s_a^0).
\end{align}
By energy estimate, Lemma \ref{bounded-pot-lem}, and Lemma \ref{s0-lem}, we see that
\begin{align*}
\frac{d}{dt} \llp{s_a^1}^2_{L^2_{x, y}} \le&\ 2\llp{V(s_a^0)}_{L^2_{x, y}} \llp{s_a^1}_{L^2_{x, y}}\\
\le&\  \frac{C}{1+t^3} \llp{s_a^0}_{L^2_{x, y}} \llp{s_a^1}_{L^2_{x, y}} \le  \frac{C}{1+t^3}\llp{s_{2, 0}}_{L^2_{x, y}} \llp{s_a^1}_{L^2_{x, y}}.
\end{align*}
Finally, integrating in time yields the desired result. 

Differentiating \eqref{sa-eq2} with respect to $\grad_{x+y}$ yields
\begin{equation}
\begin{aligned}
 \vect{S}(\grad_{x+y}s_a^1) 
=&\ -\grad_{x+y}[V(s_a^0)]-(\grad_{x+y} V)(s_a^1)\\
 =&\ -(\grad_{x+y}V)(s_a^0)-V(\grad_{x+y}s_a^0)-(\grad_{x+y} V)(s_a^1).
\end{aligned}
\end{equation}
Again, by energy estimate, Proposition \ref{high-deriv-decay}, Lemma \ref{bounded-pot-lem}, Lemma \ref{s0-lem}, and the above bound for $s_a^1$, we see that
\begin{align*}
&\frac{d}{dt} \llp{\grad_{x+y}s_a^1}^2_{L^2_{x, y}}\\
 &\le\ 2\left(\llp{\grad_{x+y}[V(s_a^0)]}_{L^2_{x, y}}+\llp{(\grad_{x+y}V)(s_a^1)}_{L^2_{x, y}} \right)\llp{\grad_{x+y}s_a^1}_{L^2_{x, y}}\\
&\le\   \frac{C}{1+t^3} \llp{\grad_{x+y}s_{2, 0}}_{L^2_{x, y}}\llp{\grad_{x+y}s_a^1}_{L^2_{x, y}}.
\end{align*}
This yields the desired result. Repeat the process to estimate $\grad^j_{x+y}s_a^1$ for $j>1$. 
\end{proof}
\begin{proof}[Sketch of the proof of Proposition \ref{pexc-1}]
We follow closely the proof of Theorem 4.1 in \cite{GM2}.  Write $s_2 = s_a+s_e$ where $s_a$ solves \eqref{sa-eq}. Then we see that $s_e$ and $p_2$ solves a less singular system:
\begin{subequations}
\begin{align}
\vect{S}(s_e) =&\ m\circ p_2 +\overline{p_2}\circ m, \label{r-eq1}\\
\vect{W}(\overline{p_2})=&\ m\circ \overline{s_a}- s_a\circ \overline{m}+m\circ \overline{s_e}- s_e\circ \overline{m} \label{r-eq2}.
\end{align}
\end{subequations}
where $s_e(0) = s_{2, 0}$ and $p_2(0) = p_{2, 0}$. 

Let us define
\begin{align}
E(t)^2:= \llp{s_e(t, \cdot)}^2_{L^2_{x, y}}+\llp{p_2(t, \cdot)}^2_{L^2_{x, y}}.
\end{align}
Then, by energy estimate, we see that
\begin{align*}
\frac{d}{dt} E(t)^2 \le&\ \frac{C}{1+t^3}\left(\llp{p_2(t, \cdot)}_{L^2_{x, y}}\llp{s_e(t, \cdot)}_{L^2_{x, y}} +\llp{p_2(t, \cdot)}_{L^2_{x, y}}\right).
\end{align*}
Hence it follows
\begin{align}
\frac{d}{dt} E(t) \le \frac{C_1}{1+t^3}+ C_2\frac{E(t)}{1+t^3}.
\end{align}
Finally, applying Gr\"onwall's inequality yields uniform the estimate
\begin{align*}
E(t) \le CE(0)\exp\left(C_2\int^t_0  \frac{ds}{1+s^3} \right) \le CE(0).
\end{align*}
To estimate $\grad_{x+y}^js_e$ and $\grad_{x+y}^jp_2$, we begin by differentiating \eqref{r-eq1} and \eqref{r-eq2} with respect to $\grad_{x+y}$ then repeat the above argument.

Finally, let us deduce \eqref{L2Linfty} from \eqref{s2-est}. Observe we have that
\begin{align*}
\Lp{\llp{s_2(x, z)}_{L^2_z}}{L^\infty_x} =&\ \Lp{\llp{s_2(x, x+z)}_{L^2_z}}{L^\infty_x} \le\ \Lp{\llp{s_2(x, x+z)}_{L^\infty_x}}{L^2_z}\\
\le&\ C\sum^2_{j=0}\Lp{\llp{\grad_x^j(s_2(x, x+z))}_{L^\infty_x}}{L^2_z}\\
=&\ C\sum^2_{j=0}\Lp{\llp{(\grad_{x+y}^js_2)(x, x+z))}_{L^2_x}}{L^2_z}\\
=&\ C\sum^2_{j=0}\Lp{(\grad_{x+y}^js_2)(x, y)}{L^2_{x, y}}\le C\sum^2_{j=0}\Lp{\grad_{x+y}^js_{2, 0}}{L^2_{x, y}}.
\end{align*}
\end{proof}

\subsection{Proof of Proposition \ref{s2-collapsing}} In this subsection, we adopt ideas from \cite{GM3} and \cite{GM4}.  In addition to the proof of Proposition \ref{s2-collapsing}, Proposition \ref{linear-prop} is the key result of this subsection. Let us begin by stating a few lemmas. 

\begin{lemma}\label{deriv-s0-lem}
Fix $0<\varepsilon\ll \frac{1}{2}$ (in fact, $\varepsilon$ is fixed for the remainder of the section). Let $s_a^0$ be a solution to \eqref{s0-eq}. Then we have the estimates
\begin{subequations}
\begin{align}
&\llp{s_a^0(t)}_{\dot H^2(\rr^6)} \le\ CN^\frac{3\beta}{2}, \label{L2-s0-est21} \\
&\llp{s_a^0(t)}_{\dot H^{\frac{1}{2}-\alpha}(\rr^6)} \le\ C_\varepsilon \label{L2-s0-est22}
\end{align}
for all $t>0$ where $\alpha := \frac{3\varepsilon}{2(1-\varepsilon)}$. 
Interpolating \eqref{L2-s0-est21} with \eqref{L2-s0-est22} yields
\begin{align}
\llp{s_a^0(t)}_{\dot H^{\frac{1}{2}}(\rr^6)} \le&\ C N^\varepsilon\label{L2-s0-est23}\\
\llp{s_a^0(t)}_{\dot H^{1}(\rr^6)} \le&\ C N^{\frac{\beta}{2}(1+2\varepsilon)}. \label{L2-s0-est24}
\end{align}
\end{subequations}
\end{lemma}

\begin{proof} [Proof]
See Theorem 3 of \cite{Kuz2} for a proof of the lemma. 
\end{proof}

\begin{lemma}
Let $s_a$ be a solution to \eqref{sa-eq}. Then we have the estimate
\begin{subequations}
\begin{align}
&\llp{s_a(t)}_{\dot H^{1}(\rr^6)} \le\ C N^{\frac{\beta}{2}(1+2\varepsilon)},  \label{L2-sa-est21} \\
&\llp{s_a(t)}_{\dot H^2(\rr^6)} \le\  CN^\frac{3\beta}{2} \label{L2-sa-est22}
\end{align}
for all $t>0$. Interpolating with \eqref{L2-sa-est21} with \eqref{L2-sa-est} yields
\begin{align}
&\llp{s_a(t)}_{\dot H^{\frac{1}{2}}(\rr^6)} \le\ C N^{\frac{\beta}{4}(1+2\varepsilon)}.  \label{L2-sa-est23}
\end{align}
\end{subequations}
\end{lemma}

\begin{proof}
Write $s_a = s_a^1+s_a^0$ as in Lemma \ref{sa-lem}. Then we see that $s_a^1$ solves
\begin{align}
\vect{S}(\grad_x s_a^1) = -V(\grad_x s_a^0)-(\grad_xV)(s_a^0)-(\grad_x V)(s_a^1).
\end{align}
Using energy estimate, Lemma \ref{bounded-pot-lem}, \eqref{L2-s0-est}, \eqref{L2-sa-est},  and \eqref{L2-s0-est24}, we see that
\begin{align}
\frac{d}{dt}\llp{\grad_x s_a^1}_{L^2_{x, y}}^2 \le \frac{C N^{\frac{\beta}{2}(1+2\varepsilon)}}{1+t^3} \llp{\grad_x s_a^1}_{L^2_{x, y}}.
\end{align}
Then the result follows after integrating in time. The proof of \eqref{L2-sa-est22} is similar. 
\end{proof}

\begin{prop}\label{linear-prop}
Let $s$ be a solution to the inhomogeneous equation
\begin{align}\label{sg-eq}
\left( \frac{1}{i}\frac{\bd}{\bd t} - \lapl_x-\lapl_y\right)s(t, x, y)   = F
\end{align}
for some forcing $F$ that implicitly depends on time (i.e. solutions to \eqref{sg-eq} are invariant under time-translation). Then  we have the linear estimate
\begin{equation}\label{linear-est}
\begin{aligned}
&\sup_z\llp{s(t, x, x+z)}_{L^2_t([T_0, T_1])L^2_x}\\
&\lesssim_{\varepsilon} \llp{ |\grad_{x-y}|^{\frac{1}{2}} s(T_0)}_{L^2(dxdy)} \\
&\quad\ +(T_1-T_0)\llp{
 | \grad_{x-y}|^{\frac{1}{2}} F}_{N^0([T_0, T_1])} + \llp{| \grad_{x-y}|^{\frac{1}{2}} F}_{L^{2-}_t([T_0, T_1])L^{\frac{6}{5}+}_x L^2_y}
 \end{aligned}
 \end{equation}
 on the interval $[T_0, T_1]$. Here, we used the notation $ \frac{6}{5}+ := \frac{6}{5-4\varepsilon}$ and $2-:=\frac{2}{1+\varepsilon}$. In fact, we also have
\begin{align}\label{linear-est3}
&\sup_z\llp{s(t, x, x+z)}_{L^2_t([T_0, T_1])L^2_x}\\
&\lesssim_{\varepsilon} \llp{ |\grad_{x-y}|^{\frac{1}{2}} s(T_0)}_{L^2(dxdy)} +(T_1-T_0)\llp{
 | \grad_{x-y}|^{\frac{1}{2}} F}_{L^2_t([T_0, T_1])L^\frac{6}{5}_{x-y}L^2_{x+y}} \nonumber\\
 &\quad\ + \llp{| \grad_{x-y}|^{\frac{1}{2}} F}_{L^{2-}_t([T_0, T_1])L^{\frac{6}{5}+}_{x-y} L^2_{x+y}}.\nonumber
 \end{align}
 \end{prop}
 
 \begin{proof}
Let us solve \eqref{sg-eq} on the interval $[T_0, T_1]$ by considering
\begin{align}\label{sg-eq2}
\left(\frac{1}{i}\frac{\bd}{\bd t}-\lapl_x-\lapl_y \right)s = c(t) F , \ \ s(T_0) = s_{0}.
\end{align}
where $c(t) = \chi_{[T_0, T_1]}$.
The solution to \eqref{sg-eq2} is given by
\begin{equation}
\begin{aligned}
s_T =&\  e^{i(t-T_0)\lapl}s_0 -\frac{1}{i}\int^t_{T_0} ds\ e^{i(t-T_0-s)\lapl}c(s)F(s) \\
=&\ e^{i(t-T_0)\lapl}s_0 -\frac{1}{i}\int^\infty_{-\infty} ds\ c(t-s)e^{i(t-T_0-s)\lapl}c(s)F(s) =: s_H+s_P.
\end{aligned}
\end{equation}
Note that $s_T(t) = s(t)$ on the time interval $T_0\le t\le T_1$ (i.e. $c(t)s_T(t) = c(t)s(t)$). Hence it suffices to 
estimate $\sup_z\llp{s_T(t, x, x+z)}_{L^2_tL^2_x}$ and restrict to the interval $[T_0, T_1]$ to obtain the desired estimate.

The estimate for $s_H$ follows from the homogeneous linear estimate in Lemma 5.1 of \cite{GM3}, that is: if 
$f \in L^2(\rr^6)$ then
\begin{align}\label{linear-collapsing}
\llp{(e^{it\lapl}f)(t, x, x)}_{L^2_tL^2_x} \le C\llp{|\grad_{x-y}|^\frac{1}{2}f}_{L^2_{x, y}}.
\end{align}

Next, by a direct computation, we see that
\begin{subequations}
\begin{align}
\widetilde{s_P} =&\ i\hat{c}(\tau+|\xi|^2+|\eta|^2)\widetilde{cF}(\tau, \xi, \eta)\\
=&\ \chi_{\{|\tau+|\xi|^2+|\eta|^2|> 1\}}\widetilde{s_P}+\chi_{\{|\tau+|\xi|^2+|\eta|^2|\le 1\}}\widetilde{s_P}=: \widetilde{s_1}+\widetilde {s_2} \label{split}
\end{align}
\end{subequations}
where
\begin{align}
|\hat c(\tau+|\xi|^2+|\eta|^2)| \lesssim\  \left| \frac{\sin\left( \frac{T_1-T_0}{2}(\tau+|\xi|^2+|\eta|^2)\right)}{\tau+|\xi|^2+|\eta|^2}\right|.
\end{align}
Note that $|\hat c(\tau+|\xi|^2+|\eta|^2)|\lesssim T_1-T_0$ if $|\tau+|\xi|^2+|\eta|^2|\ll 1$ and $|\hat c(\tau+|\xi|^2+|\eta|^2)| \lesssim \langle\tau+|\xi|^2+|\eta|^2\rangle^{-1}$ 
when $|\tau+|\xi|^2+|\eta|^2|\ge 1$, which motivates the splitting \eqref{split}.

By \eqref{linear-collapsing} and a standard fact of dispersive PDE theory (see Lemma 2.9 in \cite{Tao}), we have, for any $\delta>0$, the estimate
\begin{equation}
\begin{aligned}
&\llp{s_1(t, x, x)}_{L^2_t L^2_x}\\
&\lesssim_\delta \llp{|\xi-\eta|^\frac{1}{2}\langle \tau+|\xi|^2+|\eta|^2\rangle^{\frac{1}{2}+\delta} \chi_{\{|\tau+|\xi|^2+|\eta|^2|>1\}}\widetilde{s_P}}_{L^2_\tau L^2_{\xi, \eta}}\\
&\lesssim_\delta \llp{|\xi-\eta|^\frac{1}{2}\langle \tau+|\xi|^2+|\eta|^2\rangle^{-\frac{1}{2}+\delta} \widetilde{cF}}_{L^2_\tau L^2_{\xi, \eta}}.
\end{aligned}
\end{equation}
Finally, by Lemma 4.2 in \cite{GM4}, we can choose $\delta= \delta(\varepsilon)$ (in fact, $\delta = \frac{\varepsilon}{2}$) so that we have
\begin{align}\label{linear-term0}
\llp{s_1(t, x, x)}_{L^2_t([T_0, T_1])  L^2_x} \lesssim_\varepsilon \llp{|\grad_{x-y}|^\frac{1}{2}F}_{L^{2-}_t([T_0, T_1])L^{\frac{6}{5}+}_xL^2_y}.
\end{align}

For $s_2$, we have that
\begin{equation}
\begin{aligned}
&\llp{s_2(t, x, x)}_{L^2_t([T_0, T_1])  L^2_x}\\
&\lesssim_\delta (T_1-T_0)\llp{|\xi-\eta|^\frac{1}{2}\langle \tau+|\xi|^2+|\eta|^2\rangle^{\frac{1}{2}+\delta} \chi_{\{|\tau+|\xi|^2+|\eta|^2|\le 1\}}\widetilde{cF}}_{L^2_\tau L^2_{\xi, \eta}}\\
&\lesssim_\delta (T_1-T_0)\llp{|\xi-\eta|^\frac{1}{2}\langle \tau+|\xi|^2+|\eta|^2\rangle^{-\frac{1}{2}-\delta} \widetilde{cF}}_{L^2_\tau L^2_{\xi, \eta}}.
\end{aligned}
\end{equation}
Then by Lemma 4.1 of \cite{GM4}, we have that
\begin{align}
\llp{s_2(t, x, x)}_{L^2_t([T_0, T_1]) L^2_x} \lesssim_\varepsilon (T_1-T_0)\llp{|\grad_{x-y}|^\frac{1}{2}F}_{N^0([T_0, T_1])}.
\end{align}
Note that \eqref{linear-est3} follows from the parallelogram law,  $2|\xi|^2+2|\eta|^2= |\xi-\eta|^2+|\xi+\eta|^2$
\end{proof}

\begin{cor}
Let $s_{a}$ be a solution to \eqref{sa-eq}.Then we have the estimate
\begin{subequations}
\begin{equation}\label{sa-collapsing-est}
\begin{aligned}
&\sup_z\llp{s_a(t, x, x+z)}_{L^2_t([0, T])L^2_x} \lesssim_\varepsilon\ N^\beta+N^{\beta (1+\varepsilon)}T^{-\frac{2-\varepsilon}{2+2\varepsilon}}+N^{\frac{\beta}{4}(1+2\varepsilon)} \sqrt{T}.
\end{aligned}
\end{equation}
For simplicity, we use the slightly weaker estimate 
\begin{equation}
\begin{aligned}
&\sup_z\llp{s_a(t, x, x+z)}_{L^2_t([0, T])L^2_x} \lesssim_\varepsilon\ N^{\beta (1+\varepsilon)}+N^{\frac{\beta}{4}(1+2\varepsilon)} \sqrt{T}.
\end{aligned}
\end{equation}
\end{subequations}
\end{cor}

\begin{proof} 
 Split $[0, T]$ into approximately $\sqrt{T}$ number of intervals of length $\sqrt{T}$. Let $[T_k, T_{k+1}]=[k\sqrt{T}, (k+1)\sqrt{T}]$ be one of those intervals, then, by Proposition \ref{linear-prop}, we have the estimate
\begin{subequations}
\begin{align}
&\sup_z\llp{s_a(t, x, x+z)}_{L^2_t([T_k, T_{k+1}])L^2_x}\nonumber
\\
& \lesssim\ \llp{|\grad_{x-y}|^{\frac{1}{2}} s_a^1(T_k)}_{L^2_{x, y}}\\
&\quad\ + \sqrt{T}\llp{|\grad_{x-y}|^{\frac{1}{2}} (V(s_a))}_{L^1_t([T_k, T_{k+1}])L^2_{x, y}} \label{linear-term1}\\
&\quad\ + \sqrt{T}\llp{|\grad_{x-y}|^\frac{1}{2}(v_N(x-y)\phi(x)\phi(y))}_{L^2_t([T_k, T_{k+1}])L^\frac{6}{5}_{x-y}L^2_{x+y}} \label{linear-term3}\\
 &\quad\ + \llp{|\grad_{x-y}|^{\frac{1}{2}} (V(s_a))}_{L^{2-}_t([T_k, T_{k+1}]) L^{\frac{6}{5}+}_{x}L^2_y}  \label{linear-term2}\\
  &\quad\ + \llp{|\grad_{x-y}|^{\frac{1}{2}} (v_N(x-y)\phi(x)\phi(y))}_{L^{2-}_t([T_k, T_{k+1}]) L^{\frac{6}{5}+}_{x-y}L^2_{x+y}} \label{linear-term4}
\end{align}
\end{subequations}
For \eqref{linear-term1}, we apply Lemma \ref{bounded-pot-lem}, \eqref{L2-sa-est}, and \eqref{L2-sa-est23} to get the estimate
\begin{align*}
\eqref{linear-term1}&\lesssim \sqrt{T}\llp{\langle\grad_{x-y}\rangle^{\frac{1}{2}} V}_{L^1_t([T_k, T_{k+1}])L^\infty_{x, y}}\llp{\langle\grad_{x-y}\rangle^{\frac{1}{2}}s_a}_{L^\infty_tL^2_{x, y}}\\
&\lesssim \sqrt{T}N^{\frac{\beta}{4}(1+2\varepsilon)} \int^{T_{k+1}}_{T_k} \frac{ds}{s^3} \lesssim \frac{N^{\frac{\beta}{4}(1+2\varepsilon)}}{\sqrt{T}k^2(k+1)}
\end{align*}
For \eqref{linear-term2}, it suffices to consider $|\grad_{x-y}|^\frac{1}{2}((v_N\ast|\phi|^2) s_a)$ and $|\grad_{x-y}|^\frac{1}{2}((v_N\bar\phi\otimes\phi )\circ s_a)$.  Note
that by H\"older, Sobolev, Young's inequalities, \eqref{L2-sa-est}, and  \eqref{L2-sa-est23},  we have that
\begin{align*}
 &\llp{|\grad_{x-y}|^{\frac{1}{2}} ((v_N\ast|\phi|^2) s_a)}_{L^{2-}_t([T_0, T_1]) L^{\frac{6}{5}+}_{x}L^2_y}\\
 &\lesssim \llp{v_N \ast |\grad_x|^\frac{1}{2}|\phi|^2}_{L^{2-}_t([T_0, T_1])L^{3+}_x} \llp{s_a}_{L^\infty_tL^2_{x, y}}\\
 &\quad\ + \llp{v_N \ast |\phi|^2}_{L^{2-}_t([T_0, T_1])L^{3+}_x} \llp{|\grad_{x-y}|^\frac{1}{2}s_a}_{L^\infty_tL^2_{x, y}}\\
  &\lesssim \llp{v_N \ast |\grad_x|^{1+2\varepsilon}|\phi|^2}_{L^{2-}_t([T_0, T_1])L^{2}_x} \llp{s_a}_{L^\infty_tL^2_{x, y}}\\
 &\quad\ + \llp{v_N \ast  |\grad_x|^{\frac{1}{2}+2\varepsilon}|\phi|^2}_{L^{2-}_t([T_0, T_1])L^{2}_x} \llp{|\grad_{x-y}|^\frac{1}{2}s_a}_{L^\infty_tL^2_{x, y}}\\
 &\lesssim \llp{\phi}_{L^{2-}_t([T_0, T_1])L_x^\infty}\llp{\langle\grad_x\rangle^{1+2\varepsilon}\phi}_{L^\infty_tL^2_x} \llp{\langle\grad_{x-y}\rangle^\frac{1}{2}s_a}_{L^\infty_tL^2_{x, y}}\lesssim N^{\frac{\beta}{4}(1+2\varepsilon)}
\end{align*}
The other term is handle in a similar manner, that is
\begin{align*}
 &\llp{|\grad_{x-y}|^{\frac{1}{2}} ((v_N\bar\phi\otimes \phi)\circ s_a)}_{L^{2-}_t([T_0, T_1]) L^{\frac{6}{5}+}_{x}L^2_y}\\
 &\lesssim \int dz\ |v_N(z)|\llp{|\grad_{x-y}|^\frac{1}{2}(\bar \phi(x)\phi(x-z)s_a(x-z, y))}_{L^{2-}_t([T_0, T_1]) L^{\frac{6}{5}+}_{x}L^2_y}\\
  &\lesssim \llp{\phi}_{L^{2-}_t([T_0, T_1])L_x^\infty}\llp{\langle\grad_x\rangle^{1+2\varepsilon}\phi}_{L^\infty_tL^2_x} \llp{\langle\grad_{x-y}\rangle^\frac{1}{2}s_a}_{L^\infty_tL^2_{x, y}}\lesssim N^{\frac{\beta}{4}(1+2\varepsilon)}.
\end{align*}
Lastly, it suffices to treat \eqref{linear-term3} since \eqref{linear-term4} can be handled in exactly the same manner. For \eqref{linear-term3}, the worst scenario occurs when 
$|\grad_{x-y}|^\frac{1}{2}$ lands on $v_N(x-y)$ which then contributes a factor $N^\frac{\beta}{2}$. There we have that
\begin{align*}
&\sqrt{T}\llp{(|\grad_x|^\frac{1}{2}v)_N(x-y) \phi(x)\phi(y)}_{L^2_t([T_0, T_1])L^\frac{6}{5}_{x-y}L^2_{x+y}} \\
&\lesssim \sqrt{T}N^\frac{\beta}{2}\llp{(|\grad|^\frac{1}{2}v)_N}_{L^\frac{6}{5}_x}\llp{\phi}_{L^2([T_k, T_{k+1}])L^\infty_x}\llp{\phi}_{L^\infty_tL^2_x} \lesssim \frac{N^\beta}{k\sqrt{k+1}}.
\end{align*}
Summing the intervals yields the desired result.
\end{proof}

We are now ready to prove Proposition \ref{s2-collapsing}.
\begin{proof}[Proof of Proposition \ref{s2-collapsing}]
Again, write $s_2 = s_a+s_e$ where $s_a$ solves \eqref{sa-eq}. Then we see that $\grad_xs_e$ and $\grad_xp_2$ solves linear system:
\begin{subequations}
\begin{align}
\vect{S}(\grad_xs_e) =&\ -(\grad_x V)(s_e) + v_N\Pi(\phi\otimes \phi)\circ p_2 + m\circ \grad_xp_2\\
&\ +\overline{\grad_x p_2}\circ m+\overline{p_2}\circ v_N\Pi(\phi\otimes \phi)  \nonumber\\
\vect{W}(\grad_x\overline{p_2})=&\ -(v_N\ast \grad_x|\phi|^2)\circ p_2-[v_N\Pi(\bar\phi\otimes \phi), p_2]\\
&\ +v_N\Pi(\phi\otimes \phi)\circ \overline{s_a} -s_a\circ v_N\Pi(\bar\phi\otimes \bar\phi)\nonumber\\
&\ +v_N\Pi(\phi\otimes \phi)\circ \overline{s_e} -s_e\circ v_N\Pi(\bar\phi\otimes \bar\phi)\nonumber\\
&\  +m\circ \overline{\grad_x s_a}- \grad_xs_a\circ \overline{m}+ m\circ \overline{\grad_x s_e}- \grad_xs_e\circ \overline{m}\nonumber
\end{align}
\end{subequations}
Let us define
\begin{align}
E_1(t)^2:= \llp{\grad_xs_e(t, \cdot)}^2_{L^2_{x, y}}+\llp{\grad_xp_2(t, \cdot)}^2_{L^2_{x, y}}.
\end{align}
Then, by energy estimate, we see that
\begin{align}
\frac{d}{dt} E_1(t)^2 \le&\ C \left(\llp{\vect{S}(\grad_x s_e)}_{L^2_{x, y}}\llp{\grad_x s_e}_{L^2_{x, y}}+\llp{\vect{W}(\grad_x p_2)}_{L^2_{x, y}}\llp{\grad_x p_2}_{L^2_{x, y}}\right).
\end{align}
Applying Lemma \ref{bounded-pot-lem}, Proposition  \ref{pexc-1}, \eqref{L2-sa-est}, \eqref{L2-sa-est21}, we have the estimate 
\begin{align*}
&\frac{d}{dt} E_1(t)^2 \\
&\le\ \frac{C}{1+t^3}\left( \llp{\grad_x p_2}_{L^2_{x, y}}\llp{\grad_x s_e}_{L^2_{x, y}}+\llp{\grad_x s_e}_{L^2_{x, y}}+N^\frac{3\beta}{2}\llp{\grad_x p_2}_{L^2_{x, y}}\right)\\
&\le\ \frac{C}{1+t^3}\left(E_1(t)^2+N^{\frac{\beta}{2}(1+2\varepsilon)}E_1(t) \right)
\end{align*}
Hence it follows
\begin{align}
\frac{d}{dt} E_1(t) \le \frac{C}{1+t^3}(E_1(t)+N^{\frac{\beta}{2}(1+2\varepsilon)})
\end{align}
then, by Gr\"onwall's inequality, we have that
\begin{align*}
E_1(t) \le&\ C(E_1(0)+N^{\frac{\beta}{2}(1+2\varepsilon)})\exp\left(C\int^\infty_0 \frac{ds}{1+s^3}\right)\\
 \leq&\ CE_1(0)+CN^{\frac{\beta}{2}(1+2\varepsilon)}.
\end{align*}
The proof of \eqref{s2-collapsing-est} is similar to the proof of \eqref{sa-collapsing-est}.
\end{proof}

\section{Proof of Theorem \ref{main1}}
\subsection{Fock Space Estimates}  We adapt the method of \cite{GM3, Kuz2}
in handling the Fock space error
\begin{subequations}
\begin{align}\label{fock-error}
\Lp{\psi_\text{exact}(t)-\psi_\text{approx}(t)}{\mathcal{F}}.
\end{align}
Since \cite{Kuz2} has provided a detailed account of the strategy for bounding \eqref{fock-error}, we will only give a sketch
of the process.  Nevertheless, for completeness, we have included an appendix with the relevant details. 

By properties of unitary operators, the Fock space error can be rewritten in the form
\begin{align}
\eqref{fock-error} = \Lp{e^{-i\int^t_0 ds X_0(s)}\psi_\text{red}(t)-\Omega}{\mathcal{F}}
\end{align}
\end{subequations}
where $X_0$ is chosen below and $\psi_\text{red}(t)$ is the reduced dynamics defined by
\begin{subequations}
\begin{align}
\psi_\text{red}(t) =&\ e^{\mathcal{B}(t)}e^{\sqrt{N}\mathcal{A}(t)} e^{it\mathcal{H}} e^{-\sqrt{N}\mathcal{A}_0}\Omega, 
\\ \tilde\psi_\text{red}(t) =&\ e^{-i\int^t_0 ds X_0(s)} \psi_\text{red}(t), \ \ E(t) = \tilde \psi_\text{red}(t)-\Omega.
\end{align}
\end{subequations}
A direct computation shows that the error $E$ solves the Cauchy problem
\begin{align}\label{error-eq}
\ \left(\frac{1}{i}\frac{\bd}{\bd t}-\mathcal{H}_\text{red}+X_0\right) E
= \mathcal{H}_\text{red}\Omega-X_0\Omega =:\tilde X, \ E(0, \cdot) = 0
\end{align}
where $\mathcal{H}_\text{red}$ is the reduced Hamiltonian defined by
\begin{subequations}
\begin{align}
\mathcal{H}_\text{red} =&\ N\mu_0(t) + \int dxdy\ \{L(t, x, y)a_x^\ast a_y\}- N^{-\frac{1}{2}}\mathcal{E}(t) \label{red-ham}\\
L(t, x, y)=& -g(t, x, y)+\frac{1}{2}\left( (\bar c_1)^{-1}\circ m\circ \bar s_1+ s_1\circ \bar m\circ (\bar c_1)^{-1}\right)\\
&\ + \frac{1}{2}[\vect{W}(\bar c_1), (\bar c_1)^{-1}]\nonumber\\
\mathcal{E}(t)=&\ e^\mathcal{B}([\mathcal{A}, \mathcal{V}]+N^{-\frac{1}{2}}\mathcal{V})e^{-\mathcal{B}} \ \ \text{ 4th degree polynomial in } (a^\ast, a). \label{error-terms}
\end{align}
\end{subequations}
The complete derivation of \eqref{red-ham} can be found in \cite{GM2} and the explicit form of \eqref{error-terms} is given in the appendix. It has been shown in \cite{GM2} that $\mathcal{H}_\text{red}$ is a fourth degree polynomial in $(a^\ast, a)$ and
its action on $\Omega$ yields
\begin{align}
\mathcal{H}_\text{red}\Omega = (X_0, X_1, X_2, X_3, X_4, 0, 0, \ldots).
\end{align}
In particular, the phase is chosen so that $\tilde X = (0, X_1, X_2, X_3, X_4, 0, 0,\ldots)$. Note that it is safe for us to ignore  the sum of $N\mu_0$ and
the zeroth order term of $N^{-\frac{1}{2}}\mathcal{E}(t)$, which we called $X_0$, in our studies of  \eqref{error-eq}. See appendix for the explicit form of $\tilde X$.

\subsection{Estimates for the Error Terms} Let us rewrite \eqref{error-eq} 
\begin{align}
\vect{S}_FE= (\vect{S}_D-\mathcal{P})E = \tilde X, \ \ E(0, \cdot) = 0
\end{align}
where 
\begin{align}
\vect{S}_D :=&\ \frac{1}{i}\frac{\bd}{\bd t}-\mathcal{H} \ \text{ and }\ \mathcal{P}:=\ \mathcal{H}_\text{red}-\mathcal{H}-X_0.
\end{align}
If we apply energy method directly to estimate $\llp{E}_\mathcal{F}$, then we will end up putting $X_i$ in $L^2(\rr^{3i})$. Hence the best we can do is to put $v_N \in L^2$.
For $X_2$  (modulo all the other difficulties), we have that $\llp{X_2}_{L^2} \sim N^\frac{5\beta-2+\delta}{2}$ ($0<\delta\ll \frac{1}{2}$)
which yields a meaningful result provided $0<\beta<\frac{2-\delta}{5}$. Moreover, we also see that $\llp{X_3}_{L^2}\sim N^\frac{3\beta-1+\delta}{2}$,  
which is only meaningful when $0<\beta<\frac{1-\delta}{3}$. To obtain the result for $0<\beta<\frac{1}{2}$ as in \cite{Kuz2}, we will use Strichartz estimates as in \cite{GM3, Kuz2}.

Define the Strichartz norm on the $n$th sector of $\mathcal{F}$ by
\begin{align*}
\llp{u}_S =&\ \max\Big\{ \llp{u}_{L^\infty(dt)L^2(dx_1\cdots dx_n)}, \llp{u}_{L^2(dt)L^6(d(x_1-x_2))L^2(d(x_1+x_2)\cdots dx_n)}\\
 & \text{ and all other permutations of the variables}\Big\}
\end{align*}
and the dual Strichartz norm
\begin{align*}
\llp{u}_{S'} =&\ \min\Big\{ \llp{u}_{L^1(dt)L^2(dx_1\cdots dx_n)}, \llp{u}_{L^2(dt)L^\frac{6}{5}(d(x_1-x_2))L^2(d(x_1+x_2)\cdots dx_n)}\\
 & \text{ and all other permutations of the variables}\Big\}.
\end{align*}
Let $X \in \mathcal{F}$ be a Fock vector such that all but finitely many of the components are zeros, say $X_0, X_1, \ldots, X_k$ are nonzero. Then 
we define the Strichartz norm 
\begin{subequations}
\begin{align}
\llp{X}_S = \max\left\{ |X_0|, \llp{X_1}_S, \ldots, \llp{X_k}_S\right\}
\end{align}
and similarly for the dual Strichartz norm
\begin{align}
\llp{X}_{S'} = \max\left\{ |X_0|, \llp{X_1}_{S'}, \ldots, \llp{X_k}_{S'}\right\}.
\end{align}
\end{subequations}
We will also denote the Strichartz norm on the interval $[0, T]$ by $\llp{X}_{S_T}$.

By standard arguments, we have the following lemma; see Lemma 9.2 in \cite{GM3} for the proof.

\begin{lemma}\label{fock-strichartz}
Let $f$ be a Fock vector with zero entries past the $k$th sector. Assume $\psi$ is a solution to 
\begin{align}
\vect{S}_D \psi = f, \ \psi(0, \cdot) = 0.
\end{align}
Then we have the estimate
\begin{align}
\llp{\psi}_S \lesssim \llp{f}_{S'}.
\end{align}
By definition, it follows that $\sup_t \llp{\psi}_\mathcal{F} \lesssim \llp{\psi}_S$.
\end{lemma}

Following the argument in \cite{Kuz2}, we define
\begin{subequations}
\begin{align}
X_2^s(y_1, y_2) =&\ \frac{1}{2N} v_N(y_1-y_2)\{s_1(y_1, y_2)+(\bar p_1\circ s_1)(y_1, y_2)\}\\
=&\ \frac{1}{4N}v_N(y_1-y_2)s_2(y_1, y_2)\nonumber\\
X_3^s(y_1, y_2, y_3)=&\  \frac{1}{\sqrt{N}}v_N(y_1-y_2)\phi(y_2)s_1(y_1, y_3)\\
X^r_i =&\
\begin{cases}
 X_i-X^s_i& \text{ for } i =2, 3,\\
 X_i & \text{ for } i=1, 4
 \end{cases}.
\end{align}
\end{subequations}
Let us split $E=E^r+E^s$ where $E^r$ and $E^s$ solve the Cauchy problems
\begin{subequations}
\begin{align}
\vect{S}_F E^r=&\ X^r := (0, X_1, X_2^r, X_3^r, X_4, 0, \cdots),\ \ E^r(0, \cdot) = 0\label{regular-error}\\
\vect{S}_F E^s=&\ X^s := (0, 0, X_2^s, X_3^s, 0, 0, \cdots),\ \ E^s(0, \cdot) = 0. \label{singular-error}
\end{align}
\end{subequations}
The superscript $r$ in $E^r$ indicates the part of $E$ that corresponds to forcing terms of $\tilde X$ that are ``regular", whereas, the superscript $s$ 
refers to the part of $E$ that corresponds to the more ``singular" forcing terms. 

\subsubsection{Estimates for $E^r$}
We can readily estimate $E^r$ by energy method. By energy estimate, we have that
\begin{align}
\llp{E^r(t)}_\mathcal{F} \lesssim \sum^4_{i=1}\int^t_0 d\tau\ \llp{X^r_i(\tau)}_{L^2}.
\end{align}
Hence it suffices to estimate the $L^2$-norm of $X^r_i$. 

\begin{prop}
We have the following estimates
\begin{subequations}\label{Xr-est}
\begin{align}
\int^T_0 d\tau\ \llp{X_1(\tau)}_{L^2} \lesssim&\ N^{-\frac{1}{2}}C_0(T, N),  \label{X_1^r-est}\\
\int^T_0 d\tau\ \llp{X_2^r(\tau)}_{L^2} \lesssim&\ N^{-1} C_0(T, N),  \label{X_2^r-est}\\
\int^T_0 d\tau\ \llp{X_3^r(\tau)}_{L^2} \lesssim&\ N^{-\frac{1}{2}}, \label{X_3^r-est}\\
\int^T_0 d\tau\ \llp{X_4(\tau)}_{L^2} \lesssim&\ N^{\frac{3\beta-2}{2}}T. \label{X_4^r-est}
\end{align}
\end{subequations}
In particular,  for any $0<\beta<\frac{1}{2}$ and interval $[0, T]$, we have that
\begin{align}
\llp{E^r(t)}_\mathcal{F} \lesssim  N^{-\frac{1}{2}}C_0(T, N)+N^{\frac{3\beta-2}{2}}T \lesssim N^{-\frac{1}{2}+\beta(1+\varepsilon)}.
\end{align}
for all $t \in [0, T]$ and $N$ sufficiently large.

\end{prop}

\begin{proof}
The proof of \eqref{Xr-est} follows immediately from Lemma \ref{p1-lem}, \ref{p2r-lem}, \ref{p3r-lem}, and \ref{p4-lem}.
\end{proof}

\subsubsection{Estimates for $E^s$} In the spirit of Section 4, we split $E^s = E^s_a+E^s_e$ where 
\begin{subequations}\label{fock-schrod}
\begin{align}
\vect{S}_D E^s_a =&\ X^s, \ \ E^s_a(0, \cdot) = 0, \label{fock-schrod1}\\
(\vect{S}_D-\mathcal{P}) E^s_e =&\ \mathcal{P}E^s_a, \ \ E^s_e(0, \cdot) = 0. \label{fock-schrod2}
\end{align}
\end{subequations}

\begin{prop}
Let $0<\beta<\frac{1}{2}$ and $E^s$ solves \eqref{fock-schrod}. Then we have that
\begin{align}
\llp{E^s(t)}_\mathcal{F} \lesssim_\varepsilon N^\frac{3\beta-2+2\beta\varepsilon}{2}.
\end{align}
for all $t \in [0, T]$ and $N$ sufficiently large.
\end{prop}

\begin{proof}
Let us first fix an interval $[0, T]$. The proof of the proposition is based on the proof of Theorem 9.3 in \cite{GM3} via iteration method.

Using Lemma \ref{fock-strichartz}, \ref{p2s-lem}, and  \ref{p3s-lem}, we immediately see that
\begin{align}
\llp{E_a^s(t)}_\mathcal{F} \lesssim \llp{E_a^s}_{S_t} \lesssim \llp{X^s}_{S'_t} \lesssim N^{\frac{\beta-2}{2}}C_0(T, N).
\end{align}
By the appendix, it is helpful to split $\mathcal{P}$ into $\mathcal{P}^s$ and $\mathcal{P}^r$ where $\mathcal{P}^s: = \mathcal{P}_2^s+\mathcal{P}_3^s$. Note, by Lemma \ref{p2s-lem} and \ref{p3s-lem}, we have the estimates
\begin{align}
\llp{\mathcal{P}^s E}_{S'_t} \lesssim N^{\frac{\beta-2}{2}}C_0(T, N)\llp{E}_{S_t}.
\end{align}
For notational convenience, let $E_1=E_a^s$ and consider $E_2= E_1+\vect{S}_D^{-1}\mathcal{P}^sE_1$ (first iteration). Note that $E_1$ has at most $5$ entries since $\vect{S}_D$ is diagonal and $X^s$ is zero after the first 5 sectors; likewise, $E_2$ has at most 9 nonzero entries.  Then, by Strichartz estimates, we have that
\begin{equation}\label{E2-strich}
\begin{aligned}
\llp{E_2(t)}_\mathcal{F} 
&\lesssim\ \llp{E_1}_{S_t}+\llp{\vect{S}_D^{-1}\mathcal{P}^s E_1}_{S_t}\\
&\lesssim\ N^{\frac{\beta-2}{2}}C_0(T, N)+N^{\beta-2}C_0(T, N)^2\\
&\lesssim\ N^\frac{3\beta-2+2\beta\varepsilon}{2}+N^\frac{3\beta-4+2\beta\varepsilon}{4}\sqrt{T}+N^\frac{3\beta-4+2\beta\varepsilon}{2}T
\end{aligned}
\end{equation}
Finally, note that $E^s-E_2$ solves
\begin{align}
(\vect{S}_D-\mathcal{P})(E^s-E_2)=\mathcal{P}^rE_1 + \mathcal{P}\vect{S}_D^{-1}\mathcal{P}^s E_1.
\end{align}
By energy estimate, we have that
\begin{align}\label{fock-energy}
\llp{E^s(t)-E_2(t)}_\mathcal{F} \lesssim  \int^t_0 d\tau\ \{\llp{\mathcal{P}^rE_1(\tau)}_\mathcal{F} +\llp{\mathcal{P}^s\vect{S}_D^{-1}\mathcal{P}^s E_1(\tau)}_\mathcal{F}\}.
\end{align}
By Lemma \ref{p1-lem}, \ref{p2r-lem}, \ref{p3r-lem}, and \ref{p4-lem}, we have that
\begin{align*}
\text{LHS of }\eqref{fock-energy}  \lesssim&\ (1+N^{\frac{3\beta-2}{2}}T+N^{-\frac{1}{2}}C_0(T, N))N^{\frac{\beta-2}{2}}C_0(T, N)\\
&\ +(N^\frac{3\beta-1}{2}+N^\frac{3\beta-2}{2}\sqrt{T}C_0(T, N))N^{\beta-2}C_0(T, N)^2.
\end{align*}
This completes the proof.
\end{proof}

 \section{Application: Derivation of The Focusing NLS in $\rr^3$}
We provide two derivation of the focusing  NLS. For the first derivation, we employ the method introduced by Pickl in \cite{Pickl, Pickl2}.  In the second
 derivation, we apply the works of Nam and Napi\'orkowski on the $N$-norm approximation  of 
the many-body dynamics developed in \cite{NaNa}.  
 
 \subsection{Pickl's Method}
Following closely the presentation in \cite{Pickl2}, we consider the quantities:
\begin{definition}
Let $\phi \in L^2(\rr^3)$
\begin{enumerate}[(a)]
\item For each $1\leq j \leq N$ we define the projectors $p^\phi_{j}: L^2(\rr^{3N}) \rightarrow L^2(\rr^{3N})$ 
and $q^\phi_{j}: L^2(\rr^{3N}) \rightarrow L^2(\rr^{3N})$ given by
\begin{align}
 p^\phi_j \Psi_N(x_1, \ldots, x_N) = \phi(x_j) \int \phi^\ast(x_j') \Psi_N(x_1, \ldots, x_j', \ldots, x_N) dx_j
\end{align}
and $q^\phi_j= 1-p^\phi_j$ respectively. 
\item Furthermore, for any $1 \leq k \leq N$ we defined $P^\phi_k: L^2(\rr^{3N}) \rightarrow L^2(\rr^{3N})$
given by
\begin{align}
 P^\phi_k : = \sum_{a \in \mathcal{A}_k} \prod^N_{\ell = 1} (p^\phi_\ell)^{1-a_\ell}(q_\ell^\phi)^{a_\ell}
\end{align}
where
\begin{align}
 \mathcal{A}_k = \{ (a_1, \ldots, a_N) \mid a_i \in \{0, 1\} \text{ and } \sum_{i=1}^N a_i = k\}
\end{align}

 \item Assume $0<\lambda \leq 1$. Let us define the function $m^\lambda:\{1, \ldots, N\} \rightarrow \rr_{\geq 0}$
given by
\begin{align}
 m^{\lambda}(k) :=
\begin{cases}
 kN^{-\lambda}, \text{ for } k\leq N^\lambda, \\
1, \text{ otherwise}
\end{cases}
\end{align}
and a corresponding functional $\alpha_N^\lambda: L^2(\rr^{3N}) \times L^2(\rr^3) \rightarrow \rr_{\geq 0}$ given by
\begin{subequations}
\begin{align}
 \alpha_N^\lambda(\Psi_N, \phi):=&\ \inprod{\Psi_N}{\sum^{N}_{k=1}m^\lambda(k)P^\phi_j\Psi_N} \\
 =&\ \inprod{\Psi_N}{\widehat m^{\lambda, \phi}\Psi_N}=\llp{(\widehat m^{\lambda, \phi})^{1/2}\Psi_N}^2_{L^2_\vect{x}}.
\end{align}
\end{subequations}
For convenience, we shall use the notation $\alpha_N$ instead of $\alpha^1_N$. 
\end{enumerate}
\end{definition}

As a direct consequence of the definitions, we obtain the inequality
\begin{align}
 \alpha_N(\Psi_N, \phi) = \llp{q_1^\phi\Psi_N}^2_{L^2_\vect{x}} \leq \alpha^{\lambda}_N(\Psi_N, \phi)
\end{align}
for $0< \lambda < 1$. Again, by the definition, we could derive an error
 bound for the rate of convergence of the one-particle density towards the mean-field limit
\begin{align*}
 \llp{ \gamma_N^{(1)}- |\phi\rangle \langle \phi|}_{\operatorname{op}} \leq&\  
\big|\llp{p_1^\phi\Psi_N}^2_{L^2_\vect{x}}-1\big|\llp{|\phi\rangle\langle \phi|}_{\operatorname{op}}\\
&+ 2\llp{q_1^\phi\Psi_N}_{L^2_\vect{x}}\llp{p_1^\phi\Psi_N}_{L^2_\vect{x}}+\llp{q_1^\phi \Psi_N}^2_{L^2_\vect{x}}\\
\leq&\ \big|\llp{p_1^\phi\Psi_N}^2_{L^2_\vect{x}}-1\big|+
 2\llp{q_1^\phi\Psi_N}_{L^2_\vect{x}}\llp{p_1^\phi\Psi_N}_{L^2_\vect{x}}\\
 &+\llp{q_1^\phi \Psi_N}^2_{L^2_\vect{x}}\\
\lesssim&\  \llp{q_1^\phi \Psi_N}^2_{L^2_\vect{x}}+\llp{q_1^\phi \Psi_N}_{L^2_\vect{x}}.
\end{align*}
Since $|\phi\rangle \langle \phi|$ is a rank one projection operator, by remark 1.4 in \cite{RS} the trace norm is two times the operator norm, i.e.,  $2\llp{\gamma_N^{(1)}- |\phi\rangle \langle \phi|}_{\operatorname{op}} 
= \Tr\left|\gamma_N^{(1)}- |\phi\rangle \langle \phi|\right|$. Then it follows from the above estimates
\begin{align}
\Tr\left|\gamma_{N, t}^{(1)}- |\phi_t\rangle \langle \phi_t| \right| \lesssim \alpha^\lambda_N(\Psi_N, \phi_t)+\sqrt{\alpha^\lambda_N(\Psi_N, \phi_t)}.
\end{align}
Thus, to obtain a rate of convergence for the error it suffices to prove an estimate for $\alpha_N^\lambda(\Psi_N, \phi)$. Let us now state the main theorem in \cite{Pickl2} 
that we will use to derive the focusing NLS:
\begin{theorem}\label{Pickl-thm}
 Assume $0<\lambda, \beta<1$ and $v_N$ satisfies the same conditions as before. Assume for every $N \in \nn$ there
exists a solution to the linear $N$-body Schr\" odinger equation $\Psi_N(t, x)$ and a $L^\infty$ solution of the
mean-field equation $\psi_t$ on some interval $[0, T)$ with $T \in \rr_{>0}\cup\{\infty\}$. Then for any $t \in [0, T)$
\begin{align}
 \alpha_N^{\lambda}(\Psi_{N, t}, \psi_t) \leq&\ \exp\left( \int^t_0 C_v\llp{\phi_s}^2_{L^\infty_x}\ ds \right)\alpha_N^\lambda(\Psi_{N, 0}, \phi_0)\\
&\ + \left[\exp\left(C_v\int^t_0 \llp{\phi_s}_{L^\infty_x}^2\ ds\right)-1\right] \sup_{0\leq s \leq t}K^{\phi_s} N^{\delta_\lambda}\nonumber
\end{align}
where $\delta_\lambda = \frac{1}{2}\max\{1-\lambda-4\beta, 3\beta-\lambda, -1+\lambda +3\beta\}$, $C_v$ is some constant depending only on $v$ and 
\begin{align}
 K^\phi:= C_v( \llp{ \lapl |\phi|^2}_{L^2_x}+\llp{\phi}_{L^\infty_x} +1) \llp{\phi}_{L^\infty_x}.
\end{align}
\end{theorem}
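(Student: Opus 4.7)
The plan is to derive a Gr\"onwall-type differential inequality for $\alpha_N^\lambda(\Psi_{N,t},\phi_t)$ and then integrate it. First I would differentiate $\alpha_N^\lambda(\Psi_{N,t},\phi_t) = \inprod{\Psi_{N,t}}{\widehat m^{\lambda,\phi_t}\Psi_{N,t}}$ in $t$, using the many-body Schr\"odinger equation for $\Psi_{N,t}$ and the Hartree-type equation (\ref{eeq1}) for $\phi_t$; the time derivative of $\widehat m^{\lambda,\phi_t}$ comes from propagating each $p_j^{\phi_t}=|\phi_t\rangle\langle\phi_t|$ by the one-body Hartree generator. A short computation shows that the one-body pieces cancel between $H_{\text{mf}}$ and $\sum_j h_j^{\phi_t}$, leaving the mismatch between the true pair interaction and the mean-field potential inside a commutator with $\widehat m^{\lambda,\phi_t}$:
\begin{equation*}
\frac{d}{dt}\alpha_N^\lambda(\Psi_{N,t},\phi_t) = i\Big\langle \Psi_{N,t},\,\Big[\widehat m^{\lambda,\phi_t},\;\sum_{j=1}^N(v_N*|\phi_t|^2)(x_j) - \frac{1}{N}\sum_{i<j}v_N(x_i-x_j)\Big]\Psi_{N,t}\Big\rangle.
\end{equation*}

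The second step is the combinatorial manipulation at the heart of Pickl's method. Inserting $1 = p_i^{\phi_t}+q_i^{\phi_t}$ on every particle appearing in the pair interaction, I would split each two-body operator into its $pp$--$pp$, $pp$--$pq$, $pq$--$pq$, $pp$--$qq$, $pq$--$qq$ and $qq$--$qq$ components. The key observation is that $\widehat m^{\lambda,\phi}$ commutes with every $p_j^\phi$ and $q_j^\phi$, but when moved past a pair term that changes the number of excitations it is effectively replaced by a shifted operator differing from it in operator norm by $O(N^{-\lambda})$, reflecting the finite differences of $m^\lambda$. Together with the bosonic symmetry of $\Psi_{N,t}$ and the diagonal identity $p_j^{\phi_t}v_N(x_i-x_j)p_j^{\phi_t} = (v_N*|\phi_t|^2)(x_i)\,p_j^{\phi_t}$, the $pp$--$pp$ and symmetric $pp$--$pq$ blocks cancel against the Hartree subtraction $\sum_j(v_N*|\phi_t|^2)(x_j)$ modulo lower-order residuals. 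Only the off-diagonal blocks with at least one $q$ on each side of $v_N$ contribute to the error.

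These residual blocks are controlled by Cauchy--Schwarz combined with the identity $\llp{q_1^{\phi_t}\Psi_{N,t}}_{L^2_{\vect{x}}}^2 \leq \alpha_N^\lambda(\Psi_{N,t},\phi_t)$, the scaling bound $\llp{v_N}_{L^p_x} \lesssim N^{3\beta(1-1/p)}\llp{v}_{L^p_x}$ for $1\leq p\leq \infty$, and integration by parts in $v_N$ that loads derivatives onto $|\phi_t|^2$, producing the factor $\llp{\lapl|\phi_t|^2}_{L^2_x}$ inside $K^{\phi_t}$. Three distinct optimal placements of $v_N$ (estimated in $L^\infty$, in $L^2$, or after trading one derivative) exhaust the off-diagonal estimates and yield the three exponents $1-\lambda-4\beta$, $3\beta-\lambda$ and $-1+\lambda+3\beta$ entering $\delta_\lambda$. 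Collecting everything produces a differential inequality of the form
\begin{equation*}
\frac{d}{dt}\alpha_N^\lambda(\Psi_{N,t},\phi_t) \leq C_v\llp{\phi_t}^2_{L^\infty_x}\Big(\alpha_N^\lambda(\Psi_{N,t},\phi_t) + K^{\phi_t}\,N^{\delta_\lambda}\Big),
\end{equation*}
and the stated bound is an immediate consequence of Gr\"onwall's inequality.

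The main obstacle is the combinatorial bookkeeping of the second step. One must show cleanly that after inserting the $p$'s and $q$'s and collecting terms, the $pp$--$pp$ contribution cancels the Hartree subtraction (otherwise the estimate would diverge like $N^{3\beta}$), and that the surviving off-diagonal blocks fall neatly into one of the three regimes producing $\delta_\lambda$. Tracking where the $O(N^{-\lambda})$ shift contributes a genuine saving and where it does not is precisely what fixes the role of $\lambda$ in the exponent. Once the splitting is organized, each individual term reduces to a routine Cauchy--Schwarz estimate, using only that $v\in C^\infty_c$ and the uniform regularity and decay of $\phi_t$ supplied by Proposition \ref{global} and Proposition \ref{decay}.
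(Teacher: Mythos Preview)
The paper does not prove this theorem at all: it is quoted verbatim as ``the main theorem in \cite{Pickl2}'' and used as a black box in the proof of Theorem~\ref{DerFNLS}. So there is no proof in the paper to compare your proposal against.

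That said, your outline is a faithful sketch of Pickl's original argument in \cite{Pickl2}: differentiate $\alpha_N^\lambda$ in time, cancel the one-body parts, expand the remaining commutator with the $p/q$ decomposition, exploit the shift identity for $\widehat m^{\lambda,\phi}$ across blocks that change the number of $q$'s (this is where the $N^{-\lambda}$ gain enters), and estimate the surviving off-diagonal terms by Cauchy--Schwarz with different placements of $v_N$ to obtain the three competing exponents in $\delta_\lambda$, then close by Gr\"onwall. One small point: the regularity and decay input you cite at the end (Propositions~\ref{global} and~\ref{decay}) is not actually used in the proof of this theorem, which is a purely $N$-body estimate valid on any interval where $\phi_t\in L^\infty_x$; those propositions are instead what the paper feeds into the theorem afterward, in the proof of Theorem~\ref{DerFNLS}, to make $\int_0^t\llp{\phi_s}_{L^\infty_x}^2\,ds$ and $\sup_s K^{\phi_s}$ finite uniformly in $t$ and $N$.
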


\begin{proof}[Proof of Theorem \ref{DerFNLS} for $0<\beta<\frac{1}{6}$]
 Note, if $\Psi_N(0, x) = \phi^{\otimes N}$ then $\alpha^\lambda_N(\phi^{\otimes N}, \phi)=0$. Hence combining Theorem \ref{Pickl-thm} and our above decay result for $\phi$ satisfying  the focusing NLS equation \eqref{FNLS}, we have that
\begin{align*}
\alpha_N^\lambda(\Psi_{N, t}, \phi_t) \leq \left[\exp\left(C_v\int^t_0 \llp{\phi_s}_{L^\infty_x}^2\ ds\right)-1\right] \sup_{0\leq s \leq t}K^{\phi_s} N^{\delta_\lambda}
\end{align*}
where 
\begin{align*}
K^{\phi_t} =&\  C_v( \llp{ \lapl |\phi_t|^2}_{L^2_x} + \llp{\phi_t}_{L^\infty_x} + 1)\llp{\phi_t}_{L^\infty_x}\\
\lesssim&\ ( \llp{|\grad_x \phi_t|^2}_{L^2_x}+\llp{\phi_t\lapl \bar \phi_t}_{L^2_x} + \llp{\phi_t}_{L^\infty_x} + 1)\llp{\phi_t}_{L^\infty_x}\\
\lesssim&\ (\llp{\grad_x\phi_t}_{L^\infty_x}\llp{\grad_x\phi_t}_{L^2_x}+\llp{\phi_t}_{L^\infty_x} \llp{\grad_x^2\phi_t}_{L^2_x}+\llp{\phi_t}_{L^\infty_x}+1)\llp{\phi_t}_{L^\infty_x}\\
\lesssim&\ \frac{1}{1+t^{\frac{3}{2}}}.
\end{align*}
Thus, it follows
\begin{align*}
\Tr\left|\gamma_{N, t}^{(1)}- |\phi_t\rangle \langle \phi_t| \right| \lesssim 
\sqrt{\alpha^\lambda_N(\Psi_{N, t} , \varphi_t)} \lesssim N^{\delta_\lambda/2}.
\end{align*}
By remark 1 in \cite{Pickl2}, we see there is a choice of $\lambda$ such 
that $\delta_\lambda <0$ when $0<\beta<\frac{1}{6}$.  
\end{proof}

 \subsection{$N$-Norm Approximation Method} By Proposition \ref{decay} and Remark 4 in \cite{NaNa}, we extend the result of Theorem 6 in \cite{NaNa} to 
 the case of attractive interaction. More precisely, we have following proposition.
 
 \begin{prop}\label{N-norm-approx}
Let $v \in C^\infty_c(\rr^3)$ and $v\le 0$. Assume $\phi$ solves \eqref{htnls} with initial conditions $\phi_0 \in L^2(\rr^2)\cap W^{m, 1}(\rr^3)$ and $\llp{\phi_0}_{L^2_x}=1$ 
for some $m$ sufficiently large and  $\dot H^{\frac{1}{2}}_x$-norm sufficiently small, depending on $v$. Let $(\psi_n(t))_{n=0}^\infty = e^{-\mathcal{B}(k_t)}\Omega \in \mathcal{F}$ 
where $k_t$ solves the linear system of equations \eqref{eeq1} and \eqref{eeq2} for some initial $k(0, \cdot)$. Then the $N$-body evolution $\Psi_N(t) = e^{itH_N}\Psi_N(0)$ with the initial state
\begin{align}
\Psi_N(0) = \sum^N_{n=0} \phi_0^{\otimes (N-n)}\otimes_s \psi_n(0)
\end{align}
satisfies the norm approximation
\begin{align}
\Lp{\Psi_N(t)-\sum^N_{n=0} \phi(t)^{\otimes(N-n)}\otimes_s\psi_n(t)}{L^2(\rr^{3N})} \le C_1(t)N^\frac{3\beta-1}{2}
\end{align}
where
\begin{align}
C_1(t) \leq C(1+t)\left(1+\log(1+t)+\langle e^{-\mathcal{B}(k_0)}\Omega, \mathcal{N}e^{-\mathcal{B}(k_0)}\Omega\rangle \right)^4
\end{align}
for some constant $C>0$ depending only on $\phi_0$. 
 \end{prop}

 \begin{proof}[Proof of Theorem \ref{DerFNLS} for $\frac{1}{6}\le \beta<\frac{1}{3}$] 
 Take $k(0, \cdot) =0$. Then the proof follows immediately from Proposition \ref{N-norm-approx} and Corollary 2 in \cite{LNS}.
 \end{proof}

\appendix \section{Estimates for $\mathcal{P}$ }
\subsection{Normal Ordering of $\mathcal{H}_\text{red}$} The complete explicit form of $\mathcal{H}_\text{red}$ has been provided 
in equations (26)-(30) of \cite{Kuz2}, which is based on the computation in Section 5 of \cite{GM2}. The purpose of this section is to give the normal ordering of $\mathcal{H}_\text{red}$, which will be
useful when estimating the Fock space error. 

Recall that the reduced Hamiltonian is a self-adjoint operator given by
\begin{align}
\mathcal{H}_\text{red} =&\ N\mu(t) + \int dxdy\ \{L(t, x, y)a^\ast_x a_y\}-N^{-\frac{1}{2}}\mathcal{E}(t)\\
L(t, x, y) =&\ \lapl_x \delta(x-y)\\
&\ - (v_N\ast|\phi|^2)(t, x)\delta(x-y) -v_N(x-y)\phi(t, x)\bar\phi(t, y) \nonumber\\
&\  +\frac{1}{2}\Big( \bar c_1^{-1}\circ m\circ \bar s_1 + s_1\circ \bar m \circ \bar c_1^{-1}+ \left[ \vect{W}(\bar c_1), \bar c_1^{-1}\right]\Big). \nonumber
\end{align}
where $\mu(t)$ is some scalar function which is irrelevant for the purposes of this paper. 
The first two terms of $\mathcal{H}_\text{red}$ are already trivially in normal ordering. 

Let us now write down the normal ordering of the fourth degree polynomial $\mathcal{E}(t)$.

\subsubsection{Quartic polynomials} Here we use the notations $u=\sh(k) = s_1$ and $c = \ch(k)= c_1$. Let us start with the $a^\ast a^\ast a a$ terms. Following \cite{Kuz2}, we divided them into self-adjoint operators and
non-self-adjoint operators. For the self-adjoint terms we have
\begin{subequations}\label{2-2terms}
\begin{align}
& \frac{1}{2N}\int dx_1dx_2 dy_1dy_2dy_3 dy_4 \Big\{  \nonumber\\
&\quad\ \bar c(y_1, x_1)c(x_2, y_2)v_N(x_1-x_2) c(y_3, x_1)\bar c( x_2, y_4) a^\ast_1 a^\ast_2 a_3 a_4 \label{2-2terms1}\\
&+ \bar c(y_1, x_1)\bar u(x_2, y_2)v_N(x_1-x_2) c(y_3, x_1) u( x_2, y_4) a^\ast_1 a^\ast_4 a_2 a_3 \label{2-2terms2}\\
&+ \bar u(y_1, x_1)c(x_2, y_2)v_N(x_1-x_2) u(y_3, x_1)\bar c( x_2, y_4) a^\ast_2 a^\ast_3 a_1 a_4 \label{2-2terms3}\\
&+ \bar u(y_1, x_1)\bar u(x_2, y_2)v_N(x_1-x_2) u(y_3, x_1) u( x_2, y_4)a^\ast_3a^\ast_4 a_1 a_2 \label{2-2terms4} \Big\}.
\end{align}
\end{subequations}
The non-self-ajoint $a^\ast a^\ast a a$ terms are given by
\begin{subequations}
\begin{align}
&\ \frac{1}{2N}\int dx_1dx_2 dy_1dy_2dy_3 dy_4 \Big\{  \nonumber\\
&\quad \bar u(y_1, x_1)c(x_2, y_2)v_N(x_1-x_2) c(y_3, x_1)u( x_2, y_4) a^\ast_2 a^\ast_4 a_1 a_3\Big\}
\end{align}
\end{subequations}
and its adjoint. 

For the $a^\ast a a a$ terms, we have
\begin{subequations}\label{1-3terms}
\begin{align} 
& \frac{1}{2N}\int dx_1dx_2 dy_1 dy_2 dy_3 dy_4 \Big\{  \nonumber\\
&\quad\ \bar c(y_1, x_1)\bar u(x_2, y_2)v_N(x_1-x_2)c(y_3, x_1)\bar c(x_2, y_4) a^\ast_1 a_2 a_3a_4\\
&+\bar u(y_1, x_1)\bar u(x_2, y_2)v_N(x_1-x_2) c(y_3, x_1) u(x_2, y_4) a^\ast_4 a_1 a_2  a_3 \\
&+\bar u(y_1, x_1) \bar u(x_2, y_2)v_N(x_1-x_2) c(y_3, x_1) \bar c(x_2, y_4) a^\ast_2 a_1a_3a_4 \\
&+ \bar u(y_1, x_1) \bar u(x_2, y_2)v_N(x_1-x_2) c(y_3, x_1) \bar c(x_2, y_4) a^\ast_3 a_1 a_2 a_4\Big\}
\end{align}
\end{subequations}
and the $a a a a $ term is given by
\begin{align}\label{0-4terms}
& \frac{1}{2N}\int dx_1dx_2 dy_1dy_2dy_3 dy_4 \Big\{  \nonumber\\
&\quad\ \bar u(y_1, x_1)\bar u(x_2, y_2)v_N(x_1-x_2) c(y_3, x_1)\bar c(x_2, y_4)  a_1a_2 a_3 a_4 \Big\}.
\end{align}
Taking the adjoint of \eqref{1-3terms} and \eqref{0-4terms} yield the $a^\ast a^\ast a^\ast a$ and $a^\ast a^\ast a^\ast a^\ast$ terms.

\subsubsection{Cubic polynomials} For the $aaa$ terms, we have
\begin{subequations}\label{0-3terms}
\begin{align}
&\frac{1}{\sqrt{N}}\int dx_1dx_2 dy_1 dy_2 dy_3  \Big\{  \nonumber\\
&\quad\  \bar u(y_1, x_1)v_N(x_1-x_2)\bar\phi(x_2)c(y_2, x_1) \bar c(x_2, y_3)\label{0-3term1}\\
&+ \bar u(y_1, x_1)v_N(x_1-x_2)\phi(x_2)\bar u(y_2, x_2) \bar c(x_1, y_3) \label{0-3term2}
 \Big\}a_1a_2a_3
\end{align}
\end{subequations}

For the $a^\ast a a$ terms, we have
\begin{subequations}\label{1-2terms}
\begin{align}
&\frac{1}{\sqrt{N}}\int dx_1dx_2 dy_1 dy_2 dy_3  \Big\{  \nonumber\\
&\quad\ \bar c(y_1, x_1)v_N(x_1-x_2)\bar\phi(x_2)c(y_2, x_1) \bar c(x_2, y_3) a^\ast_1 a_2a_3\label{1-2term1}\\
&+ c(y_1, x_1)v_N(x_1-x_2)\phi(x_2)\bar u(y_2, x_2) \bar c(x_1, y_3) a^\ast_1 a_2a_3 \label{1-2term2}\\
&+ \bar u(y_1, x_1)v_N(x_1-x_2)\phi(x_2)\bar c(y_2, x_2) \bar c(x_1, y_3) a^\ast_2 a_1a_3\\
&+ \bar u(y_1, x_1)v_N(x_1-x_2)\bar\phi(x_2)u(y_2, x_1)\bar c(x_2, y_3) a^\ast_2 a_1 a_3\\
&+ \bar u(y_1, x_1)v_N(x_1-x_2)\bar\phi(x_2)c(y_2, x_1)u(x_2, y_3) a^\ast_3 a_1 a_2\\
&+ \bar u(y_1, x_1)v_N(x_1-x_2)\phi(x_2)\bar u(y_2, x_2)u(x_1, y_3) a^\ast_3 a_1 a_2\label{1-2term6}
\Big\}.
\end{align}
\end{subequations}
Taking the adjoint of \eqref{1-2terms} and \eqref{0-3terms} give us the $a^\ast a^\ast a$ and  $a^\ast a^\ast a^\ast$ terms.

\subsubsection{Quadratic polynomials} Like the  $a^\ast a^\ast a a$ terms, we divide that $a^\ast a$ terms into two groups, self-adjoint 
and non-self-adjoint. For the self-adjoint operators, we have 
\begin{subequations}\label{sa-1-1terms}
\begin{align}
 &\frac{1}{2N}\int dx_1dx_2 dy_1 dy_2 \Big\{  \nonumber\\
 &+  (u\circ \bar u)(x_1, x_1)v_N(x_1-x_2) c (y_1, x_2) c(x_2, y_2)a^\ast_2 a_1 \label{sa-1-1term1}\\
 &+ (u\circ \bar u)(x_2, x_2)v_N(x_1-x_2) \bar c (y_1, x_1) \bar c(x_1, y_2)a^\ast_1 a_2\\
 &+ (u\circ \bar u)(x_2, x_2)v_N(x_1-x_2) u(y_1, x_1) \bar u(x_1, y_2)a^\ast_2 a_1\\
&+ (u\circ \bar u)(x_1, x_1)v_N(x_1-x_2) u(y_1, x_2) \bar u(x_2, y_2)a^\ast_1 a_2\\
&+  (u\circ \bar u)(x_1, x_2)v_N(x_1-x_2)\bar u(y_1, x_1) u(x_2, y_2) a^\ast_2 a_1\\
&+  (u\circ \bar u)(x_2, x_1)v_N(x_1-x_2)u(y_1, x_1) \bar u(x_2, y_2) a^\ast_1 a_2 \label{sa-1-1term6}
 \Big\}.
\end{align}
\end{subequations}
In the case of non-self-adjoint operators, we have the operators
\begin{subequations}\label{1-1terms}
\begin{align}
 &\frac{1}{2N}\int dx_1dx_2 dy_1 dy_2 \Big\{  \nonumber\\
 &\quad\ (\bar u\circ \bar c)(x_2, x_1)v_N(x_1-x_2)c(y_1, x_1)c(x_2, y_2) a^\ast_2 a_1 \label{1-1term1}\\
  &+ (\bar u\circ \bar c)(x_1, x_2)v_N(x_1-x_2)u(y_1, x_1)\bar c(x_2, y_2) a^\ast_1 a_2 \label{1-1term2}\\
 &+ (\bar u\circ \bar c)(x_1, x_2)v_N(x_1-x_2)c(y_1, x_1)u(x_2, y_2) a^\ast_2 a_1
 \Big\}
\end{align}
\end{subequations}
and the adjoints operators. 

For the $aa$ terms, we have
\begin{subequations}\label{0-2terms}
\begin{align}
 &\frac{1}{2N}\int dx_1dx_2 dy_1 dy_2 \Big\{  \nonumber\\
 &\quad\ (\bar u\circ \bar c)(x_1, x_2)v_N(x_1-x_2)c(y_1, x_1)\bar c(x_2, y_2) \label{0-2term1}\\ 
  &+(u\circ  c)(x_1, x_2)v_N(x_1-x_2)\bar u(y_1, x_1)\bar u(x_2, y_2) \label{0-2term2}\\
      &+(u \circ \bar u)(x_1, x_2) v_N(x_1-x_2) \bar u(y_1, x_1) \bar c(x_2, y_2) \\
      &+(\bar u \circ u)(x_1, x_2) v_N(x_1-x_2)  c(y_1, x_1) \bar u(x_2, y_2)  \\
          &+(u\circ \bar u)(x_1, x_1) v_N(x_1-x_2) \bar u(y_1, x_2) \bar c(x_2, y_2) \\
  &+(u \circ \bar u)(x_2, x_2) v_N(x_1-x_2) \bar u(y_1, x_1) \bar c(x_1, y_2) 
    \Big\}a_1 a_2.
\end{align}
\end{subequations}
Taking the adjoint of \eqref{0-2terms} yields all the $a^\ast a^\ast$ terms.

\subsubsection{Linear polynomials} Lastly, the $a$ terms are given by 
\begin{subequations}\label{0-1terms}
\begin{align}
&\frac{1}{\sqrt{N}}\int dx_1dx_2 dy_1\Big\{ \nonumber \\
 &\quad\ (c\circ \bar u)(x_1, x_2)v_N(x_1-x_2)\phi(x_2)c(y_1, x_1) \label{0-1term1}\\
 &+ (u\circ c)(x_1, x_2)v_N(x_1-x_2)\bar \phi(x_2)\bar u(y_1, x_1) \label{0-1term2} \\
 &+ (u\circ \bar u)(x_1, x_1) v_N(x_1-x_2)\bar\phi(x_2) c(y_1, x_2) \label{0-1term3}\\
 &+ (u\circ \bar u)(x_2, x_1) v_N(x_1-x_2)\bar\phi(x_2) c(y_1, x_1) \label{0-1term4}\\
 &+ (u\circ \bar u)(x_1, x_1) v_N(x_1-x_2)\phi(x_2) \bar u(y_1, x_2) \label{0-1term5} \\
 &+ (u\circ \bar u)(x_1, x_2) v_N(x_1-x_2)\phi(x_2) \bar u(y_1, x_1) \Big\}a_1.\label{0-1term6}
\end{align}
\end{subequations}
 Taking the adjoint of \eqref{0-1terms} gives all the $a^\ast$ terms. 

\subsection{Explicit form of $\mathcal{H}_\text{red}\Omega$}  By direct calculation, we see that
\begin{subequations}
\begin{align}
X_1(y_1)=&\frac{1}{\sqrt{N}}\int dx_1dx_2 \Big\{ \nonumber \\
 &\quad\ (\bar c\circ  u)(x_1, x_2)v_N(x_1-x_2)\bar\phi(x_2)\bar c(y_1, x_1) \\
 &+ (\bar u\circ \bar c)(x_1, x_2)v_N(x_1-x_2) \phi(x_2)u(y_1, x_1)  \\
 &+ (u\circ \bar u)(x_1, x_1) v_N(x_1-x_2)\phi(x_2) \bar c(y_1, x_2) \\
 &+ (u\circ \bar u)(x_1, x_2) v_N(x_1-x_2)\phi(x_2) \bar c(y_1, x_1) \\
 &+ (u\circ \bar u)(x_1, x_1) v_N(x_1-x_2)\bar \phi(x_2) u(y_1, x_2) \\
 &+ (\bar u\circ u)(x_1, x_2) v_N(x_1-x_2)\bar\phi(x_2)  u(y_1, x_1) \Big\}.
\end{align}
\end{subequations}
For $X_2, X_3, X_4$, up to normalization and symmetrization, we have that
\begin{subequations}
\begin{align}
 X_2(y_1, y_2)=&\frac{1}{2N}\int dx_1dx_2 \Big\{  \nonumber\\
 &\quad\ (u\circ c)(x_1, x_2)v_N(x_1-x_2)\bar c(y_1, x_1)c(x_2, y_2) \\ 
  &+(\bar u\circ  \bar c)(x_1, x_2)v_N(x_1-x_2)u(y_1, x_1)u(x_2, y_2) \\
      &+(u \circ \bar u)(x_1, x_2) v_N(x_1-x_2) \bar u(y_1, x_1) \bar c(x_2, y_2) \\
      &+(u \circ \bar u)(x_1, x_2) v_N(x_1-x_2)  \bar c(y_1, x_1) u(x_2, y_2)  \\
          &+(u\circ \bar u)(x_1, x_1) v_N(x_1-x_2) u(y_1, x_2)  c(x_2, y_2) \\
  &+(u \circ \bar u)(x_2, x_2) v_N(x_1-x_2) \bar u(y_1, x_1) \bar c(x_1, y_2) 
    \Big\}.
\end{align}
\end{subequations}
\begin{subequations}
\begin{align}
X_3(y_1, y_2, y_3)=&\frac{1}{\sqrt{N}}\int dx_1dx_2  \Big\{  \nonumber\\
&\quad\   u(y_1, x_1)v_N(x_1-x_2)\phi(x_2)\bar c(y_2, x_1)  c(x_2, y_3)\\
&+ u(y_1, x_1)v_N(x_1-x_2)\phi(x_2) u(y_2, x_2) c(x_1, y_3) 
 \Big\}
\end{align}
\end{subequations}
\begin{align}
X_4(y_1, y_2, y_3, y_4)=& \frac{1}{2N}\int dx_1dx_2 \Big\{  \\
&\quad\ u(y_1, x_1)u(x_2, y_2)v_N(x_1-x_2) \bar c(y_3, x_1)c(x_2, y_4)  \Big\}.\nonumber
\end{align}

\subsection{Estimates for $\mathcal{P}=\mathcal{H}_\text{red}-\mathcal{H}-X_0$} Let us split $\mathcal{P}$ as follows
\begin{align}
\mathcal{P}=\mathcal{P}_1+\mathcal{P}_2+\mathcal{P}_3 +\mathcal{P}_4
\end{align}
where each $\mathcal{P}_i$ corresponds to the homogeneous polynomial of degree $i$. 

\begin{lemma}\label{p1-lem}
Fix $k \in \nn$ (for our purposes, $k = 9$). Let $X$ be a Fock space vector that has nonzero entries only in the first $k$ sectors. Then $\mathcal{P}_1$ is a bounded operator and
\begin{align}
\llp{\mathcal{P}_1 X(t)}_\mathcal{F}\le \frac{CN^{-\frac{1}{2}}}{1+t^\frac{3}{2}} \sup_z \llp{u(\cdot, \cdot-z)}_{L^2(dx)} \llp{X}_{\mathcal{F}}.
\end{align}
In particular, it follows we have the estimate
\begin{align}
\int^t_0 d\tau\ \llp{\mathcal{P}_1 X(\tau)}_\mathcal{F} \lesssim_k N^{-\frac{1}{2}}C_0(t, N)\llp{X}_{L^\infty_\tau([0, t])\mathcal{F}}.
\end{align}
\end{lemma}

\begin{proof}
The proof of the lemma is similar to the proof of Lemma 8 in \cite{Kuz2} with the improvement coming from our Corollary \ref{main-cor}. However, there is still an
essential difference between the two proofs. Our proof avoids the usage of trace theorem when estimating $\llp{s_1(x, x+z)}_{L^2(dx)}$.

Note all the terms of $\mathcal{P}_1$ are given by \eqref{0-1terms}. By duality, it suffices to consider only the $a$ terms. Expanding \eqref{0-1terms} by $c = \delta+p$, we see that the worst terms usually come from terms with the most $\delta$s. In the case  of $\mathcal{P}_1$, we see that
the worst term comes from \eqref{0-1term1}. In fact, it is not hard to see that if we can handle \eqref{0-1term1}, then all other terms \eqref{0-1term2}-\eqref{0-1term6} can be handle in the exact same manner.

Expanding \eqref{0-1term1} we get that
\begin{subequations}
\begin{align}
\eqref{0-1term1} =  &\frac{1}{\sqrt{N}}\int dx_1dx_2 dy_1\Big\{ \nonumber \\
 &\quad\ u(x_1, x_2)v_N(x_1-x_2)\phi(x_2)\delta(y_1- x_1) \label{p1-term1}\\
   &+ u(x_1, x_2)v_N(x_1-x_2)\phi(x_2)p(y_1, x_1) \label{p1-term2}\\
&+ (\bar p\circ u)(x_1, x_2)v_N(x_1-x_2)\phi(x_2)\delta(y_1-x_1) \label{p1-term3}\\
 &+ (\bar p\circ u)(x_1, x_2)v_N(x_1-x_2)\phi(x_2)p(y_1, x_1) \label{p1-term4}
 \Big\}a_{y_1}.
\end{align}
\end{subequations}
Let $X= (0, \ldots, F(y_1, \ldots, y_n), 0, \ldots)$, then we see that
the action of \eqref{p1-term1} on $X$ yields the function, up to normalization, 
\begin{align}\label{p1-term1-X}
\frac{1}{\sqrt{N}}\int dx_1dx_2\ u(x_1, x_2) v_N(x_1-x_2)\phi(x_2) F(x_1, y_1, \ldots, y_{n-1})
\end{align}
in the $n-1$ sector. Hence we have the estimate
\begin{align*}
&\llp{\eqref{p1-term1-X}}_{L^2(dy_1, \ldots,d y_{n-1})} \\
&\le \frac{C}{\sqrt{N}}\int dx_1 dx_2\ |u(x_1, x_2)v_N(x_1-x_2)\phi(x_2)|\llp{F(x_1, \cdots)}_{L^2(dy_1\cdots dy_{n-1})}\\
&\le \frac{C}{\sqrt{N}} \llp{\phi}_{L^\infty_x} \int dx_2 |v_N(x_2)|\int dx_1\ |u(x_1, x_1-x_2)|\llp{F(x_1, \cdots)}_{L^2(dy_1\cdots dy_{n-1})}
\end{align*}
Finally, it follows that
\begin{align*}
&\int^t_0 d\tau\ \llp{\eqref{p1-term1-X}}_{L^2(dy_1, \ldots,d y_{n-1})}\\
&\le \frac{C}{\sqrt{N}}\left( \int^t_0 \frac{d\tau}{1+\tau^3}\right)^\frac{1}{2} \int dx_2\  |v_N(x_2)|\\
&\quad \times \Lp{\llp{u(x_1, x_1-x_2)}_{L^2(dx_1)}\llp{F(t)}_{L^2(dx_1dy_1\cdots dy_{n-1})}}{L^2_t}\\
&\le \frac{C}{\sqrt{N}} \sup_z \llp{u(x, x-z)}_{L^2(dtdx)}\sup_t \llp{F(t)}_{L^2(dy_1\cdots dy_n)}.
\end{align*}

For \eqref{p1-term2}, we see that the only difference in its action on $X$ is the additional composition, i.e.
\begin{align}\label{p1-term2-X}
 \frac{1}{\sqrt{N}}\int dx_1dx_2\ u(x_1, x_2) v_N(x_1-x_2)\phi(x_2) (p\circ F)(x_1, y_1, \ldots, y_{n-1}).
\end{align}
Then it follows
\begin{align*}
&\int^t_0 d\tau\ \llp{\eqref{p1-term2-X}}_{L^2(dy_1, \ldots,d y_{n-1})}\\
&\le \frac{C}{\sqrt{N}}\sup_z\Lp{\llp{u(x, x-z)}_{L^2(dx)}\llp{(p\circ F)(t)}_{L^2(dx_1dy_1\cdots dy_{n-1})}}{L^2_t}.
\end{align*}
Since we know that $p$ is a bounded operator from \eqref{p1-est}, then we have the desired result. 

Lastly, the estimates for \eqref{p1-term3} and \eqref{p1-term4} follows the same line of argument and the fact that $\llp{(u\circ p)(x, x-z)}_{L^2(dx)} \le \llp{u}_{L^2}\llp{p}_{L^2}$.
\end{proof}

Unlike $\mathcal{P}_1$, $\mathcal{P}_3$ is not a uniform in $N$ bounded operator from $\mathcal{F}$ to $\mathcal{F}$. This can be checked by applying $\mathcal{P}_3$ to $\Omega$. However, if we split 
$\mathcal{P}_3 = \mathcal{P}_3^r+\mathcal{P}_3^s$ where $\mathcal{P}_3^s:= \eqref{0-3term1} + \eqref{1-2term1}+\text{ adjoints}$ then we can show that $\mathcal{P}_s^r$ is a uniform in $N$ bounded operator from Fock space to Fock space. 

\begin{lemma} \label{p3r-lem}
Fix $k \in \nn$. Let $X$ be a Fock space vector that has nonzero entries only in the first $k$ sectors. Then $\mathcal{P}_3^r$ is a bounded operator and
\begin{align}
\llp{\mathcal{P}_3^r X(t)}_\mathcal{F}\le \frac{CN^{-\frac{1}{2}}}{1+t^\frac{3}{2}} \llp{X}_{\mathcal{F}}.
\end{align}
In particular, it follows we have the estimate
\begin{align}
\int^t_0 d\tau\ \llp{\mathcal{P}_3^r X(\tau)}_\mathcal{F} \lesssim_k N^{-\frac{1}{2}}\llp{X}_{L^\infty_\tau([0, t])\mathcal{F}}.
\end{align}
\end{lemma}

\begin{proof}
For the cubic terms, it suffices to consider \eqref{0-3term2} and \eqref{1-2term2}.  

Expanding \eqref{0-3term2} gives 
\begin{subequations}
\begin{align}
\eqref{0-3term2}=&\frac{1}{\sqrt{N}}\int dx_1dx_2 dy_1 dy_2 dy_3  \Big\{  \nonumber\\
&\quad\  \bar u(y_1, x_1)v_N(x_1-x_2)\bar\phi(x_2)\bar u(y_2, x_2) \delta(x_1-y_3) \label{p3r-term1}\\
&\quad\  \bar u(y_1, x_1)v_N(x_1-x_2)\bar\phi(x_2)\bar u(y_2, x_2) \bar p(x_1, y_3) \label{p3r-term2}\Big\}a_{y_1}a_{y_2} a_{y_3}.
\end{align}
\end{subequations}
Letting \eqref{p3r-term2} act on $F(y_1, \ldots, y_n)$, up to symmetry, yields
\begin{align}\label{p3-term2-X}
\frac{1}{\sqrt{N}} \int dz_1 dz_2 dx_1 dx_2\ \bar u(z_1, x_1) v_N(x_1-x_2)\phi(x_2) \bar u(z_2, x_2) F(x_1, z_1, z_2, \cdots).
\end{align}
By direct calculation, we have that
\begin{align*}
&\llp{\eqref{p3-term2-X}}_{L^2(dy_1\cdots dy_{n-3})}\\
&\le \frac{C}{\sqrt{N}}\llp{\phi}_{L^\infty_x} \int dz_1dz_2dx_1 dx_2\ |v_N(x_1)u(x_1+x_2, z_1)u(x_2, z_2)|\\
&\quad \times \llp{F(x_1+x_2, z_1, z_2, \cdots)}_{L^2(dy_1\cdots dy_{n-3})}\\
&\le  \frac{C}{\sqrt{N}}\llp{\phi}_{L^\infty_x} \int dx_1 dx_2\ |v_N(x_1)\llp{u(x_1+x_2, \cdot)}_{L^2_y}\llp{u(x_2,\cdot)}_{L^2_y}\\
&\quad \times \llp{F(x_1+x_2, \cdots)}_{L^2(dy_1\cdots dy_{n-1})}\\
&\le  \frac{C}{\sqrt{N}}\llp{\phi}_{L^\infty_x}\llp{v}_{L^1_x}\llp{u}_{L^4_xL^2_y}^2\llp{F}_{L^2(dy_1\cdots dy_n)}
\end{align*}
which yields the desired result. \eqref{p3r-term2} follows immediately from the above argument and the fact that $p$ is a uniform in $N$ bounded operator. 

Next, expanding  \eqref{0-3term2} gives 
\begin{subequations}
\begin{align}
&\frac{1}{\sqrt{N}}\int dx_1dx_2 dy_1 dy_2 dy_3  \Big\{  \nonumber\\
&\quad\ \delta(y_1-x_1)v_N(x_1-x_2)\phi(x_2)\bar u(y_2, x_2) \delta(x_1- y_3)\label{p3r-term3} \\
&+ \delta(y_1- x_1)v_N(x_1-x_2)\phi(x_2)\bar u(y_2, x_2) \bar p(x_1, y_3) \label{p3r-term4}\\
&+ p(y_1, x_1)v_N(x_1-x_2)\phi(x_2)\bar u(y_2, x_2) \delta(x_1- y_3) \label{p3r-term5}\\
&+ p(y_1, x_1)v_N(x_1-x_2)\phi(x_2)\bar u(y_2, x_2) \bar p(x_1, y_3) \Big\} a^\ast_1 a_2a_3\label{p3r-term6}
\end{align}
\end{subequations}
For \eqref{p3r-term3}, we have to handle the term
\begin{align}\label{p3r-term3-X}
\frac{1}{\sqrt{N}}\int dx dz\ v_N(y_1-x)\phi(x)\bar u(x, z)F(z, y_1, y_2, \cdots ).
\end{align}
By direct calculation, we see that
\begin{align*}
&\llp{\eqref{p3r-term3-X}}_{L^2(dy_1\cdots dy_n)} \\
&\le \frac{1}{\sqrt{N}}\Lp{ \int dxdz\ |v_N(y_1-x)\phi(x)\bar u(x, z)| \llp{F(z, y_1, \cdots)}_{L^2(dy_2\cdots dy_n)}}{L^2(dy_1)}\\
&\le \frac{\llp{\phi}_{L^\infty_x}}{\sqrt{N}}\Lp{ \int dxdz\ |v_N(x)\bar u(y_1-x, z)| \llp{F(z, y_1, \cdots)}_{L^2(dy_2\cdots dy_n)}}{L^2(dy_1)}\\
&\le \frac{\llp{\phi}_{L^\infty_x}}{\sqrt{N}}\Lp{ \int dx\ |v_N(x)| \llp{\bar u(y_1-x, \cdot)}_{L^2_y} \llp{F(\cdot, y_1, \cdots)}_{L^2(dzdy_2\cdots dy_n)}}{L^2(dy_1)}\\
&\le \frac{C}{\sqrt{N}}\llp{\phi}_{L^\infty_x}\llp{u}_{L^\infty_xL^2_y}\llp{v}_{L^1_x}\llp{F}_{L^2(dy_1\cdots dy_n)}.
\end{align*}
Estimates for \eqref{p3r-term4}-\eqref{p3r-term6} follow from the above argument and the boundedness of $p$. 
\end{proof}

\begin{lemma}\label{p3s-lem}
Fix $k \in \nn$. Let $X$ be a Fock space vector that has nonzero entries only in the first $k$ sectors. Then we have the estimates
\begin{align}
\llp{\mathcal{P}_3^s X(t)}_{\mathcal{F}}\le&\ \frac{CN^{\frac{3\beta-1}{2}}}{1+t^\frac{3}{2}} \llp{X}_{\mathcal{F}}, \\
\llp{\mathcal{P}_3^s X}_{S'}\le&\ CN^{\frac{\beta-1}{2}}\llp{X}_{S}.
\end{align}
\end{lemma}

\begin{proof}[Sketch of Proof]
The proof is essentially the same as the proof of Proposition 10.1 in \cite{GM3}. The only difference between the proofs is the fact that we put
 $\phi \in L^\infty_x$, instead of $L^2$, then apply the $L^\infty$  decay estimate.
\end{proof}

For the quartic terms $\mathcal{P}_4$, we have the following lemma.

\begin{lemma}\label{p4-lem}
Fix $k \in \nn$. Let $X$ be a Fock space vector that has nonzero entries only in the first $k$ sectors. Then $\mathcal{P}_4$ is a bounded operator and
\begin{align}
\llp{\mathcal{P}_4 X}_\mathcal{F}\le CN^{\frac{3\beta-2}{2}} \llp{X}_{\mathcal{F}}.
\end{align}
In particular, it follows we have the estimate
\begin{align}
\int^t_0 d\tau\ \llp{\mathcal{P}_4X(\tau)}_\mathcal{F} \lesssim_k N^{\frac{3\beta-2}{2}}t\llp{X}_{L^\infty_\tau([0, t])\mathcal{F}}.
\end{align}
\end{lemma}
\begin{proof}[Sketch Proof]
This is Proposition 10.4 in \cite{GM3}. 
\end{proof}

Recall the quadratic terms are given by 
\begin{equation}
\begin{aligned}
\mathcal{P}_2 =&\ \int dxdy\ (-\lapl_x\delta(x-y)+L(t, x, y))a_x^\ast a_y + \eqref{sa-1-1terms}+\eqref{1-1terms} + \eqref{0-2terms}.
\end{aligned}
\end{equation}
We will split $\mathcal{P}_2$ into two groups: The first group comprise of $\mathcal{L}:=\int dxdy\ (-\lapl_x\delta(x-y)+L(t, x, y))a_x^\ast a_y$. Then we split 
\eqref{sa-1-1terms}--\eqref{0-2terms} into a ``regular" part and a ``singular" part, denoted by $\mathcal{P}_2^r$ and $\mathcal{P}_2^s$, where
\begin{align}
\mathcal{P}_2^s = \eqref{1-1term1}+\eqref{0-2term1}.
\end{align}

\begin{lemma}\label{p2r-lem}
Fix $k \in \nn$. Let $X$ be a Fock space vector that has nonzero entries only in the first $k$ sectors. Then $\mathcal{P}_2^r$ is a bounded operator 
and
\begin{align}
\llp{\mathcal{P}_2^r X(t)}_\mathcal{F}\le CN^{-1} \sup_z \llp{s_2(\cdot, \cdot-z)}_{L^2(dx)} \llp{X}_{\mathcal{F}}.
\end{align}
Moreover, we have that
\begin{align}
\int^t_0 d\tau\ \llp{\mathcal{P}_2^r X(\tau)}_\mathcal{F}\le N^{-1} C_0(t, N)\llp{X}_{L^\infty_\tau([0, t])\mathcal{F}}.
\end{align}
\end{lemma}

\begin{proof}
For the \eqref{sa-1-1terms} group, it suffices to just consider \eqref{sa-1-1term1} and \eqref{sa-1-1term6}. 

For \eqref{sa-1-1term1}, it suffices to just consider the
\begin{align}\label{sa-1-1term1-X}
&\frac{1}{2N}\int dx_1dx_2 dy_1 dy_2 \Big\{  \nonumber\\
 &+  (u\circ \bar u)(x_1, x_1)v_N(x_1-x_2) \delta(y_1-x_2) \delta(x_2-y_2)a^\ast_2 a_1 \Big\}.
\end{align}
Let \eqref{sa-1-1term1-X} act on $F(y_1, \ldots, y_n)$ yields
\begin{align}\label{sa-1-1term1-F} 
&\frac{1}{2N}\int dx_1\  (u\circ \bar u)(x_1, x_1)v_N(x_1-y_1)F(y_1, y_2, \ldots, y_n).
\end{align}
Then it follows
\begin{align*}
&\llp{\eqref{sa-1-1term1-F}}_{L^2(dy_1\cdots dy_n)} \\
&\le \frac{C}{N}\Lp{\int dx\  (u\circ \bar u)(x, x)v_N(x-y_1)}{L^2(dy_1)}\llp{F}_{L^2(dy_1\cdots dy_n)}\\
&\le \frac{C}{N}\int dx\ |v_N(x)| \llp{(u\circ u)(x+y_1, x+y_1)}_{L^2(dy_1)}\llp{F}_{L^2(dy_1\cdots dy_n)}\\
&\le \frac{C}{N}\llp{v}_{L^1_x} \llp{u}_{L^4_xL^2}^2\llp{F}_{L^2(dy_1\cdots dy_n)}.
\end{align*}

For \eqref{sa-1-1term6}, we see its action on $F$ yields
\begin{align}\label{sa-1-1term6-X}
 &\frac{1}{2N}\int dx_1dx_2 dz \Big\{  \nonumber\\
&(u\circ \bar u)(x_2, x_1)v_N(x_1-x_2)u(y_1, x_1) \bar u(x_2, z)F(z, y_2,\ldots, y_n)\Big\}.
\end{align}
Then we have that
\begin{align*}
&\llp{\eqref{sa-1-1term6-X}}_{L^2(dy_1\cdots dy_n)}\\
&\le \frac{C}{N}\int dx_1dx_2 dz \Big\{  \\
&\quad\ |(u\circ \bar u)(x_2, x_1)v_N(x_1-x_2)\bar u(x_2, z)|\llp{u(\cdot, x_1)}_{L^2_x} \llp{F(z, \cdots)}_{L^2(dy_2\cdots dy_n)}\Big\}\\
&\le \frac{C}{N}\llp{u}_{L^\infty_xL^2_y}\int dx_1dx_2 dz \Big\{  \\
&\quad\ |v_N(x_1)(u\circ \bar u)(x_2, x_1+x_2)\bar u(x_2, z)| \llp{F(z, \cdots)}_{L^2(dy_2\cdots dy_n)}\Big\}\\
&\le \frac{C}{N}\llp{u}_{L^\infty_xL^2_y}^2\int dx_1dx_2\ |v_N(x_1)(u\circ \bar u)(x_2, x_1+x_2)| \llp{F}_{L^2(dy_1dy_2\cdots dy_n)}\\
&\le \frac{C}{N}\llp{v}_{L^1_x}\llp{u}_{L^\infty_xL^2_y}^2\llp{u}^2_{L^2_{x, y}} \llp{F}_{L^2(dy_1dy_2\cdots dy_n)}.
\end{align*}

Now, for the \eqref{1-1terms} and \eqref{0-2terms} group, it suffices to consider \eqref{0-2term2} since the other terms are similar to terms in group \eqref{sa-1-1terms}.

Observe the action of \eqref{0-2term2} yields
\begin{align}\label{0-2term2-X}
 &\frac{1}{4N}\int dx_1dx_2 dz_1 dz_2 \Big\{  \nonumber\\
  &s_2(x_1, x_2)v_N(x_1-x_2)\bar u(z_1, x_1)\bar u(x_2, z_2)F(z_1, z_2, y_1, \ldots, y_{n-2})\Big\}.
\end{align}
Then we have that
\begin{align*}
&\llp{\eqref{0-2term2-X}}_{L^2(dy_1\cdots dy_{n-2})}\\
&\le \frac{C}{N}\int dx_1dx_2 dz_1 dz_2 \Big\{  \nonumber\\
  &\quad\ |s_2(x_1, x_2)v_N(x_1-x_2)\bar u(z_1, x_1)\bar u(x_2, z_2)|\llp{F(z_1, z_2, \cdots)}_{L^2(y_1, \ldots, y_{n-2})}\Big\}\\
  &\le \frac{C}{N}\llp{u}_{L^\infty_xL^2_y}\int dx_1dx_2\ |s_2(x_1, x_2)v_N(x_1-x_2)|\llp{u(x_2, \cdot)}_{L^2_y} \llp{F}_{L^2(y_1, \ldots, y_{n})}\\
   &\le \frac{C}{N}\llp{u}_{L^2_{x, y}} \llp{u}_{L^\infty_xL^2_y}\int dz\ |v_N(z)|\int dx\ |s_2(z+x, x)|^2\llp{F}_{L^2(y_1, \ldots, y_{n})}.
\end{align*}
Finally, integrate with respect to time and apply \eqref{s2-collapsing-est} yields the desired result. 
\end{proof}

\begin{lemma}\label{p2s-lem}
Fix $k \in \nn$. Let $X$ be a Fock space vector that has nonzero entries only in the first $k$ sectors. Then we have the following estimates
\begin{subequations}
\begin{align}
\int^t_0 d\tau\ \llp{\mathcal{P}_2^s X(\tau)}_\mathcal{F}\le&\ CN^{\frac{3\beta-2}{2}} \sqrt{t}C_0(t, N) \llp{X}_{L^\infty_\tau ([0, t])\mathcal{F}}\\
\llp{\mathcal{P}_2^sX}_{S'} \le&\ CN^\frac{\beta-2}{2}C_0(t, N)\llp{X}_S.
\end{align}
\end{subequations}
\end{lemma}

\begin{proof}
It suffices to consider \eqref{0-2term1}. Again, it also suffices to just consider the $\delta$ terms. 
The action of \eqref{0-2term1} on $F$ yields
\begin{align}\label{0-2term1-X}
\frac{1}{4N}\int dx_1dx_2\ s_2(x_1, x_2)v_N(x_1-x_2)F(x_1, x_2, y_1, \ldots, y_{n-2}).
\end{align}
Then we have that
\begin{align*}
&\int^t_0 d\tau\ \llp{\eqref{0-2term1-X}}_{L^2(dy_1\cdots dy_{n-2})}\\
& \le \frac{C}{N} \int^t_0 d\tau\  \llp{s_2(x_1, x_2)v_N(x_1-x_2)}_{L^2(dx_1dx_2)}\llp{F}_{L^2(dy_1\cdots dy_{n})}\\
&\le \frac{C}{N}\llp{v_N}_{L^2_x} \sqrt{t} \llp{s_2(x, x+z)}_{L^2_\tau([0, t])L^2_x}\sup_\tau \llp{F(\tau)}_{L^2(dy_1\cdots dy_{n})}.
\end{align*}

For the Strichartz estimates, let us begin by writing $F(x_1, x_2, \cdots) = G(x_1-x_2, x_1+x_2, \cdots)$, then we see that
\begin{align*}
&\int^t_0 d\tau\ \llp{\eqref{0-2term1-X}}_{L^2(dy_1\cdots dy_{n-2})}\\
&\le \frac{C}{N}\int^t_0 d\tau \int dx_1 dx_2\  |s_2(x_1, x_2) v_N(x_1-x_2)| \llp{G(x_1-x_2, x_1+x_2, \cdot)}_{L^2}\\
&\le \frac{C}{N}\int^t_0 d\tau \int dx_2 dx_1\  |s_2(x_1, x_1-x_2) v_N(x_2)| \llp{G(x_2, 2x_1-x_2, \cdot)}_{L^2}\\
&\le \frac{C}{N}\int^t_0 d\tau \int dx_2\ |v_N(x_2)| \llp{s_2(x, x-x_2)}_{L^2_x}\llp{G(x_2,  \cdot)}_{L^2L^2}\\
&\le \frac{C}{N}\int^t_0 d\tau\ \llp{v_N}_{L^\frac{6}{5}_x} \llp{s_2(x, x-x_2)}_{L^2_x}\llp{G}_{L^6_{y_1}L^2_{y_2}L^2}\\
&\le \frac{C}{N} \llp{v_N}_{L^\frac{6}{5}_x} \sup_z\llp{s_2(x, x-z)}_{L^2_tL^2_x}\llp{G}_{L^2_tL^6_{y_1}L^2_{y_2}L^2}\\
& = \frac{C}{N} \llp{v_N}_{L^\frac{6}{5}_x} \sup_z\llp{s_2(x, x-z)}_{L^2_tL^2_x}\llp{F}_{L^2_tL^6(d(x_1-x_2))L^2(d(x_1+x_2))L^2}
\end{align*}
This completes the proof.
\end{proof}

\begin{lemma}\label{L-lem}
Fix $k \in \nn$. Let $X$ be a Fock space vector that has nonzero entries only in the first $k$ sectors. Then $\mathcal{L}$ is a uniform in $N$ bounded operator
and we have the estimate
\begin{align}
\llp{\mathcal{L}X}_\mathcal{F} \leq \frac{C}{1+t^3}\llp{X}_\mathcal{F}.
\end{align}
\end{lemma}
\begin{proof}
See Lemma 11 in \cite{Kuz2} for the proof of the lemma.
\end{proof}

\theendnotes
\providecommand{\bysame}{\leavevmode\hbox to3em{\hrulefill}\thinspace}
\providecommand{\MR}{\relax\ifhmode\unskip\space\fi MR }
\providecommand{\MRhref}[2]{%
  \href{http://www.ams.org/mathscinet-getitem?mr=#1}{#2}
}
\providecommand{\href}[2]{#2}

\end{document}